\documentclass[11pt]{article}
\usepackage[margin=1in]{geometry}
\usepackage{amsmath,amssymb,amsthm,mathtools,bm}
\usepackage{booktabs}
\usepackage{enumitem}
\usepackage{natbib}
\usepackage{hyperref}
\hypersetup{colorlinks=true,linkcolor=blue,citecolor=blue,urlcolor=blue}
\usepackage{algorithm}
\usepackage{algpseudocode}
\usepackage{graphicx} % For tables
\usepackage{authblk}   
\usepackage{multirow}

\title{\textbf{Adaptive Penalized Doubly Robust Regression for Longitudinal Data}}
\author[1]{Yuyao Wang}
\author[1]{Yu Lu}
\author[1]{Tianni Zhang}
\author[1]{Mengfei Ran\footnote{Corresponding author. \url{mengfei.ran@xjtlu.edu.cn}}}
\affil[1]{Wisdom Lake Academy of Pharmacy, Xi'an Jiaotong-Liverpool University}
\date{}

\theoremstyle{plain}
\newtheorem{theorem}{Theorem} % Number theorems within sections
\newtheorem{proposition}{Proposition}
\newtheorem{lemma}{Lemma}
\theoremstyle{definition}
\newtheorem{assumption}{Assumption}

\theoremstyle{remark}
\newtheorem{remark}{Remark}

\newcommand{\R}{\mathbb{R}}
\newcommand{\E}{\mathbb{E}}
\newcommand{\Var}{\mathrm{Var}}
\newcommand{\Prob}{\mathbb{P}}
\newcommand{\Cov}{\text{Cov}}
\newcommand{\ind}{\mathbb{I}}

\newcommand{\abs}[1]{\left\lvert #1\right\rvert}

\DeclareMathOperator{\Tr}{Tr}

\begin{document}
\maketitle

\begin{abstract}
Longitudinal data often involve heterogeneity, sparse signals, and contamination from response outliers or high-leverage observations especially in biomedical science. Existing methods usually address only part of this problem, either emphasizing penalized mixed-effects modeling without robustness or robust mixed-effects estimation without high-dimensional variable selection. We propose a doubly adaptive robust regression (DAR-R) framework for longitudinal linear mixed-effects models. It combines a robust pilot fit, doubly adaptive observation weights for residual outliers and leverage points, and folded-concave penalization for fixed-effect selection, together with weighted updates of random effects and variance components. We develop an iterative reweighting algorithm and establish estimation and prediction error bounds, support recovery consistency, and oracle-type asymptotic normality. Simulations show that DAR-R improves estimation accuracy, false-positive control, and covariance estimation under both vertical outliers and bad leverage contamination. In the TADPOLE/ADNI Alzheimer’s disease application, DAR-R achieves accurate and stable prediction of ADAS13 while selecting clinically meaningful predictors with strong resampling stability.
\end{abstract}

\section{Introduction}

Longitudinal and clustered data arise routinely in biomedical research, economics, and the social sciences, where subjects are measured repeatedly over time and the resulting within-subject correlation must be accounted for in valid inference and efficient prediction.
Linear mixed-effects models (LMMs) provide a flexible and widely used framework by combining population-level fixed effects with subject-specific random effects; the foundational formulation for longitudinal data was developed by \cite{laird1982}. Comprehensive treatments and mainstream implementations are now standard in longitudinal analysis \citep{pinheiro2000, verbeke2000, diggle2002}, with widely used software such as \texttt{lme4} described by \cite{bates2015}. At the same time, modern longitudinal studies often collect a large number of candidate predictors, making variable selection for mixed models an essential component of scientific discovery.

A growing literature addresses sparse estimation and variable selection for mixed-effects models through penalized likelihood and related regularization strategies. Early work considers joint selection of fixed and random effects \citep{bondell2010, ibrahim2011}. In the high-dimensional regime, $\ell_1$-regularized estimators for LMMs have been analyzed with provable guarantees \citep{schelldorfer2011}. Despite these advances, most existing procedures are primarily developed under clean-data assumptions and may lose stability or selection accuracy when response outliers or covariate leverage points are present.

In practice, longitudinal data are frequently corrupted by outliers in responses (e.g., sporadic assay failures, recording errors) and by leverage points in covariates (e.g., data entry mistakes, device miscalibration, atypical covariate configurations). Robust statistics provides classical tools for downweighting aberrant observations \citep{huber1964, hampel1986, huber2009}, and robust mixed-model procedures have been explored through heavy-tailed modeling and robust likelihood/estimating-equation approaches. For instance, robust REML-type ideas for mixed models were investigated by \cite{richardsonwelsh1995}, and $t$-based robustification for mixed-effects models was studied by \cite{pinheiroliu2001}. Nevertheless, most existing robust mixed-model methods focus on low-dimensional settings, and do not directly address high-dimensional variable selection with simultaneous protection against response outliers and covariate leverage points \citep{koller2016}. Relatedly, \cite{ranxu2024} proposed robust variable selection via nonconcave penalties with an upgraded parsimonious dynamic covariance modeling, highlighting the value of combining robustness, dependence modeling, and sparsity in correlated-data settings. Here we use the term doubly robust in the outlier-robustness sense, emphasizing protection against both response contamination and covariate leverage (not the causal-inference meaning).

This paper develops an adaptive penalized doubly robust regression framework for  longitudinal analysis under high-dimensionality, termed longitudinal DAR-R. The method integrates three ingredients.  First, robust pilot fits deliver reliable residuals and scale estimates that serve as anchors for calibration. Second, DAR-R assigns observation-specific weights that reflect two complementary notions of outlyingness: one component downweights large standardized residuals, while the other targets covariate leverage through robust distance measures \citep{rousseeuw1987, rousseeuw1999mcd}. These two components are linked by a global discrepancy factor that adapts the overall aggressiveness of downweighting to the contamination level in the data, so the procedure remains efficient for clean samples yet protective under severe contamination. Third, after weighting, fixed effects are estimated using folded-concave regularization: \cite{fanli2001} established the oracle-property viewpoint for nonconcave penalties, and \cite{zhang2010} proposed the MCP family. Random effects and variance components are then updated through a weighted empirical Bayes/REML step, leading to a stable iterative reweighting algorithm.  The theoretical analysis builds on the general M-estimation framework of \cite{negahban2012} together with modern results for nonconvex regularization \citep{loh2015, wainwright2019}.

Our main contributions are as follows. We introduce longitudinal DAR-R, a robust-and-sparse framework for high-dimensional LMMs that simultaneously protects against response outliers and covariate leverage through doubly adaptive weighting with a data-adaptive discrepancy factor. We also provide a practical and scalable implementation via an iterative reweighting algorithm that couples weighted empirical Bayes/REML updates for random effects and variance components with coordinate-descent style updates for folded-concave penalization \citep{breheny2011}. On the theoretical side, we establish nonasymptotic bounds for estimation and prediction, and further show consistent support recovery and oracle asymptotic normality under suitable regularity conditions. Finally, extensive simulations and an application study illustrate that DAR-R achieves improved robustness and variable-selection accuracy relative to existing penalized mixed-model competitors, particularly under mixed contamination scenarios.

The rest of the paper is organized as follows. Section~\ref{sec:method} presents the longitudinal DAR-R model formulation and the doubly adaptive weighting scheme. Section~\ref{sec:algo} describes the computational procedure and iterative reweighting algorithm. Section~\ref{sec:theory} establishes the theoretical properties of the proposed estimator. Section~\ref{sec:simulation} reports simulation studies, and Section~\ref{sec:application} provides a real-data application to the TADPOLE/ADNI longitudinal Alzheimer’s disease cohort. Section~\ref{sec:conclusion} concludes with a summary and future directions.

%%%%%%%%%%%%%%%%%%%%%%%%%%%%%%%%%%%%%%%%%%%%%%%%%%%%%%%%%%%%%%%%%%%%%%%%
\section{Methodology}
\label{sec:method}

\subsection{Model Specification}
For subjects $i=1,\dots,n$ and times $t=1,\dots,T_i$, let $Y_{it}\in\R$ denote the outcome, with fixed-effect covariates $X_{it}\in\R^p$ and random-effect covariates $Z_{it}\in\R^q$. We consider the longitudinal linear mixed-effects model
\begin{equation}\label{eq:model}
Y_{it}=X_{it}^\top \beta^\star + Z_{it}^\top b_i + \varepsilon_{it},
\end{equation}
where $\beta^\star\in\R^p$ is the unknown fixed-effect vector of interest, $b_i\in\R^q$ is a subject-specific random effect, and $\varepsilon_{it}$ is the idiosyncratic noise. We assume $\E(b_i)=0$ and $\Cov(b_i)=D^\star$, while $\E(\varepsilon_{it})=0$ and $\Var(\varepsilon_{it})=(\sigma^\star)^2$, with independence across subjects and $(i,t)$ conditional on $(X_{it},Z_{it})$. The within-subject dependence is captured through the shared random effect $b_i$, following the standard mixed-model formulation; see, e.g., \citep{laird1982, verbeke2000, diggle2002}.

For compact notation, define the stacked response and design matrices $Y_i=(Y_{i1},\dots,Y_{iT_i})^\top$, $X_i=\big(X_{i1},\dots,X_{iT_i}\big)^\top\in\R^{T_i\times p}$ and $Z_i=\big(Z_{i1},\dots,Z_{iT_i}\big)^\top\in\R^{T_i\times q}$, so that $Y_i=X_i\beta^\star+Z_i b_i+\varepsilon_i$. Let $N=\sum_{i=1}^n T_i$ be the total number of observations. Our primary goal is estimation and variable selection for $\beta^\star$ when $p$ can be comparable to or much larger than $N$. We assume $\beta^\star$ is sparse with support $S=\{j:\beta_j^\star\neq 0\}$ satisfying $|S|=s\ll p$.

\subsection{Doubly Adaptive Objective}
Penalized mixed-model estimation can be sensitive to contamination because atypical observations may appear either as aberrant responses (large residuals) or as covariate leverage points (extreme $X$-profiles). Our strategy is to build observation-specific weights that downweight both sources of contamination, while automatically adapting the overall strength of downweighting to the ``cleanliness'' of the dataset.

Let $(\tilde\beta,\{\tilde b_i\},\tilde\sigma)$ be robust pilot estimates for the mixed model. Such pilots can be obtained via robust REML-type ideas \citep{richardsonwelsh1995}, $t$-based robustification \citep{pinheiroliu2001, lange1989}, or practical robust mixed-model implementations \citep{koller2016}. To assess covariate leverage robustly, we also compute a robust location--scatter pair $(\tilde\ell,\tilde\Sigma)$ for $\{X_{it}\}$, for example using the Minimum Covariance Determinant (MCD) estimator \citep{rousseeuw1999mcd}, which is standard in robust outlier detection \citep{rousseeuw1987, maronna2006}.

Based on the pilot fit, we quantify response outlyingness by the standardized residual
\[
\tilde r_{it}
=
\frac{Y_{it}-X_{it}^\top\tilde\beta-Z_{it}^\top\tilde b_i}{\tilde\sigma},
\]
so that unusually large $\abs{\tilde r_{it}}$ indicates an aberrant outcome relative to the fitted mixed model. In parallel, we measure covariate leverage using a robust Mahalanobis-type distance computed from $(\tilde\ell,\tilde\Sigma)$,
\[
d^2(X_{it})
=
(X_{it}-\tilde\ell)^\top\tilde\Sigma^{-1}(X_{it}-\tilde\ell),
\]
where larger values correspond to atypical covariate profiles in the fixed-effect design space. Here $\tilde\ell\in\R^p$ denotes a robust estimate of the location (center) of the covariate vector $X_{it}$, paired with a robust scatter estimate $\tilde\Sigma$ (e.g., from MCD).

To adapt the aggressiveness of downweighting to the overall contamination level, we introduce a global discrepancy factor based on the empirical distribution of $\abs{\tilde r_{it}}$. Let
\[
F_n^+(u)=\frac{1}{N}\sum_{i=1}^n\sum_{t=1}^{T_i}\mathbf 1\!\left(\abs{\tilde r_{it}}\le u\right),\qquad 
F_0^+(u)=\Prob(\abs{Z}\le u),~ Z\sim N(0,1),
\]
and define
\[
\tilde\delta=\sup_{u\ge 0}\big|F_n^+(u)-F_0^+(u)\big|.
\]
Intuitively, $\tilde\delta$ is small when standardized residuals behave close to the nominal reference, and becomes larger when the residual distribution exhibits noticeable tail inflation or distortion; it therefore calibrates how aggressively the procedure should downweight. If a different nominal error shape is more appropriate, $F_0^+$ can be replaced accordingly without changing the rest of the construction.

We then define doubly adaptive weights $w_{it}\in[0,1]$ as
\begin{equation}\label{eq:weights}
w_{it}
=
\phi_1\!\big(\tilde\delta\,\abs{\tilde r_{it}}\big)\cdot 
\phi_2\!\big(\tilde\delta\, d^2(X_{it})\big),
\end{equation}
where $\phi_1,\phi_2:[0,\infty)\to[0,1]$ are non-increasing functions with $\phi_k(0)=1$ and $\phi_k(u)\to 0$ as $u\to\infty$. This form follows the bounded-influence spirit of robust M-estimation \citep{huber1964, hampel1986, huber2009}, while explicitly separating response outlyingness and covariate leverage. Practically, two convenient choices are Huber-type weight $\phi(u)=\min(1,c/u)$ and Tukey bisquare weight $\phi(u)=\{1-(u/c)^2\}^2\mathbf 1(u\le c)$; both yield $w_{it}\approx 1$ for typical observations and substantially smaller weights for extreme residuals and/or leverage points.

Given weights, we update the random effects using a robust empirical Bayes (REB) step:
\begin{equation}\label{eq:bhat}
\hat b_i(\beta)
=\arg\min_{b \in \R^q}\Big\{\sum_{t=1}^{T_i} w_{it}\big(Y_{it}-X_{it}^\top \beta - Z_{it}^\top b\big)^2 + b^\top \hat{D}^{-1} b\Big\},
\end{equation}
where $\hat D$ is a robust estimate of the random-effect covariance. Denote $W_i=\mathrm{diag}(w_{i1},\dots,w_{iT_i})$, the solution has the closed form
\[
\hat b_i(\beta)
=
(Z_i^\top W_i Z_i+\hat D^{-1})^{-1}Z_i^\top W_i (Y_i-X_i\beta).
\]
Finally, we estimate the fixed effect $\beta$ by minimizing the weighted penalized criterion
\begin{equation}\label{eq:objective}
\hat\beta
=\arg\min_{\beta\in\R^p}\left\{\sum_{i=1}^n\sum_{t=1}^{T_i}
w_{it}\big(Y_{it}-X_{it}^\top\beta-Z_{it}^\top\hat b_i(\beta)\big)^2
+ N\sum_{j=1}^p p_\lambda(\abs{\beta_j})\right\},
\end{equation}
where $p_\lambda(\cdot)$ is a sparsity-inducing penalty. We focus on folded-concave penalties such as SCAD and MCP, because \cite{fanli2001} established their oracle-type behavior and \cite{zhang2010} proposed the MCP family, both of which help reduce shrinkage bias on strong signals compared to convex $\ell_1$ penalties. Computational details for optimizing \eqref{eq:objective} and updating $(\hat D,\hat\sigma)$ are deferred to Section~\ref{sec:algo}.
%%%%%%%%%%%%%%%%%%%%%%%%%%%%%%%%%%%%%%%%%%%%%%%%%%%%%%%%%%%%%%%%%%%%%%%%
\section{Computation}
\label{sec:algo}

The optimization problem in \eqref{eq:objective} is inherently coupled across multiple layers of unknowns, so a one-shot minimization is not practical. The fixed effects $\beta$ determine fitted values and hence the standardized residuals that enter the doubly adaptive weights $w_{it}$; given these weights, the random effects $\hat b_i(\beta)$ depend on $\beta$ through the weighted REB update \eqref{eq:bhat}; and the weights are then recalibrated from the current fit through the residual- and leverage-based outlyingness measures. In addition, the variance components $(D,\sigma^2)$ are naturally tied to the current estimates of $\{b_i\}$ and the weighted residual scale. This circular dependence suggests an alternating procedure that repeatedly updates auxiliary quantities such as weights, random effects, and variance components given the current $\beta$, and updates $\beta$ by solving a weighted penalized subproblem with those auxiliary quantities held fixed. The resulting scheme can be interpreted as an EM-/ECM-type approach \citep{dempster1977, meng1993} combined with an iterative reweighting (IRLS) viewpoint: each iteration refreshes the data-adaptive weights to attenuate response outliers and covariate leverage points, and then performs a stabilized penalized mixed-model update under the refreshed weights.

We alternate between an E-step, where we update the unobserved components (weights and random effects) given the current parameters, and an M-step, where we update the parameters ($\beta$) given the current weights and random effects.

\subsection{EM Algorithm}
We implement the proposed estimator via an EM-style iterative reweighting procedure. At iteration $k$, suppose we have current estimates $(\tilde\beta^{(k)}, \tilde D^{(k)}, \tilde\sigma^{(k)})$. The next iterate is obtained by an E-step that refreshes the weights and random effects (and updates variance components accordingly), followed by an M-step that updates $\beta$ under the new weights.

\subsubsection{E-step: Updating Weights and Random Effects}
The E-step updates all components that are treated as latent or auxiliary in the weighted formulation. Using the current iterate $(\tilde\beta^{(k)}, \tilde D^{(k)}, \tilde\sigma^{(k)})$, we perform:

\textbf{Step 1-Update Weights:} Compute standardized residuals based on the current fit,
\begin{equation*}
\tilde r_{it}^{(k)}=\frac{Y_{it}-X_{it}^\top \tilde\beta^{(k)}-Z_{it}^\top \tilde b_i^{(k)}}{\tilde\sigma^{(k)}},    
\end{equation*}
together with the robust leverage score $d^2(X_{it})$ in Section~\ref{sec:method}. Then evaluate the global discrepancy factor $\tilde\delta^{(k)}$ and update the observation weights by
\begin{equation*}
w_{it}^{(k+1)}=\phi_1\!\big(\tilde\delta^{(k)}|\tilde r_{it}^{(k)}|\big)\cdot \phi_2\!\big(\tilde\delta^{(k)} d^2(X_{it})\big),
\end{equation*}
which is exactly the weighting rule in \eqref{eq:weights}. Intuitively, $w_{it}^{(k+1)}$ identifies and downweights observations that are currently assessed as response outliers and/or leverage points. 

\textbf{Step 2-Update Random Effects and Variance Components:} Given $w_{it}^{(k+1)}$, update the random effects by the weighted REB step \eqref{eq:bhat}:
\begin{align*}
\hat b_i^{(k+1)}(\beta) & = \left(Z_i^\top W_i^{(k+1)} Z_i+(\tilde D^{(k)})^{-1}\right)^{-1}
    Z_i^\top W_i^{(k+1)}(Y_i-X_i\beta),\\
W_i^{(k+1)} &= \mathrm{diag}(w_{i1}^{(k+1)},\dots,w_{iT_i}^{(k+1)}).    
\end{align*}
In practice, we plug in $\beta=\tilde\beta^{(k)}$ to obtain $\hat b_i^{(k+1)}=\hat b_i^{(k+1)}(\tilde\beta^{(k)})$, and define the associated posterior covariance
\[
V_i^{(k+1)}=
\left(Z_i^\top W_i^{(k+1)} Z_i+(\tilde D^{(k)})^{-1}\right)^{-1}.
\]
We then update the variance components using the following EM-type moment updates:
\begin{equation}\label{eq:D_update}
\tilde D^{(k+1)}
=
\frac{1}{n}\sum_{i=1}^n
\left\{
\hat b_i^{(k+1)}\hat b_i^{(k+1)\top}
+
V_i^{(k+1)}
\right\},
\end{equation}
and
\begin{equation}\label{eq:sigma_update}
\tilde\sigma^{2(k+1)}
=
\frac{1}{N}\sum_{i=1}^n
\left\{
\| (W_i^{(k+1)})^{1/2}\, e_i^{(k+1)} \|_2^2
+
\Tr\!\big(W_i^{(k+1)}Z_i V_i^{(k+1)} Z_i^\top\big)
\right\},
\end{equation} 
where $e_i^{(k+1)} = Y_i - X_i\tilde\beta^{(k)} - Z_i\hat b_i^{(k+1)}$. These updates yield refreshed quantities $(\tilde D^{(k+1)},\tilde\sigma^{2(k+1)})$ to be used in the subsequent M-step.

\subsubsection{M-step: Updating Fixed Effects ($\beta$)}
Conditioning on the updated weights $w_{it}^{(k+1)}$ and the refreshed variance components, the M-step updates the fixed effects by minimizing the weighted penalized objective:
\[
\hat\beta^{(k+1)} = \arg\min_{\beta\in\R^p} \left\{ \sum_{i=1}^n \left\|\left(W_i^{(k+1)}\right)^{1/2}
\big( Y_i-X_i\beta-Z_i\hat b_i^{(k+1)}(\beta)
\big) \right\|_2^2 + N\sum_{j=1}^p p_\lambda(|\beta_j|)
\right\},
\]
where $\hat b_i^{(k+1)}(\beta)$ is the closed-form REB update from \eqref{eq:bhat} using $(W_i^{(k+1)},\hat D^{(k+1)})$. This subproblem can be solved efficiently by coordinate descent for folded-concave penalties; see, e.g., \cite{breheny2011} and the general coordinate-descent literature \citep{friedman2010}. After obtaining $\hat\beta^{(k+1)}$, we set $\tilde\beta^{(k+1)}\leftarrow \hat\beta^{(k+1)}$ and proceed to the next iteration until convergence.

\subsection{Algorithm}

The complete iterative procedure is summarized in Algorithm~\ref{alg:main}.

\begin{algorithm}[!]
\caption{Iterative Algorithm for Longitudinal DAR-R}
\label{alg:main}
\begin{algorithmic}[1]
\State \textbf{Initialize:} Obtain robust pilot estimates $(\tilde\beta^{(0)}, \tilde D^{(0)}, \tilde\sigma^{(0)})$. Compute robust location/scatter $(\tilde\ell, \tilde\Sigma)$ of $\{X_{it}\}$ (e.g., via MCD). Set $k=0$.
\Repeat
    \State \textbf{Step 1 (E-step: weights).}
    \State Compute the global discrepancy factor $\tilde\delta^{(k)}=\sup_{u\ge 0}\big|F_n^+(u;\tilde\beta^{(k)})-F_0^+(u)\big|.$
    \State For all $(i,t)$, compute leverage scores $d^2(X_{it})$ and standardized residuals $\tilde r_{it}^{(k)}$.
    \State Update weights $w_{it}^{(k+1)}$ using \eqref{eq:weights} with $\tilde\delta^{(k)}$, $\tilde r_{it}^{(k)}$, and $d^2(X_{it})$.

    \State \textbf{Step 2 (E-step: random effects and variance components).}
    \State Compute $\hat b_i^{(k+1)}=\hat b_i(\tilde\beta^{(k)})$ using \eqref{eq:bhat} with weights $w_{it}^{(k+1)}$ and $\tilde D^{(k)}$.
    \State Update $\tilde D^{(k+1)}$ (and $\tilde\sigma^{2(k+1)}$ if applicable) using \eqref{eq:D_update}--\eqref{eq:sigma_update}.

    \State \textbf{Step 3 (M-step: fixed effects).}
    \State Update $\hat\beta^{(k+1)}$ by solving the weighted penalized problem in \eqref{eq:objective} (equivalently \eqref{eq:Mstep_beta}) with weights $w_{it}^{(k+1)}$ and covariance $\tilde D^{(k+1)}$.
    \State Set $\tilde\beta^{(k+1)} \leftarrow \hat\beta^{(k+1)}$ and $k \leftarrow k+1$.
\Until{convergence (e.g., $\|\tilde\beta^{(k)}-\tilde\beta^{(k-1)}\|_2/\max\{1,\|\tilde\beta^{(k-1)}\|_2\}$ is below a tolerance).}
\State \textbf{Return:} $\hat\beta = \tilde\beta^{(k)}$.
\end{algorithmic}
\end{algorithm}

The key computational challenge is Step~3 (M-step). By profiling out the random effects $\hat b_i(\beta)$ from \eqref{eq:bhat}, the objective in \eqref{eq:objective} can be rewritten as a weighted penalized least-squares problem only in $\beta$:
\[
\min_{\beta \in \R^p}
\left\{
\sum_{i=1}^n (Y_i - X_i \beta)^\top \tilde W_i (Y_i - X_i \beta)
+
N \sum_{j=1}^p p_\lambda(|\beta_j|)
\right\},
\]
where the effective weight matrix is
\[
\tilde W_i
=
W_i - W_i Z_i (Z_i^\top W_i Z_i + \hat D^{-1})^{-1} Z_i^\top W_i,
\qquad
W_i=\mathrm{diag}(w_{i1},\dots,w_{iT_i}).
\]
For nonconvex penalties such as SCAD/MCP, we adopt the local linear approximation (LLA) strategy \citep{zouli2008}, which replaces $p_\lambda(|\beta_j|)$ by a weighted $\ell_1$ term using the current iterate; the resulting subproblem is a sequence of weighted Adaptive Lasso fits. The motivation for folded-concave penalties and their oracle-type behavior follows \citep{fanli2001}.

\subsection{Coordinate Descent for the Penalized Fixed-Effect Update}
\label{ssec:cd}

At the M-step, we update $\beta$ under the refreshed weights and the current random-effect/variance estimates. To obtain a numerically stable and scalable update, we treat $\{\hat b_i^{(k+1)}\}$ as fixed within the M-step (as in an ECM implementation \citep{meng1993}) and solve a weighted penalized least-squares problem. Define the working responses
\[
Y_{it}^{\ast}
=
Y_{it}-Z_{it}^\top \hat b_i^{(k+1)},
\qquad
Y_i^\ast = Y_i - Z_i \hat b_i^{(k+1)}.
\]
Given $W_i^{(k+1)}$, the M-step update is
\begin{equation}\label{eq:Mstep_beta}
\beta^{(k+1)}
=
\arg\min_{\beta\in\R^p}
\left\{
\sum_{i=1}^n (Y_i^\ast-X_i\beta)^\top W_i^{(k+1)}(Y_i^\ast-X_i\beta)
+
N\sum_{j=1}^p p_\lambda(|\beta_j|)
\right\}.
\end{equation}
This problem has the same structure as weighted penalized regression, except that the weights are data-adaptive and refreshed across outer iterations. We solve \eqref{eq:Mstep_beta} by coordinate descent, which is efficient in high dimensions and enjoys good practical performance even for folded-concave penalties \citep{friedman2010}. Convergence properties of block/coordinate descent for non-smooth objectives have been studied in \citep{tseng2001}.

For the $j$-th coordinate update, let $x_{itj}$ be the $j$th component of $X_{it}$ and define the partial residual excluding the $j$-th coordinate,
\[
r_{it}^{(-j)} = Y_{it}^\ast - \sum_{\ell\neq j} x_{it\ell}\beta_\ell.
\]
The subproblem in $\beta_j$ becomes a one-dimensional penalized quadratic:
\[
\min_{\beta_j\in\R}
\left\{
\sum_{i=1}^n\sum_{t=1}^{T_i} w_{it}^{(k+1)}\big(r_{it}^{(-j)}-x_{itj}\beta_j\big)^2
+
N\,p_\lambda(|\beta_j|)
\right\}.
\]
Let
\[
a_j=\sum_{i=1}^n\sum_{t=1}^{T_i} w_{it}^{(k+1)}x_{itj}^2,
\qquad
b_j=\sum_{i=1}^n\sum_{t=1}^{T_i} w_{it}^{(k+1)}x_{itj}r_{it}^{(-j)}.
\]
Up to an additive constant, the subproblem is equivalent to
\[
\min_{\beta_j\in\R}\left\{\frac{1}{2}z_j(\beta_j-u_j)^2 + N\,p_\lambda(|\beta_j|)\right\},
\qquad
z_j=2a_j,\quad u_j=\frac{b_j}{a_j},
\]
so each coordinate update reduces to a scalar penalized least-squares proximal step. For SCAD and MCP, the minimizer has a closed-form thresholding rule; we implement these updates using the standard coordinate-descent formulas in \cite{breheny2011}. In particular, for MCP with concavity parameter $\gamma>1$, the update can be expressed in terms of the score $s_j=z_j u_j$ as
\[
\beta_j \leftarrow
\begin{cases}
0, & |s_j|\le N\lambda,\\[2mm]
\displaystyle \frac{\mathrm{sign}(s_j)\big(|s_j|-N\lambda\big)}{z_j-1/\gamma}, & N\lambda<|s_j|\le \gamma N\lambda z_j,\\[3mm]
\displaystyle \frac{s_j}{z_j}, & |s_j|>\gamma N\lambda z_j,
\end{cases}
\]
and the SCAD update follows an analogous piecewise form (see \cite{breheny2011}). We cycle through coordinates until the inner-loop convergence criterion is met, and then return $\beta^{(k+1)}$ to the outer iteration.

In implementation, we use warm starts across $\lambda$ values and across outer iterations, and we maintain an active-set strategy to reduce computation by prioritizing coordinates with larger empirical gradients. These standard accelerations preserve the exact objective \eqref{eq:Mstep_beta} while substantially improving scalability in high dimensions.

%%%%%%%%%%%%%%%%%%%%%%%%%%%%%%%%%%%%%%%%%%%%%%%%%%%%%%%%%%%%%%%%%%%%%%%%

\section{Theoretical Properties}
\label{sec:theory}

\subsection{Notation and Assumptions}
\label{subsec:assumptions}

Let $N=\sum_{i=1}^n T_i$ be the total number of observations and
$S=\{j:\beta_j^\star\neq 0\}$ be the true active set with $|S|=s$.
For any vector $v\in\mathbb R^p$, write $v_S$ and $v_{S^c}$ as the subvectors on $S$ and its complement.
Define the cone
\[
\mathcal C(S,3)=\Big\{\Delta\in\mathbb R^p:\ \|\Delta_{S^c}\|_1\le 3\|\Delta_S\|_1\Big\}.
\]

Throughout Section~\ref{sec:theory}, we analyze the profiled weighted loss that appears in \eqref{eq:objective}.
Given weights $W_i=\mathrm{diag}(w_{i1},\ldots,w_{iT_i})$ and a working covariance $D$,
profiling out $b_i$ leads to the effective weight matrix
\[
\tilde W_i
=
W_i - W_i Z_i\big(Z_i^\top W_i Z_i + D^{-1}\big)^{-1}Z_i^\top W_i,
\]
so that the profiled loss can be written as
\[
\mathcal L_N(\beta)
=
\frac{1}{2N}\sum_{i=1}^n (Y_i-X_i\beta)^\top \tilde W_i (Y_i-X_i\beta).
\]
Let $\hat\beta$ denote a stationary point (or a local minimizer) of
$\mathcal L_N(\beta)+\sum_{j=1}^p p_\lambda(|\beta_j|)$.
Write $\Delta=\hat\beta-\beta^\star$ and $\xi=\nabla \mathcal L_N(\beta^\star)$.

\begin{assumption}
\textbf{Clustered sampling and moments.}\label{ass_a1} 
Subjects are independent across $i$.
The within-subject size is uniformly bounded: $\max_i T_i\le T_{\max}$. Conditional on $(X_{it},Z_{it})$, we have $\E(b_i)=0$, $\E(\varepsilon_{it})=0$, $\E\|b_i\|^{2+\delta}<\infty$, and $\E|\varepsilon_{it}|^{2+\delta}<\infty$ for some $\delta>0$.
\end{assumption}

\begin{assumption}
\textbf{Design regularity.}\label{ass_a2} 
The rows of $X_i$ are sub-Gaussian with bounded second moments, and the eigenvalues of $\Sigma_X=\E(X_{it}X_{it}^\top)$ are bounded away from $0$ and $\infty$.   
\end{assumption}

\begin{assumption}
\textbf{Sparsity and growth regime.} \label{ass_a3} 
The sparsity satisfies $s\log p=o(N)$ and $s\ll p$.
\end{assumption}

\begin{assumption}
\textbf{Weight regularity (clean mass and boundedness).} \label{ass_a4}
Weights satisfy $w_{it}\in[0,1]$ for all $(i,t)$. Moreover, with probability tending to $1$, there exists a ``clean'' index set $\mathcal C\subset\{(i,t)\}$ with $|\mathcal C|\ge (1-\epsilon)N$ such that $w_{it}\ge c_w>0$ for all $(i,t)\in\mathcal C$. 
\end{assumption}

\begin{assumption}
\textbf{Penalty regularity.} \label{ass_a5}
The penalty $p_\lambda$ is either the adaptive LASSO or a folded-concave penalty (SCAD/MCP)
with $p'_\lambda(0+)=\lambda$, and there exists $a>0$ such that $p'_\lambda(t)=0$ for all $t\ge a\lambda$. The tuning parameter satisfies $\lambda\asymp \sqrt{(\log p)/N}$.   
\end{assumption}

\begin{assumption}
\textbf{Feasible--oracle score proximity.} \label{ass_a6}
Let $\mathcal L_N^{\circ}(\beta)$ be the same profiled loss as above but computed using the
population/limit versions of the weights and covariance components. Then
\[
\big\|\nabla \mathcal L_N(\beta^\star)-\nabla \mathcal L_N^{\circ}(\beta^\star)\big\|_\infty=o_p(\lambda).
\]
This condition formalizes that estimating weights and variance components does not inflate the score beyond the stochastic order of $\lambda$.    
\end{assumption}

\begin{assumption}
\textbf{Restricted strong convexity.} \label{ass_a7}
There exist constants $\alpha>0$ and $\tau\ge 0$ such that, with probability tending to $1$,
\[
\Delta^\top\Big(\nabla^2 \mathcal L_N(\beta^\star)\Big)\Delta
\ \ge\
\alpha\|\Delta\|_2^2-\tau\frac{\log p}{N}\|\Delta\|_1^2,
\qquad \forall\,\Delta\in\mathcal C(S,3).
\]   
\end{assumption}

\noindent\textbf{Remark}
Assumption \ref{ass_a1} is standard for longitudinal mixed models: independence across subjects and bounded cluster size facilitate concentration and CLT arguments. Assumption \ref{ass_a2} imposes mild tail and eigenvalue regularity on the design so that weighted score and Gram matrices concentrate uniformly. Assumption \ref{ass_a3} is the usual sparse high-dimensional regime and is mainly used to control the RSC tolerance term and yield the rate $\sqrt{s\log p/N}$. Assumption \ref{ass_a4} ensures the weighting scheme preserves a nontrivial effective sample size by keeping weights bounded away from zero on the clean majority; this is essential for both curvature and stochastic gradient bounds. Assumption \ref{ass_a5} collects standard regularity for SCAD/MCP (or adaptive Lasso) and sets $\lambda\asymp \sqrt{(\log p)/N}$. Assumption \ref{ass_a6} is a feasibility condition stating that estimating weights and variance components does not perturb the score beyond $o_p(\lambda)$. Finally, Assumption \ref{ass_a7} is the restricted strong convexity condition for the profiled weighted loss, which is compatible with  Assumption \ref{ass_a2} and \ref{ass_a4} under standard concentration arguments.

\subsection{Theoretical Results}

\begin{theorem}[Non-asymptotic bounds]\label{thm:nonasym}
Under Assumption \ref{ass_a1}--\ref{ass_a7}, assume $\lambda\ge 2\|\nabla \mathcal L_N(\beta^\star)\|_\infty$.  Then for all sufficiently large $N$, there exist constants $C_1,C_2>0$ such that
\[
\|\hat\beta-\beta^\star\|_2 \le \frac{C_1\lambda\sqrt{s}}{\alpha},
\qquad
\frac{1}{N}\| \tilde W^{1/2}X(\hat\beta-\beta^\star)\|_2^2 \le \frac{C_2\lambda^2 s}{\alpha},
\]
where $\tilde W=\mathrm{blkdiag}(\tilde W_1,\ldots,\tilde W_n)$.
\end{theorem}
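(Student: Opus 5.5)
The plan is the Negahban et al.\ (2012) decomposable-regularizer argument, adapted to folded-concave penalties along the lines of Loh and Wainwright (2015). The key observation is that $\mathcal L_N$ is an exact quadratic in $\beta$. For any $\beta$,
\[
\mathcal L_N(\beta)-\mathcal L_N(\beta^\star)-\langle\nabla\mathcal L_N(\beta^\star),\beta-\beta^\star\rangle
=\tfrac12(\beta-\beta^\star)^\top\nabla^2\mathcal L_N(\beta^\star)(\beta-\beta^\star),
\]
where $\nabla^2\mathcal L_N=N^{-1}X^\top\tilde W X$ and $\tilde W_i\succeq 0$. Therefore the curvature term in Assumption~\ref{ass_a7} is exactly $N^{-1}\|\tilde W^{1/2}X\Delta\|_2^2$. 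This identity gives the prediction bound immediately once the $\ell_2$ bound and the cone property are in hand.

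\textbf{Step 1 (cone condition).}
1. From the basic inequality $\mathcal L_N(\hat\beta)+P_\lambda(\hat\beta)\le \mathcal L_N(\beta^\star)+P_\lambda(\beta^\star)$ we get
\[
\tfrac12\Delta^\top\nabla^2\mathcal L_N\Delta\le -\langle\xi,\Delta\rangle+P_\lambda(\beta^\star)-P_\lambda(\hat\beta).
\]
For a stationary point I would instead use the first-order condition $\langle\nabla\mathcal L_N(\hat\beta)+\partial P_\lambda(\hat\beta),\beta^\star-\hat\beta\rangle\ge 0$, which gives the same inequality without the factor $\tfrac12$.
2. Bound $|\langle\xi,\Delta\rangle|\le\|\xi\|_\infty\|\Delta\|_1\le(\lambda/2)\|\Delta\|_1$.
3. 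For the penalty, use the folded-concave facts implied by Assumption~\ref{ass_a5}: $p_\lambda$ is $\lambda$-Lipschitz with $p_\lambda(t)\le\lambda t$. To get exact decomposability, I would use the Loh--Wainwright representation $p_\lambda(t)+\tfrac{\mu}{2}t^2$ convex with $\mu\asymp 1/a$. This gives
\[
P_\lambda(\beta^\star)-P_\lambda(\hat\beta)\le\lambda(\|\Delta_S\|_1-\|\Delta_{S^c}\|_1)+\tfrac{\mu}{2}\|\Delta\|_2^2.
\]
For the adaptive LASSO the same inequality holds with $\mu=0$ after rescaling the weights.
4. Combining these, and using that the left side is nonnegative, yields $\|\Delta_{S^c}\|_1\le 3\|\Delta_S\|_1$ up to the $\mu\|\Delta\|_2^2/\lambda$ term. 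That term I would absorb by first localizing $\hat\beta$ to a bounded ball, the standard side constraint $\|\beta\|_1\le R$ in Loh--Wainwright. Hence $\Delta\in\mathcal C(S,3)$.

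\textbf{Step 2 (curvature and $\ell_2$ bound).}
1. On the cone, $\|\Delta\|_1\le4\|\Delta_S\|_1\le4\sqrt s\|\Delta\|_2$.
2. Assumption~\ref{ass_a7} then gives
\[
\Delta^\top\nabla^2\mathcal L_N\Delta\ge(\alpha-16\tau s\log p/N)\|\Delta\|_2^2\ge(\alpha/2)\|\Delta\|_2^2
\]
for $N$ large, by Assumption~\ref{ass_a3}. This is where ``sufficiently large $N$'' enters. I would also require $\mu<\alpha/4$, which is a condition on the concavity parameter $a$.
3. Plugging back in,
\[
\tfrac{\alpha}{4}\|\Delta\|_2^2\le\tfrac{3\lambda}{2}\|\Delta_S\|_1\le\tfrac{3\lambda}{2}\sqrt s\|\Delta\|_2,
\]
so $\|\Delta\|_2\le 6\lambda\sqrt s/\alpha$.

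\textbf{Step 3 (prediction bound).}
1. Return to the same inequality, now without dropping the quadratic term: $N^{-1}\|\tilde W^{1/2}X\Delta\|_2^2\lesssim\lambda\|\Delta\|_1+\mu\|\Delta\|_2^2$.
2. Use $\|\Delta\|_1\le4\sqrt s\|\Delta\|_2$ together with the $\ell_2$ bound from Step 2.
3. This gives $N^{-1}\|\tilde W^{1/2}X\Delta\|_2^2\le C_2\lambda^2 s/\alpha$.

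\textbf{Main obstacle.} The delicate step is handling nonconvexity for a general stationary point, namely keeping the $\mu\|\Delta\|_2^2$ term and the $\ell_1$ tolerance term under control. I would follow Loh--Wainwright: first show $\|\Delta\|_1$ is bounded using the side constraint, then treat $\tau\log p/N\,\|\Delta\|_1^2$ via the cone inequality.

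Assumptions \ref{ass_a1}, \ref{ass_a2}, \ref{ass_a4} and \ref{ass_a6} are not needed beyond guaranteeing that the event $\lambda\ge2\|\xi\|_\infty$ and the RSC event hold. Since both are assumed in the statement, the result is deterministic on that event.
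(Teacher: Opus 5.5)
The gap is in Step~1.4, and it is not a technicality. After you bound the folded-concave penalty difference by weak convexity and drop the nonnegative quadratic term, all you have is
\[
\|\Delta_{S^c}\|_1\le 3\|\Delta_S\|_1+\frac{c\,\mu}{\lambda}\|\Delta\|_2^2 ,
\]
with $c=1$ from the basic inequality and $c=2$ from the stationarity version. A side constraint $\|\beta\|_1\le R$ does not absorb the last term. It only gives $\|\Delta\|_2^2\le 4R^2$, so the slack is of order $\mu R^2/\lambda$, which diverges because $\lambda\asymp\sqrt{(\log p)/N}\to0$.

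In Loh--Wainwright the $\mu$-term is absorbed by curvature, not by the side constraint. They lower-bound the quadratic by $\alpha\|\Delta\|_2^2$ minus the tolerance for \emph{every} feasible $\Delta$, keep $(\alpha-c\mu)\|\Delta\|_2^2\ge0$ on the left, and only then read off the cone. The radius $R$ serves only to make the tolerance $\tau\frac{\log p}{N}\|\Delta\|_1^2$ linear in $\|\Delta\|_1$.

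Assumption~\ref{ass_a7} gives curvature only on $\mathcal C(S,3)$. So you cannot invoke it before the cone is established, and you cannot establish the cone without it; the argument is circular. Nor can you dismiss the case $\Delta\notin\mathcal C(S,3)$ separately. There the inequality only says $\Delta^\top G\Delta\lesssim\mu\|\Delta\|_2^2$, which is no contradiction, since $G=N^{-1}X^\top\tilde W X$ is singular when $p>N$. The condition $\mu<\alpha/4$ you add is also not among the theorem's hypotheses.

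The paper avoids this by proving the bound only for a separable weighted $\ell_1$ penalty $\sum_j\lambda_j|\beta_j|$. This covers the adaptive LASSO and the convex LLA subproblem, whose limit is a KKT point of the original problem by Lemma~\ref{lem:lla}. For such penalties $\mathcal P_\lambda(\beta^\star)-\mathcal P_\lambda(\hat\beta)\le\lambda_S\|\Delta_S\|_1-\lambda_{S^c}\|\Delta_{S^c}\|_1$ holds with no quadratic remainder. Under the explicitly imposed conditions $\lambda_S\le\lambda_{S^c}$ and $\lambda_{S^c}\ge2\|\xi\|_\infty$, the cone follows simply by dropping $\tfrac12\Delta^\top G\Delta\ge0$, and Assumption~\ref{ass_a7} is applied only afterwards, giving $C_1=6$ and $C_2=18$. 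Your Steps~2--3 are essentially that second half.

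To repair your route you would need one of two things:
\begin{itemize}
\item adopt this convex reduction, together with its weight-ordering conditions; or
\item replace Assumption~\ref{ass_a7} by a Loh--Wainwright-type RSC valid for all $\Delta$ in an $\ell_1$-ball, with $\mu$ small relative to $\alpha$ and an explicit side constraint built into the definition of $\hat\beta$.
\end{itemize}
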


\begin{remark}
The rates are of the familiar order $\lambda\sqrt{s}$ up to the curvature constant, showing that robustness mainly affects constants through the effective weighting/profiling. The condition $\lambda\ge 2\|\nabla \mathcal L_N(\beta^\star)\|_\infty$ is the standard calibration ensuring the basic inequality yields a cone constraint and hence the stated bounds.
\end{remark}

\begin{theorem}[Consistency]\label{thm:estimation}
Under Assumption \ref{ass_a1}--\ref{ass_a7} with $\lambda\asymp \sqrt{(\log p)/N}$, the estimator $\hat\beta$ from \eqref{eq:objective} satisfies:
\[
\|\hat\beta-\beta^\star\|_2=O_p\!\Big(\sqrt{\tfrac{s\log p}{N}}\Big),
\qquad
\|\hat\beta-\beta^\star\|_1=O_p\!\Big(s\sqrt{\tfrac{\log p}{N}}\Big).
\]
\end{theorem}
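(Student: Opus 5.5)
The plan is to reduce Theorem~\ref{thm:estimation} to Theorem~\ref{thm:nonasym}. The one missing ingredient is a high-probability bound on the score $\xi=\nabla\mathcal L_N(\beta^\star)$. Concretely, I will show that $\|\xi\|_\infty=O_p(\sqrt{(\log p)/N})$. Then, choosing the constant in $\lambda\asymp\sqrt{(\log p)/N}$ (Assumption~\ref{ass_a5}) large enough, the event $\{\lambda\ge 2\|\xi\|_\infty\}$ has probability at least $1-\eta$ for any prescribed $\eta>0$ and all large $N$. On this event, Theorem~\ref{thm:nonasym} gives $\|\hat\beta-\beta^\star\|_2\le C_1\lambda\sqrt s/\alpha=O(\sqrt{s\log p/N})$, which is the first claim.

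For the $\ell_1$ rate, I will use the cone property that the proof of Theorem~\ref{thm:nonasym} yields: on the same event, $\Delta\in\mathcal C(S,3)$. For SCAD/MCP stationary points this holds up to the usual folded-concave slack, which is absorbed into constants by Assumption~\ref{ass_a7} since $\tau s\log p/N=o(1)$ by Assumption~\ref{ass_a3}. The cone property gives
\[
\|\Delta\|_1\le 4\|\Delta_S\|_1\le 4\sqrt s\|\Delta\|_2=O_p\big(s\sqrt{(\log p)/N}\big).
\]

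The main step is the score bound, which I split as $\xi=\nabla\mathcal L_N^\circ(\beta^\star)+\{\nabla\mathcal L_N(\beta^\star)-\nabla\mathcal L_N^\circ(\beta^\star)\}$. By Assumption~\ref{ass_a6}, the second term is $o_p(\lambda)$ in sup-norm. For the oracle score,
\[
\nabla\mathcal L_N^\circ(\beta^\star)=-\frac1N\sum_{i=1}^n X_i^\top\tilde W_i^\circ(Z_ib_i+\varepsilon_i)=-\frac1N\sum_{i=1}^n U_i,
\]
where the $U_i\in\R^p$ are independent across subjects (Assumption~\ref{ass_a1}), with the population weights being functions of subject-level data only. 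Their mean is zero by the conditional moment conditions, provided the limit weights depend only on $(X_{it},Z_{it})$ or are symmetric in the errors; I will state this as the centering used. Moreover, $0\preceq\tilde W_i^\circ\preceq W_i^\circ\preceq I$, since $\tilde W_i$ is a Schur complement of a PSD block, so each coordinate $U_{ij}$ is a sum of at most $T_{\max}$ terms of the form $x_{itj}\cdot(\text{bounded linear combination of } Z_ib_i+\varepsilon_i)$.

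The obstacle is tail control. The $x_{itj}$ are sub-Gaussian (Assumption~\ref{ass_a2}), but $b_i$ and $\varepsilon_{it}$ have only $2+\delta$ moments. I would first try truncation. Split $U_{ij}$ at level $M_N\asymp\sqrt{N/\log p}$. The truncated parts are handled by Bernstein's inequality together with a union bound over $j\le p$, giving $\max_j|N^{-1}\sum_i U_{ij}^{\le}|\lesssim\sqrt{(\log p)/N}$ with high probability. For the remainder, Markov's inequality with the $(2+\delta)$th moment bounds the probability of any exceedance, and the truncation bias is $O(M_N^{-1-\delta})$. This requires a mild growth condition relating $\log p$ and $N$ (implicit in Assumption~\ref{ass_a3}). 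If that proves too weak, the fallback is to use that the robust weights bound the effective residuals: $w_{it}|r_{it}|$ is bounded for Huber/Tukey choices, making $U_{ij}$ sub-Gaussian directly. Combining the two pieces gives $\|\xi\|_\infty=O_p(\sqrt{(\log p)/N})+o_p(\lambda)$, which completes the reduction.
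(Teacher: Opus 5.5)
Your reduction is the paper's own: it proves Theorem~\ref{thm:estimation} by invoking the score bound of Lemma~\ref{lem:grad}, feeding $\lambda\ge 2\|\xi\|_\infty$ into Theorem~\ref{thm:nonasym}, and reading off the $\ell_1$ rate from $\|\Delta\|_1\le 4\|\Delta_S\|_1\le 4\sqrt s\|\Delta\|_2$ on the cone. That lemma splits off an $o_p(\lambda)$ plug-in error via Assumption~\ref{ass_a6}, then applies Bernstein and a union bound over $j$. You diagnose two weaknesses of that lemma correctly. First, the paper asserts that $\|Z_ib_i+\varepsilon_i\|_2$ is sub-exponential, which the $2+\delta$ moments of Assumption~\ref{ass_a1} do not give. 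Second, it asserts $\E S_{ij}=0$ ``by exogeneity'', although the weights depend on the responses.

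However, your replacement does not close under Assumptions~\ref{ass_a1}--\ref{ass_a7} either. As written, you truncate $U_{ij}$ itself at $M_N\asymp\sqrt{N/\log p}$ and bound ``any exceedance'' by Markov. The union over $i\le n$ and $j\le p$ then gives $\sum_{i,j}\Prob(|U_{ij}|>M_N)\lesssim np(\log p/N)^{1+\delta/2}$. This diverges once $p\gg N^{\delta/2}$, i.e.\ precisely in the high-dimensional regime. The repair is to truncate the $j$-free factor $e_i=Z_ib_i+\varepsilon_i$ at a level $K_N$, so that $\Prob(\max_i\|e_i\|_2>K_N)\le n\,\E\|e_i\|_2^{2+\delta}/K_N^{2+\delta}$ carries no factor $p$, and to use $\max_{i,t,j}|x_{itj}|\lesssim\sqrt{\log(Np)}$. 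That probability vanishes only if $K_N\gg n^{1/(2+\delta)}$. Meanwhile Bernstein's range term $K_N\sqrt{\log(Np)}\,(\log p)/N$ must be $O(\sqrt{(\log p)/N})$. Together these force roughly $\log p\cdot\log(Np)=o(N^{\delta/(2+\delta)})$.

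That growth condition is not implied by Assumption~\ref{ass_a3}, which only gives $s\log p=o(N)$. So it is not ``implicit'' there; it is an extra hypothesis (the paper tacitly uses light tails instead). Likewise, your centering condition is an added assumption, not a consequence of A1--A7: limit weights depending only on $(X_{it},Z_{it})$, or weights even in the residuals with conditionally symmetric errors. You are right that it is needed, but state it as such.

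The fallback removes neither requirement. For Huber or Tukey weights you only get $w_{it}|\tilde r_{it}|\le c/\tilde\delta$, and $\tilde\delta$ is by design small on clean data, so this bound is not uniform. Moreover, the weights are built from pilot residuals with $Z_{it}^\top\tilde b_i$ removed, so they need not control $w_{it}|Z_{it}^\top b_i+\varepsilon_{it}|$.

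Two smaller points. First, for an $O_p$ statement about a single estimator, the constant in $\lambda$ must be fixed rather than chosen per $\eta$. This works once your concentration step (together with Assumption~\ref{ass_a6}) gives $\Prob(\|\xi\|_\infty>C\sqrt{(\log p)/N})\to 0$ for a fixed $C$. Second, for SCAD/MCP the cone comes from the paper's weighted-$\ell_1$/LLA reading of Theorem~\ref{thm:nonasym} with $\lambda_S\le\lambda_{S^c}$. It does not come from $\tau s\log p/N=o(1)$ in Assumption~\ref{ass_a7}, which controls the RSC tolerance and says nothing about the concavity of the penalty.
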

\begin{remark}
Theorem~\ref{thm:estimation} is an immediate consequence of Theorem~\ref{thm:nonasym} under $\lambda\asymp \sqrt{(\log p)/N}$, yielding the usual sparse rate $\sqrt{s\log p/N}$. Thus the doubly adaptive weighting does not change the first-order convergence rate when the clean-mass condition holds.
\end{remark}

\begin{theorem}[Support Recovery]\label{thm:support}
Under Assumption \ref{ass_a1}--\ref{ass_a7}, assume in addition a weighted incoherence condition holds:
\[
\big\|\big(\nabla^2_{S^cS}\mathcal L_N(\beta^\star)\big)\big(\nabla^2_{SS}\mathcal L_N(\beta^\star)\big)^{-1}\big\|_\infty
\le 1-\eta
\quad \text{for some }\eta\in(0,1),
\]
and the minimum signal satisfies $\min_{j\in S}|\beta_j^\star|\ge c_\beta \lambda$ for a sufficiently large constant $c_\beta$. Then $\Prob(\hat S=S)\to 1$, where $\hat S=\{j:\hat\beta_j\neq 0\}$.
\end{theorem}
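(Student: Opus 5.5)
I plan to use the primal--dual witness (PDW) construction of Wainwright, adapted to the profiled weighted loss $\mathcal L_N$. Since $\mathcal L_N$ is quadratic in $\beta$, the Hessian $H:=\nabla^2\mathcal L_N(\beta^\star)=N^{-1}X^\top\tilde W X$ is constant, and for every $\beta$
\[
\nabla\mathcal L_N(\beta)=\xi+H(\beta-\beta^\star).
\]
The first step is to define an oracle estimator
\[
\check\beta=\arg\min_{\beta:\,\beta_{S^c}=0}\Big\{\mathcal L_N(\beta)+\sum_{j\in S}p_\lambda(|\beta_j|)\Big\}.
\]
I would then show that $\check\beta_S$ has no penalty bias. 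By Theorem~\ref{thm:estimation}, or by RSC restricted to $S$, we have $\|\check\beta_S-\beta^\star_S\|_\infty\le\|\check\beta_S-\beta^\star_S\|_2=O_p(\lambda\sqrt{s})$. A sharper $\ell_\infty$ bound is needed, however, and I would obtain it as follows. If the penalty is flat on the support, the stationarity condition gives
\[
\check\beta_S-\beta^\star_S=-H_{SS}^{-1}\xi_S,
\]
so
\[
\|\check\beta_S-\beta^\star_S\|_\infty\le \|H_{SS}^{-1}\|_\infty\,\lambda/2 .
\]
Choosing $c_\beta> a+\|H_{SS}^{-1}\|_\infty/2$ makes $|\check\beta_j|\ge a\lambda$ for all $j\in S$. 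By Assumption~\ref{ass_a5} this gives $p'_\lambda(|\check\beta_j|)=0$, which closes the loop. To make the argument rigorous, I would use a fixed-point argument on the box $\{\beta_S:\|\beta_S-\beta^\star_S\|_\infty\le\|H_{SS}^{-1}\|_\infty\lambda/2\}$. On this box the penalty is inactive, the stationary point is the unpenalized restricted least-squares solution, and that solution lies in the box. In particular $\mathrm{sign}(\check\beta_S)=\mathrm{sign}(\beta^\star_S)$. Invertibility of $H_{SS}$ follows from Assumption~\ref{ass_a7} applied to vectors supported on $S$, since these lie in $\mathcal C(S,3)$ and the tolerance term is $O(s\log p/N)=o(1)$ by Assumption~\ref{ass_a3}.

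The second step is strict dual feasibility on $S^c$. Setting $\check\beta_{S^c}=0$, I define the subgradient
\[
\check z_{S^c}=-\lambda^{-1}\nabla_{S^c}\mathcal L_N(\check\beta).
\]
Using the quadratic structure, this becomes
\[
\nabla_{S^c}\mathcal L_N(\check\beta)=\xi_{S^c}+H_{S^cS}(\check\beta_S-\beta^\star_S)=\xi_{S^c}-H_{S^cS}H_{SS}^{-1}\xi_S .
\]
The incoherence condition and $\lambda\ge 2\|\xi\|_\infty$ then give
\[
\|\check z_{S^c}\|_\infty\le \tfrac12+\tfrac12(1-\eta)=1-\eta/2<1.
\]
The condition $\lambda\ge2\|\xi\|_\infty$ holds with probability tending to one, because $\|\nabla\mathcal L^\circ_N(\beta^\star)\|_\infty=O_p(\sqrt{\log p/N})$ by sub-Gaussian concentration under Assumptions~\ref{ass_a1}--\ref{ass_a2}, and Assumption~\ref{ass_a6} adds only $o_p(\lambda)$. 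Since $p'_\lambda(0+)=\lambda$, these facts show that $\check\beta$ satisfies the KKT conditions of the full penalized problem, with strict inequality off the support.

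The third step is uniqueness, that is, showing $\hat\beta=\check\beta$, and I expect this to be the main obstacle because the penalty is nonconvex. I would first apply Theorem~\ref{thm:nonasym}, which localizes any stationary point $\hat\beta$ to the ball $\|\hat\beta-\beta^\star\|_2\lesssim\lambda\sqrt s$ and places $\Delta$ in the cone. Next I would use strict dual feasibility: the gap $\eta/2$ together with the RSC curvature $\alpha$ dominating the penalty's weak-convexity constant ($1/\gamma$ for MCP, $1/(a-1)$ for SCAD, assumed smaller than $\alpha$) forces any stationary point in this region to vanish on $S^c$. This follows the argument of Loh and Wainwright (2017, support recovery for nonconvex M-estimators). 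Once the support is restricted to $S$, the problem is strictly convex on $S$ near $\beta^\star_S$, so $\hat\beta=\check\beta$. Together with the first step this yields $\hat S=S$ with probability tending to one. If that curvature comparison cannot be assumed, I would fall back on the LLA route: starting from the Lasso-type initializer, the one-step LLA produces a weighted $\ell_1$ problem with zero weights on $S$ and weight $\lambda$ on $S^c$, which is convex, and the PDW argument above applies to it directly.
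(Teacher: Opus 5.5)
Your proposal uses the same primal--dual witness skeleton as the paper: a restricted oracle on $S$, sup-norm control through $H_{SS}^{-1}$, sign recovery from the beta-min condition, and strict dual feasibility from incoherence. The difference is that you run it on the folded-concave penalty itself. The paper runs it on the weighted-$\ell_1$ surrogate $\sum_j\lambda_j|\beta_j|$ (adaptive Lasso or an LLA subproblem).

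In the paper the oracle keeps a subgradient $z_S$ with $\|z_S\|_\infty\le\lambda_S$. Hence $\tilde\beta_S-\beta^\star_S=-G_{SS}^{-1}(\xi_S+z_S)$ carries a shrinkage term, and closing dual feasibility uses $\lambda_S\le\lambda_{S^c}$ and $\lambda_{S^c}\ge(2/\eta)\|\xi\|_\infty$. For LLA weights $\lambda_j=p'_\lambda(|\beta^{(m)}_j|)$ this is somewhat circular, because the weights depend on the very support being certified.

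By combining the beta-min condition with $p'_\lambda(t)=0$ for $t\ge a\lambda$, you make the oracle the unpenalized restricted least-squares fit. Then $z_S=0$, the nonconvex KKT system is checked directly, and $\lambda\ge2\|\xi\|_\infty$ already gives margin $\eta/2$. Simply define $\check\beta_S$ as that least-squares fit: your box argument shows it is stationary, which is all the witness needs, not that it is the restricted arg min.

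You also treat uniqueness explicitly, which the paper only asserts (``implying $\hat\beta=\tilde\beta$''). In the paper this is harmless only because, for a fixed weight vector with $\lambda_j>0$ on $S^c$, the problem is convex, and strict dual feasibility plus $G_{SS}\succ0$ gives a unique solution.

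There are two caveats.

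First, your primary uniqueness route does not close under Assumptions~\ref{ass_a1}--\ref{ass_a7}. It adds the condition that $\alpha$ exceed the penalty's weak-convexity constant. It also uses Theorem~\ref{thm:nonasym} to localize an arbitrary stationary point, but the paper proves that theorem from the basic inequality $Q_N(\hat\beta)\le Q_N(\beta^\star)$ for a weighted-$\ell_1$ criterion. Moreover, Assumption~\ref{ass_a7} gives curvature only on $\mathcal C(S,3)$, which need not contain $\hat\beta-\check\beta$.

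The LLA fallback is the route that works, and it matches how the paper reads $\hat\beta$. It does, however, need the initializer's sup-norm error to be $O(\lambda)$; the $\ell_2$ rate of Theorem~\ref{thm:estimation} is not enough. That sup-norm control is exactly what the paper's witness argument with $\lambda_j\equiv\lambda$ supplies: the Lasso then has $\hat\beta^{(0)}_{S^c}=0$ and $|\hat\beta^{(0)}_j|\ge a\lambda$ on $S$, so the LLA weights are exactly $\lambda$ on $S^c$ and $0$ on $S$.

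Second, your choice $c_\beta>a+\|H_{SS}^{-1}\|_\infty/2$ is a constant only if $\|H_{SS}^{-1}\|_\infty=O_p(1)$. Assumption~\ref{ass_a7} restricted to $S$ gives $\|H_{SS}^{-1}\|_{\mathrm{op}}\le2/\alpha$, hence only $\|H_{SS}^{-1}\|_\infty\le2\sqrt s/\alpha$. The paper relies on the same unproved bound, so this is a shared gap rather than one specific to your route.
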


\begin{remark}
The incoherence and beta-min conditions play the same roles as in high-dimensional regression, but they are imposed on the Hessian of the profiled weighted loss, which already accounts for within-subject dependence. Folded-concave penalties reduce shrinkage bias on strong signals, so the required beta-min constant is often milder than for convex $\ell_1$ penalties.
\end{remark}

\begin{theorem}[Oracle Asymptotic Normality]\label{thm:oracle}
Assume the conditions of Theorem~\ref{thm:support} and that $s=o(\sqrt N)$.
On the event $\{\hat S=S\}$, the active subvector $\hat\beta_S$ satisfies
\[
\sqrt{N}(\hat\beta_S-\beta_S^\star)\ \xrightarrow{d}\
N\big(0,\ G_S^{-1}\Psi_S G_S^{-1}\big),
\]
where $\nabla^2_{SS}\mathcal L_N(\beta^\star)\xrightarrow{p} G_S$ and $\Psi_S=\lim\limits_{N\to\infty}\Var\!\Big(\frac{1}{\sqrt N}\sum_{i=1}^n \nabla_S \ell_i(\beta^\star)\Big)$ when $N\to\infty$, with $\ell_i(\beta)=\frac{1}{2}\,(Y_i-X_i\beta)^\top \tilde W_i (Y_i-X_i\beta)$.
\end{theorem}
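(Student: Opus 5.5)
On the event $\{\hat S=S\}$, which has probability tending to one by Theorem~\ref{thm:support}, we reduce the problem to a low-dimensional M-estimation problem on $S$. There $\hat\beta_{S^c}=0$, and $\hat\beta_S$ solves the restricted stationarity equation
\[
\nabla_S\mathcal L_N(\hat\beta)+\big(p'_\lambda(|\hat\beta_j|)\,\mathrm{sign}(\hat\beta_j)\big)_{j\in S}=0 .
\]

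\textbf{Removing the penalty.} We first show the penalty term vanishes with probability tending to one. By Theorem~\ref{thm:estimation}, $\|\hat\beta-\beta^\star\|_\infty\le\|\hat\beta-\beta^\star\|_2=O_p(\sqrt{s\log p/N})$. We take $c_\beta$ large enough, relative to the constant $C_1/\alpha$ of Theorem~\ref{thm:nonasym}, that $\min_{j\in S}|\hat\beta_j|\ge (c_\beta-C_1\sqrt s/\alpha)\lambda\ge a\lambda$. If this $\sqrt s$ factor is too costly, we instead use the $\ell_\infty$ control that comes from the primal--dual witness argument behind Theorem~\ref{thm:support}. Assumption~\ref{ass_a5} then gives $p'_\lambda(|\hat\beta_j|)=0$ for all $j\in S$, so the stationarity equation reduces to $\nabla_S\mathcal L_N(\hat\beta_S,0)=0$. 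For the adaptive LASSO we would instead require the weights on $S$ to be $o_p(N^{-1/2})$; we note this case separately.

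\textbf{Exact expansion.} Because $\mathcal L_N$ is quadratic in $\beta$ for fixed weights and $D$, we get exactly
\[
0=\nabla_S\mathcal L_N(\beta^\star)+\nabla^2_{SS}\mathcal L_N(\beta^\star)(\hat\beta_S-\beta_S^\star),
\]
and hence
\[
\sqrt N(\hat\beta_S-\beta^\star_S)=-\big[\nabla^2_{SS}\mathcal L_N\big]^{-1}\sqrt N\,\nabla_S\mathcal L_N(\beta^\star).
\]
By hypothesis $\nabla^2_{SS}\mathcal L_N\to_p G_S$. We also need $G_S$ invertible, which follows from Assumption~\ref{ass_a7} on $S$-supported directions (these lie in the cone), so that the inverse converges in operator norm. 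Since $s$ grows, we need $\|\nabla^2_{SS}-G_S\|_{op}=o_p(1/\sqrt{s})$ combined with $\|\sqrt N\nabla_S\mathcal L_N\|_2=O_p(\sqrt s)$. The condition $s=o(\sqrt N)$ is what makes this error negligible: entrywise concentration at rate $\sqrt{\log p/N}$ gives operator-norm error $s\sqrt{\log p/N}$, and this product is small. The limit should be read along fixed linear combinations $u^\top$ with $\|u\|_2=1$, in the Cramér--Wold sense.

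\textbf{Score CLT and feasible-to-oracle replacement.} We split
\[
\sqrt N\nabla_S\mathcal L_N(\beta^\star)=\sqrt N\nabla_S\mathcal L_N^\circ(\beta^\star)+\sqrt N\big(\nabla_S\mathcal L_N-\nabla_S\mathcal L_N^\circ\big)(\beta^\star).
\]
The oracle part equals $-N^{-1/2}\sum_i X_{iS}^\top\tilde W_i^\circ(Z_ib_i+\varepsilon_i)$, a sum of independent mean-zero terms across subjects, using Assumption~\ref{ass_a1} and weights that are fixed functions of the covariates or limits. With $T_i\le T_{\max}$, $(2+\delta)$ moments, and sub-Gaussian rows (Assumption~\ref{ass_a2}), the Lyapunov condition holds for $u^\top$ times this sum. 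The Lindeberg--Feller CLT therefore gives convergence to $N(0,\Psi_S)$. Slutsky's lemma then yields the sandwich limit $G_S^{-1}\Psi_SG_S^{-1}$.

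\textbf{Main obstacle.} The hard step is the remainder. Assumption~\ref{ass_a6} only gives $\|\nabla\mathcal L_N-\nabla\mathcal L_N^\circ\|_\infty=o_p(\lambda)$, and $\sqrt N\lambda\asymp\sqrt{\log p}$ does not vanish. We therefore need the sharper statement $u^\top\sqrt N(\nabla_S\mathcal L_N-\nabla_S\mathcal L_N^\circ)=o_p(1)$. We would obtain it through a stochastic equicontinuity argument: treat the pilot-dependent weights and $\hat D$ as a finite-dimensional nuisance converging at rate $N^{-1/2}$, and use the smoothness of $\phi_1,\phi_2$ together with the mean-zero errors to show the remainder is a product of two small terms. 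Failing that, we would state it explicitly as a strengthened $\sqrt N$-level version of Assumption~\ref{ass_a6}.
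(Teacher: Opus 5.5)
Your skeleton matches the paper's proof. On $\{\hat S=S\}$ the folded-concave derivative vanishes on $S$, the quadratic loss gives the exact representation $\sqrt N(\hat\beta_S-\beta_S^\star)=-[\nabla^2_{SS}\mathcal L_N]^{-1}\sqrt N\,\xi_S$, and an LLN for the Hessian, a Lyapunov CLT over independent subjects and Slutsky finish the argument.

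In places you are more careful than the paper:
\begin{itemize}
\item You check $p'_\lambda(|\hat\beta_j|)=0$ through the primal--dual-witness $\ell_\infty$ bound. The paper only records $p'_\lambda(|\beta_j^\star|)=0$, which is not the quantity in the stationarity equation. Your $\ell_2$ route would indeed need $c_\beta\gtrsim\sqrt s$.
\item You keep the correct $\sqrt N$ scaling; the paper's display has $\xi_S/\sqrt N$.
\item You project on unit directions, so growing $s$ does not enter the Lyapunov bound.
\end{itemize}
Note, however, that the paper never splits the score into an oracle part and a remainder. It treats $\tilde W_i$ and $D$ as given (in effect, as functions of the covariates), applies the CLT directly to $-\sqrt N\xi_S$, and defines $\Psi_S$ through these same $\tilde W_i$. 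Assumption~\ref{ass_a6} plays no role there, and under that convention your remainder is identically zero.

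Two of your steps would not go through as written.

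\textbf{The stochastic-equicontinuity route.} It cannot upgrade Assumption~\ref{ass_a6} from $o_p(\lambda)$ to $o_p(N^{-1/2})$. The weights are built from pilot residuals that contain the very errors they multiply. Linearizing $\phi_1(\tilde\delta|\tilde r_{it}|)$ in $h=\tilde\beta-\beta^\star$ therefore produces a term $N^{-1/2}\sum_{i,t}X_{it,S}X_{it}^\top h\,c(\varepsilon_{it})$ with $c(\varepsilon)\propto\phi_1'(\delta|\varepsilon|/\sigma)\,|\varepsilon|\le 0$. Its mean is nonzero for any weight that actually downweights. The term is thus of order $\sqrt N\|h\|$, not a product of two vanishing factors.

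In addition, the nuisance is not a finite-dimensional $\sqrt N$-consistent parameter: $\tilde\beta\in\R^p$, and the $\tilde b_i$ are not consistent when $T_i\le T_{\max}$. With genuinely estimated weights, the $\sqrt N$-level condition has to be assumed, as in your fallback. This is what the paper's fixed-weight convention does implicitly.

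Relatedly, the mean-zero property of the score (in your version and in the paper's) requires $\tilde W_i$ to depend on $(X_i,Z_i)$ only. Limit weights that still depend on residuals would need something like conditional symmetry of $(b_i,\varepsilon_i)$, not just Assumption~\ref{ass_a1}.

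\textbf{The growing-$s$ bound.} With $\|\sqrt N\xi_S\|_2=O_p(\sqrt s)$ you need $\|\nabla^2_{SS}\mathcal L_N-G_S\|_{\mathrm{op}}=o_p(s^{-1/2})$. The entrywise route gives only $s\sqrt{\log p/N}$, so the product is $s^{3/2}\sqrt{\log p/N}$, which $s=o(\sqrt N)$ does not control. Use instead the sub-Gaussian operator-norm bound $O_p(\sqrt{s/N})$, obtained by a net argument using $\|\tilde W_i\|_{\mathrm{op}}\le 1$ and $T_i\le T_{\max}$. This yields $O_p(s/\sqrt N)=o_p(1)$, and it is exactly where $s=o(\sqrt N)$ enters. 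The paper's fixed-dimension CLT never uses that condition.
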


\begin{remark}
Once $\hat S=S$ and $p'_\lambda(t)=0$ on the active coordinates, the estimator behaves like an unpenalized fit restricted to $S$, leading to asymptotic normality. The limiting covariance depends on the profiled weighted loss and thus incorporates both correlation (through profiling) and robustness (through weights).
\end{remark}

\begin{proposition}[Breakdown point]\label{prop:bdp}
Assume there exists an event $\mathcal E_N$ with $\Prob(\mathcal E_N)\to 1$ such that, the pilot quantities used to construct $w_{it}$ remain bounded on $\mathcal E_N$; and on $\mathcal E_N$, all contaminated indices receive weights $w_{it}=0$ while all clean indices satisfy $w_{it}\ge c_w$. Then on $\mathcal E_N$, $\hat\beta$ depends only on the clean subsample and is insensitive to arbitrary values on the contaminated part.
\end{proposition}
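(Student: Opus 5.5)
On the event $\mathcal E_N$, the plan is to show directly that every ingredient of the objective in \eqref{eq:objective} is a function of the clean subsample only. Write $\mathcal C$ for the clean indices and $\mathcal C^c$ for the contaminated ones. On $\mathcal E_N$ we have $w_{it}=0$ for $(i,t)\in\mathcal C^c$. The main work is to see that the weights, the profiled random effects, and the effective weight matrices $\tilde W_i$ do not depend on the contaminated data in any way that survives.

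The first step treats the loss terms. In \eqref{eq:objective} each summand is $w_{it}(Y_{it}-X_{it}^\top\beta-Z_{it}^\top\hat b_i(\beta))^2$, so contaminated summands vanish identically in $\beta$. For $\hat b_i(\beta)$ I would use the closed form
\[
\hat b_i(\beta)=(Z_i^\top W_iZ_i+\hat D^{-1})^{-1}Z_i^\top W_i(Y_i-X_i\beta).
\]
Here $Z_i^\top W_i Z_i=\sum_t w_{it}Z_{it}Z_{it}^\top$ and $Z_i^\top W_i(Y_i-X_i\beta)=\sum_t w_{it}Z_{it}(Y_{it}-X_{it}^\top\beta)$ involve only clean $(i,t)$. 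Equivalently, $\tilde W_i$ has zero rows and columns at contaminated indices, since $\tilde W_i=W_i^{1/2}(I-W_i^{1/2}Z_i(\cdot)^{-1}Z_i^\top W_i^{1/2})W_i^{1/2}$. Hence the quadratic form $(Y_i-X_i\beta)^\top\tilde W_i(Y_i-X_i\beta)$ depends only on the clean coordinates of $(Y_i,X_i)$, with $Z_i$ entering only through clean rows. The penalty does not involve the data at all. So the whole criterion, viewed as a function of $\beta$, is identical for any two datasets that agree on $\mathcal C$, given the same $\hat D$ and the same clean-index weights.

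The second step, which is the main obstacle, is to show that the clean weights $w_{it}$ and $\hat D$ themselves do not move arbitrarily with the contaminated values. The weights involve $\tilde\delta$, $\tilde r_{it}$ and $d^2(X_{it})$, which are built from the pilots $(\tilde\beta,\tilde b_i,\tilde\sigma,\tilde\ell,\tilde\Sigma)$ and the empirical distribution of all residuals, so they can depend on contaminated data. I would handle this as the hypothesis permits: on $\mathcal E_N$ the pilot quantities remain bounded, and the clean weights lie in $[c_w,1]$. The approach is to read the conclusion as follows. The minimizer is the minimizer of a criterion in which contaminated observations enter with weight zero, and the remaining data-dependence passes only through bounded pilot-derived quantities in a compact range. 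Any perturbation of the contaminated values therefore cannot drive $\hat\beta$ to infinity. To make this concrete, I would show that $\hat\beta$ is a clean-data estimator applied to $\{(Y_{it},X_{it},Z_{it}):(i,t)\in\mathcal C\}$ with weights in $[c_w,1]$. Under Assumption~\ref{ass_a4}, this clean-sample problem has restricted curvature that is uniformly bounded below on $\mathcal E_N$, because $|\mathcal C|\ge(1-\epsilon)N$ and $c_w>0$. Combined with Theorem~\ref{thm:nonasym} applied to the clean problem, this gives $\|\hat\beta-\beta^\star\|_2$ bounded independently of the contaminated values.

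For $\hat D$ (the iterates \eqref{eq:D_update}), the same closed-form argument shows that $\hat b_i$ and $V_i$ depend only on clean data, given the previous $\hat D$. Induction over the iterations of Algorithm~\ref{alg:main}, starting from a bounded pilot $\tilde D^{(0)}$, then propagates this property. I would note explicitly that the interpretation ``depends only on the clean subsample'' is exact conditional on the weights and pilots. The claim of insensitivity to arbitrary contaminated values follows from the boundedness assumption on $\mathcal E_N$, which is the only place contamination could otherwise leak in.
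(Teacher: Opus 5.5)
Your proposal is correct in substance, but it argues differently from the paper.

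\textbf{The paper's route.} The paper's proof is asymptotic and aimed at a breakdown point. It introduces an assumption that is not in the proposition: a pilot with breakdown point $\epsilon^\star>0$. It then argues that, under a bounded pilot, the residual and leverage scores of contaminated points diverge, so the redescending $\phi_1,\phi_2$ make their weights negligible. It does not use $w_{it}=0$ directly. Finally it appeals to the curvature in Assumption~\ref{ass_a7} to keep the minimizer bounded, concluding $\liminf_N\epsilon_N(\hat\beta)\ge\epsilon^\star$.

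\textbf{Your route.} You use the hypothesis $w_{it}=0$ exactly and check algebraically where the contaminated rows disappear: the raw weighted sum, the profiled matrix $\tilde W_i$ (zero rows and columns), the closed form $\hat b_i(\beta)$, and the $V_i$, $\tilde D$, $\tilde\sigma^2$ updates. The paper never verifies the profiled and variance-component pieces. Your route yields an exact, non-asymptotic statement from the proposition's own hypotheses: given the weights and $\hat D$, the criterion is identical for any two samples agreeing on $\mathcal C$. Your caveat is also right. The clean weights depend on $\tilde\delta$, the pilot residuals and $(\tilde\ell,\tilde\Sigma)$, all computed from the full sample. So the literal claim ``depends only on the clean subsample'' holds only conditionally, and the paper's proof in effect establishes only the boundedness reading.

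\textbf{The step to tighten.} You appeal to Theorem~\ref{thm:nonasym} ``applied to the clean problem'' to get a bound that is uniform over contaminated values. That theorem needs $\lambda\ge 2\|\nabla\mathcal L_N(\beta^\star)\|_\infty$. Lemma~\ref{lem:grad} with Assumption~\ref{ass_a6} does not deliver this uniformly over clean weights that the contamination can move. A pilot pushed within its bounded range can make the clean weights correlate with $x_{itj}(Z_ib_i+\varepsilon_i)_t$, so the weighted score need not be $O_p(\sqrt{(\log p)/N})$. Moreover, the curvature comes from Assumption~\ref{ass_a7}, not from Assumption~\ref{ass_a4} alone.

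For mere boundedness you do not need the rate. Take the weighted-$\ell_1$ form analysed in the appendix with $\min_j\lambda_j>0$. Lemma~\ref{lem:Wtilde} and $w_{it}=0$ off $\mathcal C$ give
\[
\min_j\lambda_j\,\|\hat\beta\|_1\le Q_N(\hat\beta)\le Q_N(0)=\frac{1}{2N}\sum_{i=1}^n Y_i^\top\tilde W_iY_i\le\frac{1}{2N}\sum_{(i,t)\in\mathcal C}Y_{it}^2 .
\]
The right-hand side is free of contaminated values. When some $\lambda_j$ can vanish, as with SCAD/MCP-type LLA weights, this bound is unavailable, and you fall back on the same curvature appeal the paper makes.
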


\begin{remark}
This result is a breakdown-type statement: if contaminated observations receive (near-)zero weight while clean observations keep weights bounded below, then the fitted objective depends only on the clean part. We state it in this conditional form to avoid invoking a full finite-sample breakdown analysis for high-dimensional penalized mixed models.
\end{remark}

\begin{proposition}[Bounded local influence]\label{prop:if}
Consider the (unpenalized) estimating equation associated with $\mathcal L_N(\beta)$.
If $w_{it}\in[0,1]$ and the score contribution is multiplied by $w_{it}$, then the directional derivative of the estimating equation with respect to an $\epsilon$-contamination at a point $(x_0,y_0)$ is bounded by a constant multiple of $w(x_0,y_0)\|x_0\|\cdot |y_0-x_0^\top\beta^\star|$. In particular, when $w(x_0,y_0)\to 0$ for large residuals/leverage, the local influence vanishes.
\end{proposition}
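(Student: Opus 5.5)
The plan is a Hampel-style influence-function computation on the unpenalized weighted estimating equation. First I write the population version of the score. For a distribution $F$ of $(X,Z,Y)$ and weight function $w(x,y)\in[0,1]$, set
\[
\Psi(\beta;F)=\E_F\big[\,w(X,Y)\,X\,(Y-X^\top\beta)\,\big],
\]
taking the univariate (per-observation) form of the profiled score. For the cluster-level profiled form $X_i^\top\tilde W_i(Y_i-X_i\beta)$ I use the identity
\[
\tilde W_i=W_i^{1/2}\big(I+W_i^{1/2}Z_iDZ_i^\top W_i^{1/2}\big)^{-1}W_i^{1/2}.
\]
It shows that $\tilde W_i\preceq W_i$, and that every entry of $\tilde W_i$ involving a given index $(i,t)$ carries a factor $w_{it}^{1/2}$ on each side where it enters. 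Hence each observation's contribution to the score is still multiplied by its own weight, up to a bounded matrix factor. The bound on this factor uses that $D$ is positive definite with bounded eigenvalues and that $T_i\le T_{\max}$ (Assumption~\ref{ass_a1}).

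Next I perturb $F$ to $F_\epsilon=(1-\epsilon)F+\epsilon\,\delta_{(x_0,y_0)}$ and differentiate $\epsilon\mapsto\Psi(\beta^\star;F_\epsilon)$ at $\epsilon=0$. Since $\Psi(\beta^\star;F)=0$ by model~\eqref{eq:model} and $\E(b_i)=\E(\varepsilon_{it})=0$, this directional derivative equals the point-mass term
\[
w(x_0,y_0)\,x_0\,(y_0-x_0^\top\beta^\star).
\]
Taking norms gives the claimed bound $C\,w(x_0,y_0)\|x_0\|\,|y_0-x_0^\top\beta^\star|$. Here $C$ absorbs the bounded factor from the profiling identity, and also the bounded dependence of $\tilde W$ on $w$, since the map $w\mapsto\tilde W$ is smooth with bounded derivative on $[0,1]^{T_i}$. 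If one also wants the influence on $\hat\beta$ itself, I would multiply by $G^{-1}$, the inverse of the limit Hessian. That inverse is bounded by Assumption~\ref{ass_a2} together with the clean-mass condition Assumption~\ref{ass_a4}.

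For the vanishing-influence statement I use the form of \eqref{eq:weights}. Write $r_0=(y_0-x_0^\top\beta^\star)/\sigma$. With Huber-type $\phi_1(u)=\min(1,c/u)$, the product $w\,|y_0-x_0^\top\beta^\star|$ stays bounded in $y_0$ and tends to $0$ when combined with $\phi_2$. With Tukey weights, $w=0$ outside a compact set, so the influence is exactly zero there. Similarly, $\phi_2(\tilde\delta d^2(x_0))$ controls $\|x_0\|$: because $d^2(x_0)\gtrsim\|x_0\|^2/\lambda_{\max}(\tilde\Sigma)$, this factor dominates the growth of $\|x_0\|$ whenever $u\,\phi_2(u)\to0$.

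The main obstacle is the dependence of the weights on $F$ through the pilot quantities $(\tilde\beta,\tilde\sigma,\tilde\ell,\tilde\Sigma,\tilde\delta)$. Strictly, differentiating $\Psi$ also produces chain-rule terms of the form $\partial_{\text{pilot}}\Psi\cdot \mathrm{IF}(\text{pilot})$. I would handle these by treating the pilot as fixed, so that the derivative is taken with the weight function held fixed, which is what ``the score contribution is multiplied by $w_{it}$'' suggests. I would then remark that the extra terms are bounded whenever the pilot estimators have bounded influence functions, as MCD and robust REML do. Those terms are uniformly bounded in $(x_0,y_0)$, so they preserve boundedness. They do not vanish, however, so the vanishing claim is stated only for the direct term.
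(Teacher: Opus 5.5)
Your argument is correct at the paper's level of rigor and follows the same route as its proof. Both use a first-order (Hampel-type) perturbation of the unpenalized weighted score at the point mass $(x_0,y_0)$, control the profiling by $0\preceq\tilde W_i\preceq W_i$, and pass from the score to $\hat\beta$ through a nonsingular Jacobian (your $G^{-1}$); your version is more explicit than the paper's in checking how fast $\phi_1,\phi_2$ must decay and in isolating the pilot chain-rule terms, which the paper does not discuss.

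Two points are glossed over in both versions. First, $\Psi(\beta^\star;F)=0$ does not follow from $\E(b_i)=\E(\varepsilon_{it})=0$ alone once $w$ depends on $y$; it is a Fisher-consistency condition, e.g.\ conditionally symmetric errors with $w$ even in the residual. Second, in the cluster-profiled score a contaminated visit $t_0$ of subject $i$ also enters through cross terms such as $x_0(\tilde W_i)_{t_0s}(Y_{is}-X_{is}^\top\beta^\star)$ and $X_{is}(\tilde W_i)_{st_0}(y_0-x_0^\top\beta^\star)$, $s\neq t_0$, which carry the factor $w(x_0,y_0)$ but are not of the product form $w(x_0,y_0)\|x_0\|\,|y_0-x_0^\top\beta^\star|$, so the stated bound is really a statement about the per-observation score.
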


\begin{remark}
The influence of a contamination point is multiplied by its weight in the score, so redescending weights make extreme residual/leverage points have negligible first-order impact. This matches the algorithmic behavior: such points are progressively downweighted across iterations.
\end{remark}

%%%%%%%%%%%%%%%%%%%%%%%%%%%%%%%%%%%%%%%%%%%%%%%%%%%%%%%%%%%%%%%%%%%%%%%%
\section{Simulation Study}
\label{sec:simulation}

To evaluate the finite-sample performance of our proposed Longitudinal DAR-R method, we conduct a comprehensive simulation study. The objective is to assess its parameter estimation accuracy, variable selection consistency and prediction performance under various data contamination scenarios, benchmarked against standard non-robust and robust-but-non-sparse methods.

\subsection{Data Generation}
We simulate data from the longitudinal linear mixed-effects model:
\[
Y_{it}=X_{it}^\top \beta^\star + Z_{it}^\top b_i + \varepsilon_{it}, \quad i=1,\dots,n, \ t=1,\dots,T_i
\]
We investigate three sample sizes, $n \in \{100, 200, 300\}$ subjects, each with $T_i=5$ repeated measurements (a balanced design), for a total of $N=500$ observations. The fixed-effects dimension is $p=200$, with a true sparsity level of $s=10$.

\begin{itemize}
    \item \textbf{Fixed Effects ($\beta^\star$):} The true coefficient vector $\beta^\star$ is sparse. The first $s=10$ elements are set to $(2, -1.5, 1, -2.5, 1.8, 2.2, -1.2, 1.5, -2, 1)$, while the remaining $p-s=190$ elements are zero.
    
    \item \textbf{Random Effects ($b_i, Z_{it}$):} We employ a simple random intercept and random slope model. The covariate matrix for random effects is $Z_{it} = (1, t/5)^\top$. The random effects $b_i = (b_{i0}, b_{i1})^\top$ are drawn from $N(0, D)$, where $D$ is a diagonal covariance matrix $D = \mathrm{diag}(\sigma_{b0}^2, \sigma_{b1}^2)$ with $\sigma_{b0}^2 = 1$ and $\sigma_{b1}^2 = 0.5^2$.
    
    \item \textbf{Covariates ($X_{it}$):} The fixed-effect covariates $X_{it} \in \R^p$ are drawn from a multivariate normal distribution $N(0, \Sigma_X)$. The covariance matrix $\Sigma_X$ has an AR(1) autoregressive structure, where $(\Sigma_X)_{jk} = \rho^{|j-k|}$ with $\rho=0.5$. This induces moderate correlation between predictors.
    
    \item \textbf{Errors ($\varepsilon_{it}$):} For clean data, the within-subject errors $\varepsilon_{it}$ are drawn independently from $N(0, \sigma_\varepsilon^2)$ with $\sigma_\varepsilon^2=1$.
\end{itemize}

\subsection{Contamination Scenarios}
We investigate three distinct scenarios with a contamination proportion of $\pi=0.1$ (10\%).

\begin{description}
    \item[S1: Clean (Baseline)] This scenario has $\pi=0$ contamination. All data are generated as described in the DGP. This baseline serves to measure the efficiency loss of our robust method; ideally, its performance should be very close to non-robust methods when no outliers are present.
    
    \item[S2: Vertical Outliers] We randomly select $\pi N = 50$ (10\%) of the total observations $(i,t)$. For these contaminated observations, we replace their error term $\varepsilon_{it}$ with a draw from $N(20, 1)$. This simulates gross errors in the response variable only, without affecting the covariate space.
    
    \item[S3: Bad Leverage Points] This is the most challenging scenario. We randomly select $\pi n = 10$ (10\%) of the $i$-th subjects. For all $T_i=5$ observations belonging to these contaminated subjects, we replace their corresponding covariates and responses:
    \begin{itemize}
        \item $X_{it, j} \sim N(10, 0.1)$ for $j \in S$ (the true active set).
        \item $Y_{it} \sim N(30, 1)$.
    \end{itemize}
    This creates clusters of observations that are severe outliers in both the $X$-space (high leverage) and the $Y$-space, making them highly influential ``bad leverage" points.
\end{description}

\subsection{Competing Methods}
We compare the performance of our proposed method, \textbf{DAR-R} (using the SCAD penalty), against three competing methods:

\begin{enumerate}
    \item \textbf{DAR-R}: Our proposed method with the SCAD penalty.
    \item \textbf{Penalized-LME (glmmLasso):} A standard non-robust method for penalized linear mixed models, implemented using the \texttt{glmmLasso} R package. This is a non-robust, sparse benchmark.
    
    \item \textbf{Robust-LME (rlmer):} A standard robust LME estimator, implemented using the \texttt{robustlmm} R package \citep{koller2016}. This method is robust but not sparse (i.e., it does not perform variable selection). We fit it on all $p$ predictors.
    
    \item \textbf{LASSO:} The standard LASSO \citep{tibshirani1996} applied to the data while ignoring the longitudinal structure (i.e., treating all $N$ observations as i.i.d.). This is a non-robust, sparse benchmark that misspecifies the data structure.

    \item \textbf{Oracle-LME:} A standard non-robust LME (e.g., via \texttt{lme4}) fit only on the true active set $S$. This represents an unattainable gold standard performance.
\end{enumerate}
For all penalized methods (DAR-R, glmmLasso, LASSO), the regularization parameter $\lambda$ is selected using 5-fold cross-validation.

\subsection{Evaluation Metrics}
For each combination of sample size and contamination scenario, we repeat the simulation $R=200$ times and summarize performance across replications. We report means with medians  because several competitors can become unstable under contamination, and the median is more robust to occasional extreme runs.

Denote $S=\{j:\beta_j^\star\neq 0\}$ as the true active set with $|S|=s=10$, and let $S^c$ denote the inactive set. We evaluate four aspects of performance: estimation accuracy for the fixed effects, variable-selection quality, random-effect covariance estimation, and out-of-sample prediction.

\textbf{Estimation Error for Fixed Effects (active vs.\ inactive coordinates).} We report two mean squared error measures for $\hat\beta$, separating the signal and noise coordinates:
    \[
    \mathrm{MSE}(S)
    =
    \frac{1}{s}\|\hat\beta_S-\beta_S^\star\|_2^2,
    \qquad
    \mathrm{MSE}(S^c)
    =
    \frac{1}{p-s}\|\hat\beta_{S^c}-\beta_{S^c}^\star\|_2^2
    =
    \frac{1}{p-s}\|\hat\beta_{S^c}\|_2^2.
    \]
$\mathrm{MSE}(S)$ measures how accurately a method recovers the true nonzero coefficients (signal recovery), and is sensitive to both contamination-induced bias and over-shrinkage. In contrast, $\mathrm{MSE}(S^c)$ measures how well the method shrinks truly null coefficients toward zero (noise suppression), and is closely related to spurious leakage into inactive variables. It separately helps distinguish methods that estimate strong signals well from methods that mainly achieve sparsity by aggressive shrinkage.

\textbf{Variable Selection Accuracy (TP/FP).} We evaluate support recovery using the numbers of true positives and false positives:
    \[
    \mathrm{TP}
    =
    \sum_{j\in S}\ind\!\big(|\hat\beta_j|>\tau\big),
    \qquad
    \mathrm{FP}
    =
    \sum_{j\notin S}\ind\!\big(|\hat\beta_j|>\tau\big),
    \]
where $\tau$ is a small numerical threshold (we use $\tau=10^{-8}$ in implementation) to avoid treating machine-level noise as a selected variable.
    
Here, $\mathrm{TP}$ is the number of truly active coefficients correctly identified (maximum $10$ in our setup), while $\mathrm{FP}$ is the number of inactive coefficients incorrectly selected. A good method should achieve a high TP and a low FP simultaneously. This pair complements $\mathrm{MSE}(S)$ and $\mathrm{MSE}(S^c)$: TP/FP captures selection behavior, whereas MSE captures estimation magnitude error.

\textbf{Random-Effect Covariance Estimation Error.} To assess estimation of the random-effects covariance matrix, we report the Frobenius norm error
    \[
    \|\hat D-D\|_{\mathrm{F}},
    \]
where $D$ is the true covariance matrix used to generate the random intercept and random slope. This metric evaluates how well each method recovers the within-subject dependence structure induced by the random effects.
    
For methods that do not explicitly estimate a mixed-model covariance structure (e.g., the i.i.d.\ LASSO benchmark), this metric is not directly applicable and is reported as missing or not available. For methods with a simplified or misspecified random-effects structure, the reported value should be interpreted with caution, since it conflates estimation error and model misspecification.

\textbf{Prediction Error.} We evaluate out-of-sample prediction on an independently generated clean test set with $n_{\text{test}}=100$ subjects and $T_i=5$ repeated measurements per subject, so $N_{\text{test}}=500$. The mean squared prediction error (MSPE) is
    \[
    \mathrm{MSPE}
    =
    \frac{1}{N_{\text{test}}}
    \sum_{i=1}^{n_{\text{test}}}
    \left\|
    Y_i^{\mathrm{new}}
    -
    X_i^{\mathrm{new}}\hat\beta
    -
    Z_i^{\mathrm{new}}\hat b_i(\hat\beta)
    \right\|_2^2,
    \]
where $\hat b_i(\hat\beta)$ denotes the predicted random effects for subject $i$ under the fitted model. Using a clean test set isolates the generalization performance of each method from contamination in the training sample. This is especially important in our setting: a robust method should not only resist outliers during estimation, but also preserve predictive accuracy on future non-contaminated data.

Overall, these metrics jointly assess the main goals of the proposed method: robust estimation of high-dimensional fixed effects, reliable support recovery, stable estimation of subject-level dependence, and accurate prediction in longitudinal settings.

\subsection{Main Results}

Tables~\ref{tab:sim_mse100_combined}--\ref{tab:sim_mse300_combined} and Tables~\ref{tab:sim_n100}--\ref{tab:sim_n300}, together with Figures~\ref{fig:mseSc_boxplot}--\ref{fig:mspe_boxplot}, provide a coherent picture of the finite-sample behavior of the competing methods under clean data (S1), vertical outliers (S2), and bad leverage contamination (S3). Across all sample sizes $n\in\{100,200,300\}$, DAR-R is the strongest overall method among the practically implementable competitors when the target is robust high-dimensional longitudinal mixed-effects estimation rather than prediction alone. In particular, DAR-R consistently exhibits near-oracle accuracy on the active coefficients, the smallest error on inactive coefficients, the best TP/FP balance for support recovery, and the smallest random-effects covariance estimation error. These conclusions are stable across the three contamination regimes and become even clearer as $n$ increases.

We first discuss fixed-effect estimation accuracy through $\mathrm{MSE}(S)$ and $\mathrm{MSE}(S^c)$, reported in Tables~\ref{tab:sim_mse100_combined}--\ref{tab:sim_mse300_combined}. The active-set error $\mathrm{MSE}(S)$ evaluates recovery of the true nonzero coefficients, whereas $\mathrm{MSE}(S^c)$ quantifies leakage into truly null coordinates. For $\mathrm{MSE}(S)$, Oracle-LME serves as an unattainable upper benchmark, and DAR-R is consistently the best practical method, remaining very close to the oracle values in all scenarios. At $n=100$, DAR-R has $\mathrm{MSE}(S)$ around $4.00$--$4.19$ (all values in these three tables are scaled by $10^3$), while other methods show larger values. The same pattern persists at $n=200$ and $n=300$: DAR-R decreases to about $1.93$--$2.01$ and then $1.27$--$1.36$, closely tracking Oracle-LME (about $1.96$--$2.08$ and then $1.28$--$1.37$), while the competing practical methods remain substantially larger. Figure~\ref{fig:mseS_boxplot} confirmed that the DAR-R boxplots are concentrated near the oracle level with relatively small dispersion, whereas Penalized-LME and Robust-LME are much higher, and Robust-LME shows visibly larger variability in several panels.

The advantage of DAR-R is even more pronounced for $\mathrm{MSE}(S^c)$, which is the most direct measure of spurious coefficient leakage in the high-dimensional setting. Across all sample sizes and all contamination scenarios, DAR-R has the smallest $\mathrm{MSE}(S^c)$ by a very large margin. At all sample sizes, DAR-R has by far the smallest $\mathrm{MSE}(S^c)$. For example, it is about $0.013$--$0.024$ at $n=100$ and $0.005$--$0.007$ at $n=300$, whereas the competing methods remain much larger (roughly $0.23$--$1.65$ across methods and scenarios). This large and persistent gap shows that DAR-R is substantially more effective at suppressing leakage into inactive coefficients. Figure~\ref{fig:mseSc_boxplot} makes this separation especially transparent on the $\log_{10}$ scale: DAR-R lies at the bottom of every panel with extremely tight spread, while all alternatives are shifted upward by one to two orders of magnitude. This is strong empirical evidence that the doubly adaptive weighting successfully protects sparse estimation from both response contamination and leverage contamination.

The support recovery results in Tables~\ref{tab:sim_n100}--\ref{tab:sim_n300} strongly reinforce the robustness advantage of DAR-R. It achieves $\mathrm{TP}=10.00$ in all settings and keeps FP low (about $1.3$--$2.1$ across all $n$ and scenarios). By contrast, Penalized-LME and Robust-LME both miss true signals and over-select noise (typically TP $\approx 7.6$--$9.5$ and FP $\approx 15$--$16$), while LASSO also attains $\mathrm{TP}=10.00$ but with extremely large FP (about $31$--$34$). Thus, DAR-R combines perfect sensitivity with strong specificity, and this remains stable under both S2 and S3. A similarly clear pattern appears for covariance estimation. The Frobenius error $\|\hat D-D\|_F$ is smallest for DAR-R in every setting, and the gap is substantial. For instance, DAR-R is about $118$--$120$ at $n=100$ and improves to about $56$--$62$ at $n=300$ (under the $10^3$ scaling), whereas the competing mixed-model methods remain much larger (roughly $270$--$356$). Figure~\ref{fig:cov_error_boxplot} supports this numerically, showing that DAR-R not only attains the lowest covariance error but also has the tightest boxplots. This indicates that the doubly adaptive weighting stabilizes both fixed-effect estimation and the random-effects covariance update.

The MSPE results in Tables~\ref{tab:sim_mse100_combined}--\ref{tab:sim_mse300_combined} and Figure~\ref{fig:mspe_boxplot} require a more nuanced interpretation. Because prediction is evaluated on a separately generated \emph{clean} test set, LASSO often attains the lowest MSPE, while DAR-R is typically close to Oracle-LME and substantially better than Penalized-LME and Robust-LME. This does not contradict the main goal of the paper: LASSO achieves lower MSPE at the cost of severe over-selection (FP $\approx 32$--$34$) and no meaningful covariance recovery, whereas DAR-R maintains competitive prediction while delivering much stronger structural recovery and longitudinal covariance estimation.

Finally, the sample-size trend is stable and aligns with theory. As $n$ increases from 100 to 300, DAR-R improves monotonically in $\mathrm{MSE}(S)$, $\mathrm{MSE}(S^c)$, and covariance error, and its boxplots become more concentrated. The relative ordering of methods remains essentially unchanged across $n$ and across contamination scenarios, which provides strong empirical support for the robustness and consistency results in Section~\ref{sec:theory}. Overall, the six tables and four boxplots show that DAR-R achieves the best balance of robustness, sparsity recovery, interpretability, and covariance learning in high-dimensional longitudinal mixed-effects models.

\begin{table}[h]
\centering
\caption{Simulation Results ($n=100$): Estimation Error (multiplied $10^3$).}
\label{tab:sim_mse100_combined}
\resizebox{\textwidth}{!}{%
\begin{tabular}{l|ccc|ccc|ccc}
\toprule
\multirow{2}*{Method} & \multicolumn{3}{c|}{S1: Clean} & \multicolumn{3}{c|}{S2: Vertical Outliers} & \multicolumn{3}{c}{S3: Bad Leverage} \\
\cline{2-10} \addlinespace
& MSE(S) & MSE($S^c$) & MSPE & MSE(S) & MSE($S^c$) & MSPE & MSE(S) & MSE($S^c$) & MSPE \\ \midrule
DAR-R & 4.0014 & 0.0175 & 840.1898 & 4.0222 & 0.0128 & 840.0912 & 4.1893 & 0.0239 & 841.9487 \\
 & (3.7584) & (0.0000) & (837.5673) & (3.6638) & (0.0000) & (837.6220) & (3.8741) & (0.0000) & (839.9249) \\
Penalized-LME & 438.6862 & 1.6500 & 2561.3839 & 435.7005 & 1.6430 & 2537.4974 & 430.7705 & 1.5535 & 2538.2281 \\
 & (446.7335) & (1.5128) & (2641.2302) & (437.7787) & (1.4372) & (2544.0819) & (453.4023) & (1.4815) & (2510.3144) \\
Robust-LME & 276.8216 & 1.3124 & 1996.1087 & 282.6346 & 1.3244 & 2041.7149 & 274.1593 & 1.2668 & 2002.8412 \\
 & (296.3997) & (1.1851) & (2054.0077) & (304.4818) & (1.1456) & (2050.5558) & (281.7313) & (1.1766) & (2077.0142) \\
LASSO & 32.6688 & 0.7961 & 836.7042 & 34.6796 & 0.7475 & 835.2566 & 31.9727 & 0.7896 & 830.6293 \\
 & (30.9947) & (0.6866) & (829.7465) & (33.7228) & (0.6573) & (836.2839) & (30.6092) & (0.7143) & (830.5921) \\
Oracle-LME & 4.1594 & 0.1891 & 895.6626 & 4.1509 & 0.1914 & 895.9017 & 4.3045 & 0.1924 & 896.4514 \\
 & (3.8762) & (0.1672) & (889.9236) & (3.6481) & (0.1677) & (889.1046) & (3.8583) & (0.1684) & (896.5189) \\ \bottomrule
\end{tabular}}
\end{table}

\begin{table}[!]
\centering
\caption{Simulation Results ($n=200$): Estimation Error (multiplied $10^3$).}
\label{tab:sim_mse200_combined}
\resizebox{\textwidth}{!}{%
\begin{tabular}{l|ccc|ccc|ccc}
\toprule
\multirow{2}*{Method} & \multicolumn{3}{c|}{S1: Clean} & \multicolumn{3}{c|}{S2: Vertical Outliers} & \multicolumn{3}{c}{S3: Bad Leverage} \\
\cline{2-10} \addlinespace
& MSE(S) & MSE($S^c$) & MSPE & MSE(S) & MSE($S^c$) & MSPE & MSE(S) & MSE($S^c$) & MSPE \\ \midrule
DAR-R & 2.0141 & 0.0089 & 830.7269 & 1.9291 & 0.0056 & 823.7188 & 2.0034 & 0.0105 & 816.7024 \\
 & (1.8200) & (0.0000) & (831.0803) & (1.8057) & (0.0000) & (825.1815) & (1.8849) & (0.0000) & (817.6517) \\
Penalized-LME & 354.7264 & 0.6495 & 2181.6711 & 345.7907 & 0.6865 & 2160.6638 & 348.6577 & 0.6197 & 2168.6971 \\
 & (328.8987) & (0.5762) & (2126.8027) & (326.1388) & (0.6164) & (2113.6322) & (330.5244) & (0.5397) & (2110.1389) \\
Robust-LME & 174.0144 & 0.4714 & 1537.4667 & 167.0265 & 0.4853 & 1501.8733 & 168.1292 & 0.4550 & 1502.5344 \\
 & (138.3350) & (0.3958) & (1409.0990) & (135.8225) & (0.4252) & (1380.0560) & (137.8146) & (0.3845) & (1390.8322) \\
LASSO & 16.3756 & 0.3587 & 752.1113 & 15.8671 & 0.3409 & 745.3660 & 16.4342 & 0.3713 & 742.8990 \\
 & (15.1770) & (0.3262) & (750.8083) & (14.4914) & (0.3062) & (738.7506) & (15.5637) & (0.3434) & (739.6295) \\
ORACLE & 2.0811 & 0.0945 & 876.1544 & 1.9552 & 0.0911 & 869.0523 & 2.0455 & 0.0907 & 862.0626 \\
 & (1.8718) & (0.0821) & (875.7225) & (1.7789) & (0.0794) & (869.5958) & (1.9169) & (0.0840) & (859.8074) \\ \bottomrule
\end{tabular}}
\end{table}

\begin{table}[!]
\centering
\caption{Simulation Results ($n=300$): Estimation Error (multiplied $10^3$).}
\label{tab:sim_mse300_combined}
\resizebox{\textwidth}{!}{%
\begin{tabular}{l|ccc|ccc|ccc}
\toprule
\multirow{2}*{Method} & \multicolumn{3}{c|}{S1: Clean} & \multicolumn{3}{c|}{S2: Vertical Outliers} & \multicolumn{3}{c}{S3: Bad Leverage} \\
\cline{2-10} \addlinespace
& MSE(S) & MSE($S^c$) & MSPE & MSE(S) & MSE($S^c$) & MSPE & MSE(S) & MSE($S^c$) & MSPE \\ \midrule
DAR-R & 1.3250 & 0.0047 & 830.6165 & 1.2717 & 0.0067 & 816.7730 & 1.3601 & 0.0074 & 828.9421 \\
 & (1.2181) & (0.0000) & (829.2662) & (1.1283) & (0.0000) & (816.0352) & (1.2471) & (0.0000) & (824.7018) \\
Penalized-LME & 274.3064 & 0.3715 & 1885.0329 & 282.8742 & 0.3590 & 1902.8627 & 281.5650 & 0.3904 & 1913.6558 \\
 & (315.0260) & (0.3274) & (1889.0135) & (314.0539) & (0.3313) & (1845.2760) & (315.1112) & (0.3488) & (1887.9920) \\
Robust-LME & 89.8391 & 0.2437 & 1211.0292 & 102.2153 & 0.2323 & 1252.6106 & 100.1315 & 0.2608 & 1252.0943 \\
 & (128.0551) & (0.2093) & (1227.1709) & (128.7315) & (0.2200) & (1264.7300) & (127.9165) & (0.2268) & (1247.3950) \\
LASSO & 10.0129 & 0.2556 & 730.6362 & 10.3956 & 0.2253 & 718.1238 & 10.5066 & 0.2377 & 725.8174 \\
 & (9.4367) & (0.2379) & (728.4128) & (10.1073) & (0.2065) & (718.8750) & (9.9150) & (0.2062) & (721.5650) \\
ORACLE & 1.3588 & 0.0596 & 873.2835 & 1.2836 & 0.0557 & 858.5259 & 1.3747 & 0.0600 & 871.9683 \\
 & (1.2043) & (0.0567) & (874.7262) & (1.1273) & (0.0487) & (858.0622) & (1.2547) & (0.0515) & (866.1297) \\ \bottomrule
\end{tabular}}
\end{table}

\begin{table}[!]
\centering
\caption{Simulation Results ($n=100$): Selection (TP/FP) and Covariance Error (multiplied $10^3$).}
\label{tab:sim_n100}
\scalebox{0.75}{%
\begin{tabular}{@{}l|cc|cc|cc@{}}
\toprule
\multirow{2}*{Method} & \multicolumn{2}{c|}{S1: Clean} & \multicolumn{2}{c|}{S2: Vertical Outliers} & \multicolumn{2}{c}{S3: Bad Leverage} \\
\cline{2-7} \addlinespace
& TP / FP & $\|\hat{D}-D\|_F$ & TP / FP & $\|\hat{D}-D\|_F$ & TP / FP & $\|\hat{D}-D\|_F$ \\ \midrule
DAR-R & 10.00 / 1.58 & 117.59 (101.18) & 10.00 / 1.31 & 118.08 (98.12) & 10.00 / 2.13 & 119.67 (107.62) \\
Penalized-LME & 7.60 / 16.43 & 317.08 (304.35) & 7.68 / 16.38 & 314.51 (299.07) & 7.66 / 16.38 & 322.69 (308.48) \\
Robust-LME & 8.46 / 16.55 & 336.77 (297.42) & 8.50 / 16.51 & 356.33 (304.50) & 8.48 / 16.52 & 339.91 (295.79) \\
LASSO & 10.00 / 33.75 & - & 10.00 / 32.69 & - & 10.00 / 33.81 & - \\
Oracle-LME & 10.00 / 6.00 & 295.42 (270.65) & 10.00 / 6.00 & 294.67 (272.18) & 10.00 / 6.00 & 299.59 (271.04) \\ \bottomrule
\end{tabular}}
\end{table}

\begin{table}[!]
\centering
\caption{Simulation Results ($n=200$): Selection (TP/FP) and Covariance Error (multiplied $10^3$).}
\label{tab:sim_n200}
\scalebox{0.75}{%
\begin{tabular}{@{}l|cc|cc|cc@{}}
\toprule
\multirow{2}*{Method} & \multicolumn{2}{c|}{S1: Clean} & \multicolumn{2}{c|}{S2: Vertical Outliers} & \multicolumn{2}{c}{S3: Bad Leverage} \\
\cline{2-7} \addlinespace
& TP / FP & $\|\hat{D}-D\|_F$ & TP / FP & $\|\hat{D}-D\|_F$ & TP / FP & $\|\hat{D}-D\|_F$ \\ \midrule
DAR-R & 10.00 / 1.41 & 84.04 (71.64) & 10.00 / 1.32 & 73.11 (64.81) & 10.00 / 1.85 & 84.38 (73.57) \\
Penalized-LME & 8.06 / 15.95 & 307.88 (300.23) & 8.09 / 15.91 & 300.07 (290.46) & 8.10 / 15.90 & 299.43 (288.70) \\
Robust-LME & 9.01 / 15.99 & 290.22 (270.37) & 9.05 / 15.96 & 302.23 (276.26) & 9.06 / 15.95 & 285.94 (263.93) \\
LASSO & 10.00 / 32.84 & - & 10.00 / 32.24 & - & 10.00 / 33.14 & - \\
Oracle-LME & 10.00 / 6.00 & 284.96 (267.04) & 10.00 / 6.00 & 280.82 (263.25) & 10.00 / 6.00 & 277.45 (262.09) \\ \bottomrule
\end{tabular}}
\end{table}

\begin{table}[!]
\centering
\caption{Simulation Results ($n=300$): Selection (TP/FP) and Covariance Error (multiplied $10^3$).}
\label{tab:sim_n300}
\scalebox{0.75}{%
\begin{tabular}{@{}l|cc|cc|cc@{}}
\toprule
\multirow{2}*{Method} & \multicolumn{2}{c|}{S1: Clean} & \multicolumn{2}{c|}{S2: Vertical Outliers} & \multicolumn{2}{c}{S3: Bad Leverage} \\
\cline{2-7} \addlinespace
& TP / FP & $\|\hat{D}-D\|_F$ & TP / FP & $\|\hat{D}-D\|_F$ & TP / FP & $\|\hat{D}-D\|_F$ \\ \midrule
DAR-R & 10.00 / 1.41 & 55.65 (49.96) & 10.00 / 1.74 & 62.37 (51.13) & 10.00 / 1.87 & 60.20 (50.33) \\
Penalized-LME & 8.47 / 15.54 & 287.13 (280.96) & 8.45 / 15.56 & 288.64 (282.29) & 8.42 / 15.58 & 291.50 (283.23) \\
Robust-LME & 9.45 / 15.55 & 272.89 (256.37) & 9.43 / 15.57 & 276.66 (262.02) & 9.41 / 15.60 & 279.51 (261.04) \\
LASSO  & 10.00 / 33.36 & - & 10.00 / 31.61 & - & 10.00 / 32.92 & - \\
Oracle-LME & 10.00 / 6.00 & 271.13 (260.25) & 10.00 / 6.00 & 272.35 (259.04) & 10.00 / 6.00 & 277.41 (262.68) \\ \bottomrule
\end{tabular}}
\end{table}

\begin{figure}[!]
\centering
\makebox[\textwidth][c]{\includegraphics[width=0.95\textwidth]{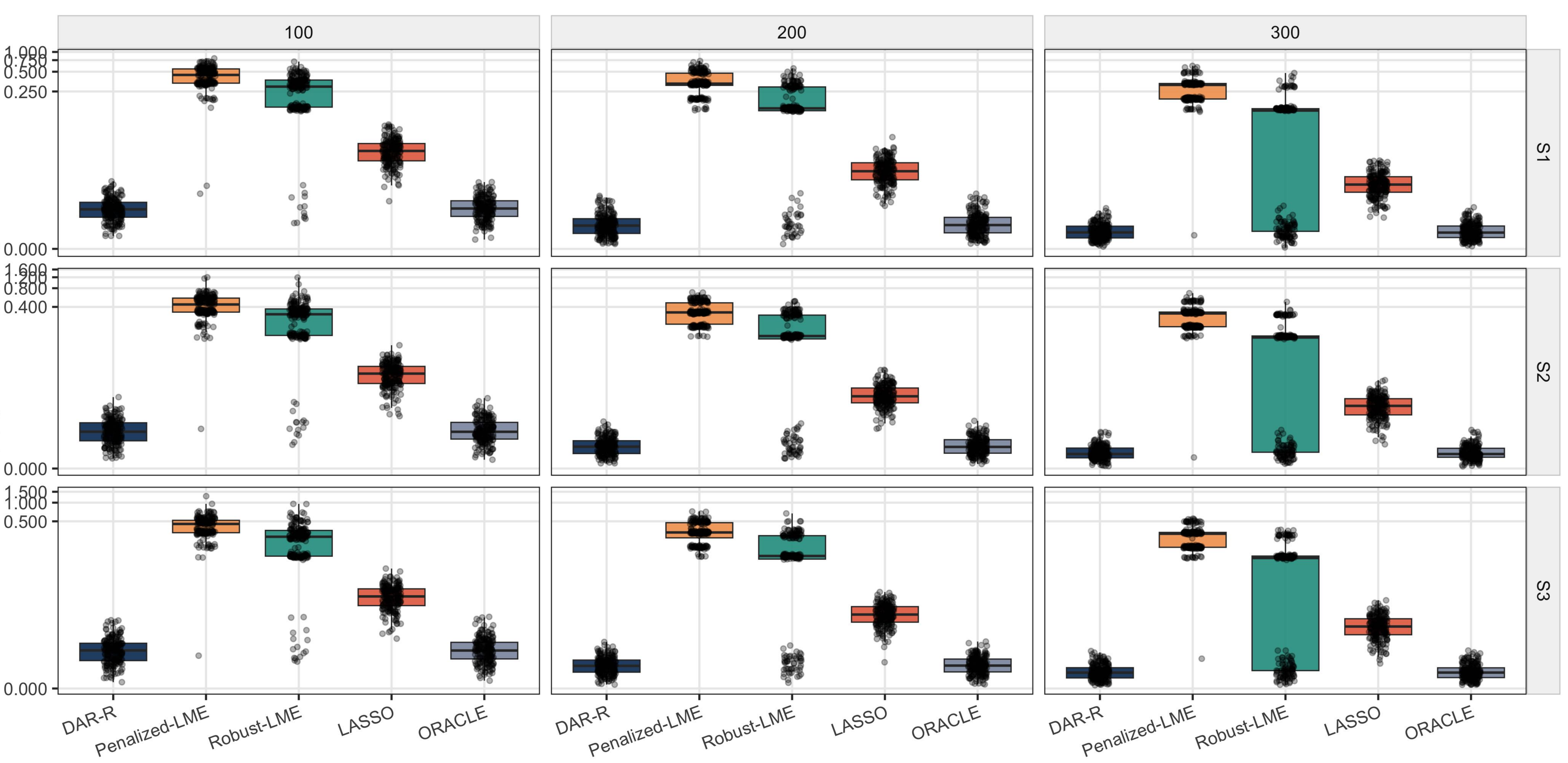}}
\caption{Boxplots of $\mathrm{MSE}(S)$ under scenarios S1--S3. All values are $\log_{10}$-transformed).}
\label{fig:mseS_boxplot}
\end{figure}

\begin{figure}[h]
\centering
\makebox[\textwidth][c]{\includegraphics[width=0.95\textwidth]{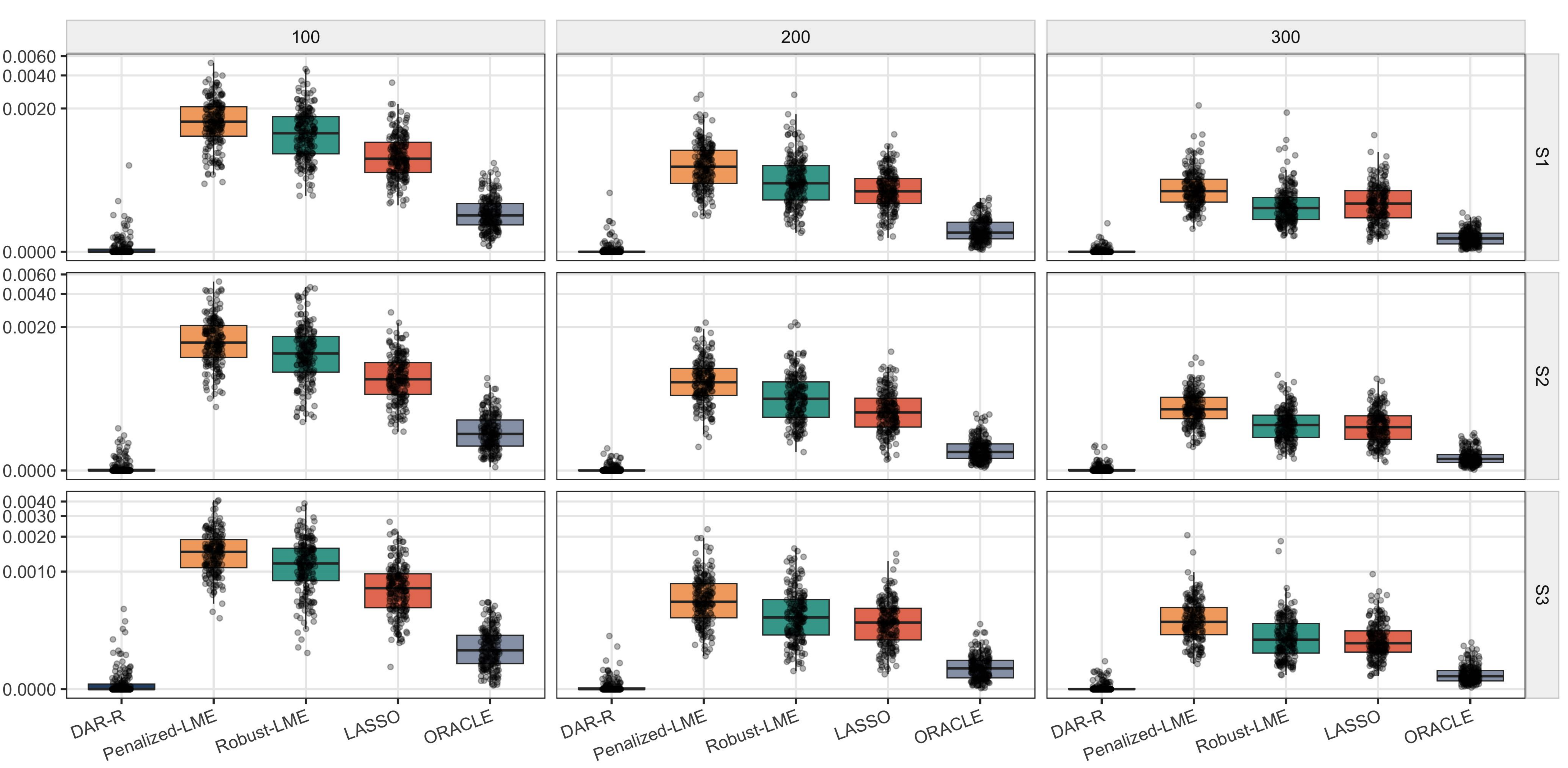}}
\caption{Boxplots of $\mathrm{MSE}(S^c)$ under scenarios S1--S3. All values are $\log_{10}$-transformed.}
\label{fig:mseSc_boxplot}
\end{figure}

\begin{figure}[!]
\centering
\makebox[\textwidth][c]{\includegraphics[width=0.95\textwidth]{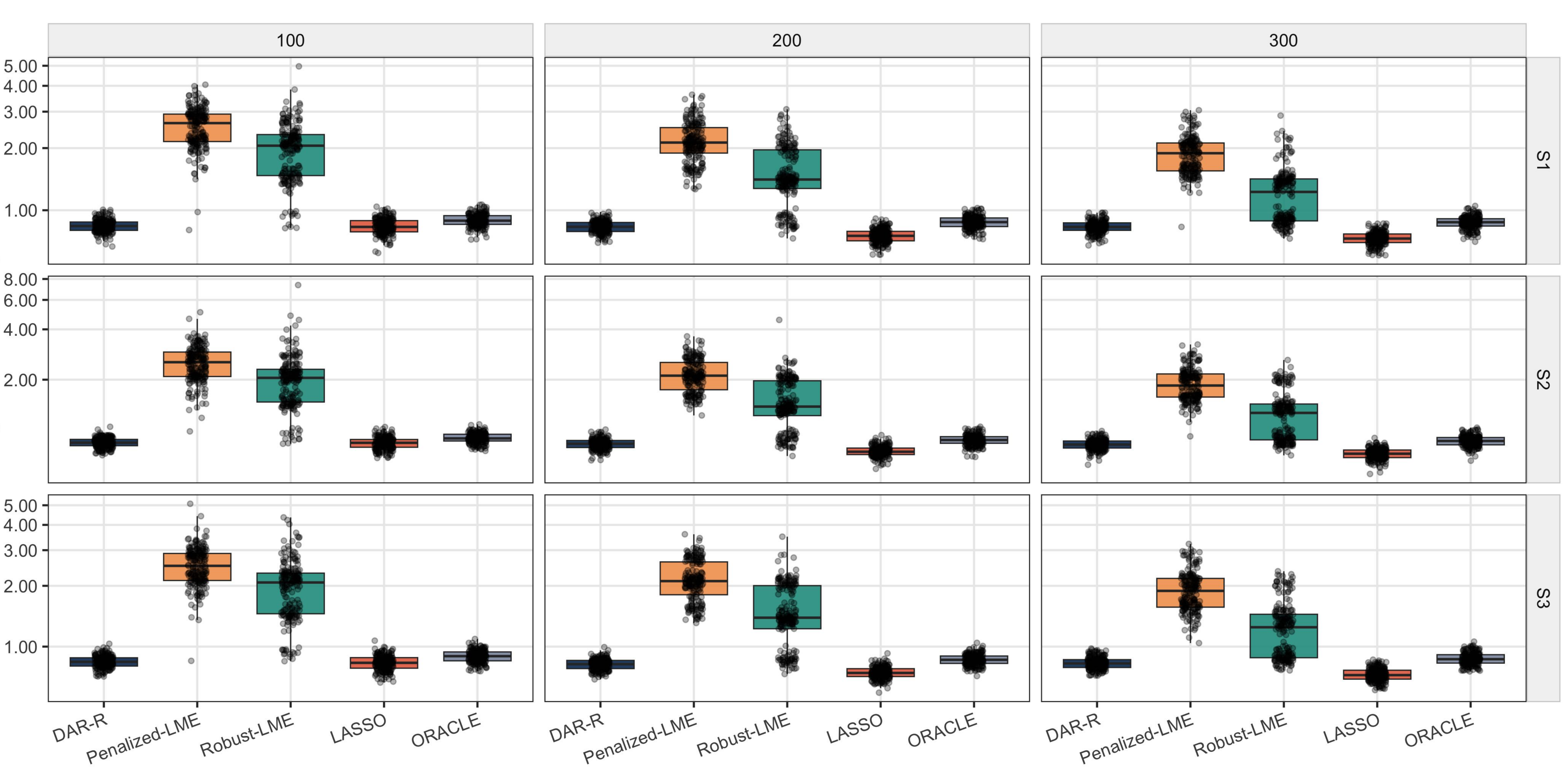}}
\caption{Boxplots of MSPE under scenarios S1--S3. All values are $\log_{10}$-transformed.}
\label{fig:mspe_boxplot}
\end{figure}

\begin{figure}[h]
\centering
\makebox[\textwidth][c]{\includegraphics[width=0.95\textwidth]{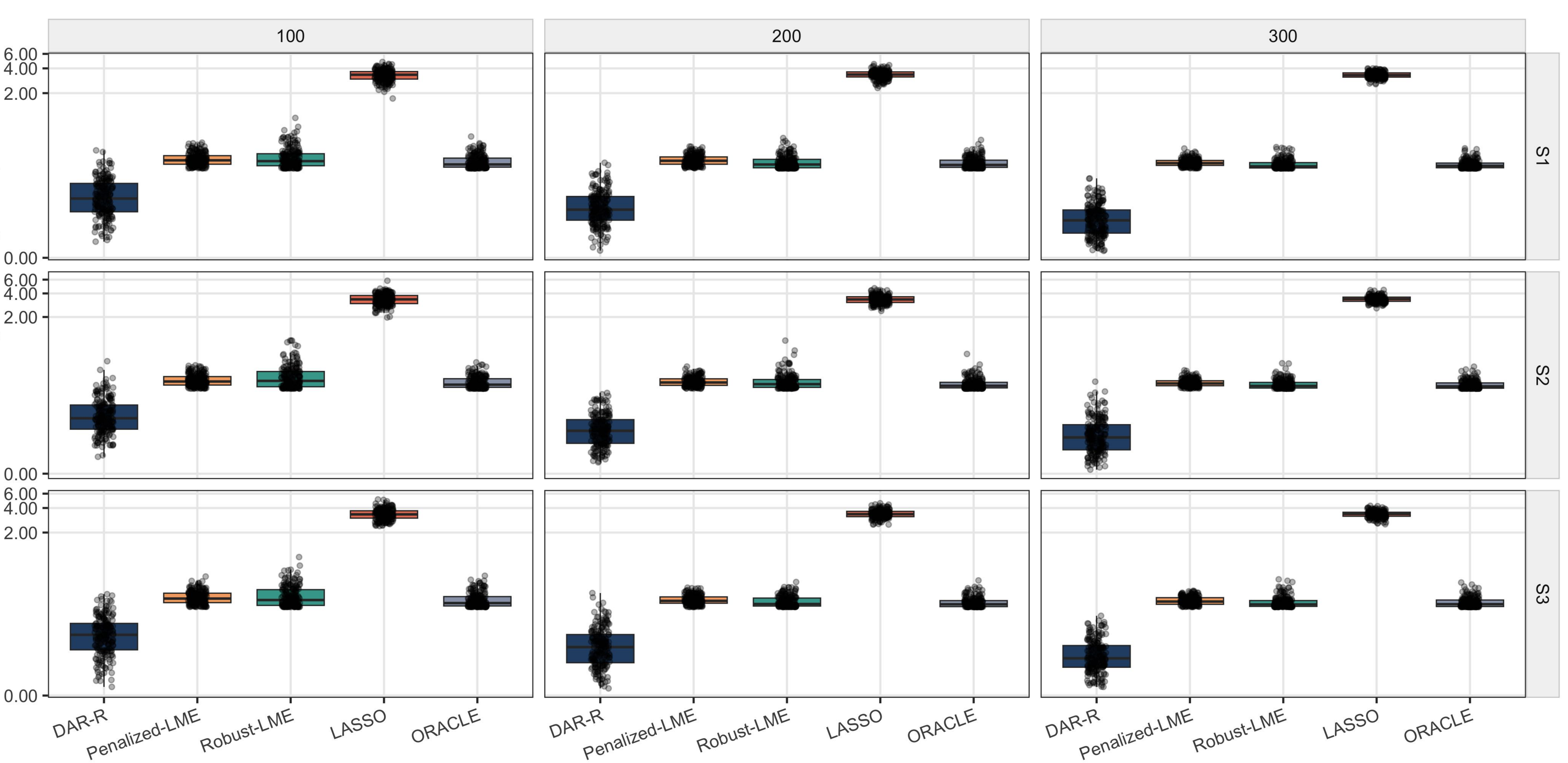}}
\caption{Boxplots of covariance estimation error for the random-effect. All values are $\log_{10}$-transformed.}
\label{fig:cov_error_boxplot}
\end{figure}

%%%%%%%%%%%%%%%%%%%%%%%%%%%%%%%%%%%%%%%%%%%%%%%%%%%%%%%%%%%%%%%%%%%%%%%%
\section{Application Study}\label{sec:application}
\subsection{Data Description}

We illustrate the proposed DAR-R method using the TADPOLE challenge data\footnote{\texttt{https://tadpole.grand-challenge.org/Data/}}, which is derived from the Alzheimer's Disease Neuroimaging Initiative (ADNI) and was designed for forecasting longitudinal Alzheimer's disease progression. In the TADPOLE framework, each row of the main table corresponds to one subject visit and each column corresponds to a feature/measurement, making it a natural testbed for high-dimensional longitudinal modeling. The challenge design also explicitly distinguishes longitudinal training and prediction subsets (D1 and D2), which is well aligned with our objective of robust estimation and out-of-sample validation in a repeated-measures setting.

Our primary analysis uses the file \texttt{TADPOLE\_D1\_D2.csv}.The data contain repeated observations for the same subject indexed by \texttt{RID}, with visit identifiers such as \texttt{VISCODE} and examination dates \texttt{EXAMDATE}. This yields a genuinely longitudinal structure suitable for mixed-effects modeling. The file also contains a large number of candidate predictors (demographics, cognitive scores, MRI-derived measurements, PET/biomarker variables, genetics, and other clinical measurements), which is appropriate for the high-dimensional sparse fixed-effect component in DAR-R.

We focus on a continuous longitudinal outcome and take $Y_{it}=\texttt{ADAS13}_{it}$, the ADAS-Cog 13 score for subject $i$ at visit $t$, as the primary response. This choice is clinically meaningful and also consistent with the TADPOLE challenge, where ADAS-Cog 13 is one of the key forecast targets. As a sensitivity analysis, the same pipeline can be repeated with \texttt{MMSE} or \texttt{CDRSB} as the outcome.

The time variable is constructed from \texttt{EXAMDATE}. For each subject, we define
\[
t_{it}=\frac{\text{EXAMDATE}_{it}-\text{EXAMDATE}_{i,\mathrm{baseline}}}{30.44},
\]
i.e., months from the subject-specific baseline visit. This creates a continuous follow-up time scale and allows a random-slope specification. We use the random-effects design
\[
Z_{it}=(1,\; t_{it})^\top,
\]
so that the model includes subject-specific random intercepts and random time slopes.

For the fixed effects $X_{it}$, we include a clinically interpretable core set (e.g., age, sex, education, APOE4) together with a high-dimensional block of imaging/biomarker variables available in \texttt{TADPOLE\_D1\_D2.csv}. In practice, we recommend starting from MRI-derived volumetric features (e.g., hippocampus, entorhinal cortex, ventricles, whole brain) and then expanding to additional modalities if missingness is manageable. Because the TADPOLE table is very wide, we apply a prespecified screening and preprocessing pipeline (described below) before fitting DAR-R.

To mimic the original TADPOLE longitudinal prediction setup, we exploit the dataset indicators \texttt{D1} and \texttt{D2}. In particular, D1 is used as the primary development/training sample, while D2 is used as an external longitudinal evaluation set whenever possible. This split is attractive because it respects the intended challenge design and reduces the risk of optimistic in-sample assessment.

\subsection{Implementation}

We first preprocess the TADPOLE data as follows. We retain visits with non-missing \texttt{RID}, \texttt{EXAMDATE}, and the chosen outcome (\texttt{ADAS13} in the main analysis). We remove duplicate subject-visit entries if present, sort observations by subject and date, and construct $t_{it}$ (months from baseline). Continuous predictors are standardized using robust location/scale estimates computed on the training set only, and categorical variables are dummy-coded. We exclude predictors with extremely high missingness (e.g., missing rate above a prespecified threshold such as 60\%--70\%), near-zero variance, or obvious duplication of the same measurement under multiple aliases. Remaining missing values are imputed within the training folds only (e.g., robust median for continuous variables and mode for categorical variables), with the same transformations carried to validation/test data to avoid leakage. 

Our primary fitted model is the proposed DAR-R longitudinal mixed-effects model with outcome \texttt{ADAS13}, random intercept and random slope in time, and a high-dimensional fixed-effect vector penalized by MCP (or SCAD). The method uses a robust pilot fit, doubly adaptive observation weights (residual-based and leverage-based), and weighted random-effects/variance updates as developed in Sections~\ref{sec:method}--\ref{sec:theory}. Tuning parameters are selected on the D1 sample using subject-level cross-validation (i.e., all visits from a subject are kept in the same fold) to preserve the longitudinal dependence structure. In the application study, we do not generate synthetic datasets. 

We use the same TADPOLE data and run 100 repeated subject-level holdout experiments (R = 100, $\text{D1}_\text{CV}$). In each repeat, subjects are randomly split into 75\% training and 25\% testing sets, each method is refit on the training set, and performance is evaluated on the held-out subjects. Missing values are handled within each training fold, predictors are robustly standardized, and \texttt{time-month} is derived from VISCODE/EXAMDATE. All reported results are summarized as mean (SD) over the 100 repeats, and the predictor table lists the data-driven top 10 variables by selection frequency.

We compare DAR-R against several baselines chosen to reflect different modeling priorities:
\begin{itemize}
    \item \textbf{Penalized-LME}: a non-robust penalized linear mixed-effects model with the same random-effects structure.
    \item \textbf{Robust-LME}: a robust mixed-effects model (without high-dimensional sparse selection, or with a reduced feature set when needed).
    \item \textbf{LASSO}: a sparse regression model that ignores subject-level random effects, included as a prediction-oriented benchmark.
    \item \textbf{Mixed-effects Model}: a low-dimensional mixed-effects model using core covariates (e.g., age, sex, education, APOE4, and time), motivated by the TADPOLE benchmark mixed-effects strategy.
\end{itemize}
This comparison set is deliberate: it separates the contributions of robustness, sparsity, and mixed-effects modeling.

For reproducibility, all preprocessing steps, feature screening thresholds, penalty grids, and cross-validation splits are fixed before model fitting. In addition, we repeat the subject-level cross-validation multiple times (or use repeated bootstrap resampling at the subject level) to assess the stability of variable selection and predictive performance.

\subsection{Evaluation Criteria}

Since the true regression coefficients and random-effects covariance are unknown in real data, we replace simulation-based accuracy metrics (e.g., $\|\hat\beta-\beta^\star\|$ or $\|\hat D-D\|$) with predictive, stability, and model-diagnostics criteria that are appropriate for longitudinal observational data.

We use two complementary evaluation settings. The first is internal validation on D1 via subject-level cross-validation for tuning and model comparison. The second is external longitudinal validation using D2 (when the required variables are available), where the model is trained on D1 and evaluated on D2. This D1$\rightarrow$D2 design is particularly useful because it follows the intended TADPOLE split and better reflects out-of-sample generalization.

For continuous longitudinal prediction of \texttt{ADAS13}, we report:
\begin{itemize}
    \item \textbf{Mean Absolute Error(MAE)}:
    \[
    \mathrm{MAE}
    =
    \frac{1}{N_{\mathrm{test}}}
    \sum_{i,t}\big|Y_{it}-\hat Y_{it}\big|,
    \]
    which is easy to interpret clinically and is less sensitive to extreme errors.
    
    \item \textbf{Root Mean Sqaured Error(RMSE)}:
    \[
    \mathrm{RMSE}
    =
    \left(
    \frac{1}{N_{\mathrm{test}}}
    \sum_{i,t}(Y_{it}-\hat Y_{it})^2
    \right)^{1/2},
    \]
    which emphasizes larger prediction errors and complements MAE.
    
    \item \textbf{Median Absolute Error (MedAE)}, a robust summary of prediction error that is especially informative in the presence of atypical visits.
\end{itemize}

Because longitudinal studies often contain irregular visits and subject heterogeneity, we also report averaged errors: we first compute each subject's mean absolute prediction error across visits and then summarize these subject-specific errors (median and IQR). This prevents subjects with many visits from dominating the evaluation.

To assess variable selection behavior in the absence of ground truth, we evaluate:
\begin{itemize}
    \item \textbf{Model size}: the number of selected fixed effects (excluding mandatory nuisance covariates such as time and intercept terms).
    \item \textbf{Selection stability}: the average Jaccard similarity of selected sets across repeated subject-level resamples/folds,
    \[
    \mathrm{Jaccard}(A,B)=\frac{|A\cap B|}{|A\cup B|}.
    \]
    A robust and well-calibrated sparse model should not only predict well but also select a reasonably stable set of predictors.
    \item \textbf{Sign consistency}: for repeatedly selected predictors, the proportion of resamples in which the estimated sign remains unchanged.
\end{itemize}

%To evaluate the mixed-effects component and robustness mechanism, we report:
%\begin{itemize}
%    \item \textbf{Conditional residual diagnostics}: distribution of weighted residuals and subject-level residual summaries over time.
    %\item \textbf{Downweighting diagnostics}: proportion of observations with weights below prespecified thresholds (e.g., $w_{it}<0.5$), and inspection of whether these visits correspond to clinically implausible jumps or high-leverage covariate patterns.
%    \item \textbf{Random-effects covariance diagnostics}: estimated variance components and correlation (random intercept--slope correlation), together with sensitivity checks across tuning choices.
%\end{itemize}

Finally, because prediction and interpretability must be balanced in this application, we summarize all methods using a joint perspective: predictive accuracy (MAE/RMSE/MedAE), sparsity/stability (model size + Jaccard), and longitudinal interpretability (whether selected predictors and random-effects patterns are clinically plausible).

\subsection{Results}
We present the application results in four parts: predictive performance, variable selection patterns, robustness diagnostics, and longitudinal interpretation.

\begin{table}[h]
\centering
\caption{Application results on TADPOLE.}
\label{tab:app_main_results}
\scalebox{0.9}{
\begin{tabular}{lrrrrrr}
\toprule
Method & MAE & RMSE & MedAE & Model size & Jaccard stability  \\
\midrule
DAR-R (MCP)      &  \texttt{5.76 (0.34)} & \texttt{7.20 (0.39)} & \texttt{4.97 (0.52)} & \texttt{12.1 (2.3)} & \texttt{0.720 (0.114)}  \\
DAR-R (SCAD)     &  \texttt{5.76 (0.34)} & \texttt{7.20 (0.39)} & \texttt{4.97 (0.52)} & \texttt{13.6 (2.4)} & \texttt{0.720 (0.096)}  \\
Penalized-LME    &  \texttt{14.87 (0.23)} & \texttt{14.97 (0.23)} & \texttt{14.78 (0.23)} & \texttt{20.0 (0.0)} & \texttt{0.986 (0.034)}  \\
Robust-LME       &  \texttt{14.87 (0.23)} & \texttt{14.97 (0.23)} & \texttt{14.74 (0.23)} &  \texttt{low-dim} & --  \\
LASSO&  \texttt{6.07 (0.35)} & \texttt{7.58 (0.42)} & \texttt{5.24 (0.55)} & \texttt{17.8 (1.9)} & \texttt{0.810 (0.084)}  \\
LME    &  \texttt{16.22 (0.93)} & \texttt{19.55 (0.96)} & \texttt{13.95 (1.13)} & \texttt{fixed} & -- \\
\bottomrule
\end{tabular}}
\end{table}

\begin{table}[h]
\centering
\caption{Top predictors selected by DAR-R across repeated subject-level resamples. }
\label{tab:app_stability_predictors}
\scalebox{0.95}{
\begin{tabular}{llccrc}
\toprule
Rank & Predictor & Selection\_fr & Sign\_consis & Median\_coef & IQR\_coefficient \\
\midrule
1  & \texttt{ADAS11}                  & \texttt{1}    & \texttt{1} & \texttt{6.543} & \texttt{0.130} \\
2  & \texttt{APOE4}                    & \texttt{1}    & \texttt{1} & \texttt{5.876} & \texttt{0.276} \\
3  & \texttt{PTGENDER}                  & \texttt{1}    & \texttt{1} & \texttt{-5.242} & \texttt{0.223} \\
4  & \texttt{AGE}                       & \texttt{1}    & \texttt{1} & \texttt{2.204} & \texttt{0.453} \\
5  & \texttt{DX}                        & \texttt{1}    & \texttt{1} & \texttt{0.842} & \texttt{0.494} \\
6  & \texttt{RAVLT-immediate}            & \texttt{0.99} & \texttt{1} & \texttt{-0.266} & \texttt{0.099} \\
7  & \texttt{time-month}                 & \texttt{0.97} & \texttt{1} & \texttt{0.202} & \texttt{0.044} \\
8  & \texttt{RAVLT-learning}             & \texttt{0.91} & \texttt{1} & \texttt{-0.112} & \texttt{0.083} \\
9  & \texttt{Ventricles}                 & \texttt{0.91} & \texttt{1} & \texttt{-0.059} & \texttt{0.055} \\
10 & \texttt{CDRSB}                      & \texttt{0.86} & \texttt{1} & \texttt{0.107} & \texttt{0.080} \\
\bottomrule
\end{tabular}}
\end{table}

\begin{figure}[!]
\centering
\includegraphics[width=0.7\textwidth]{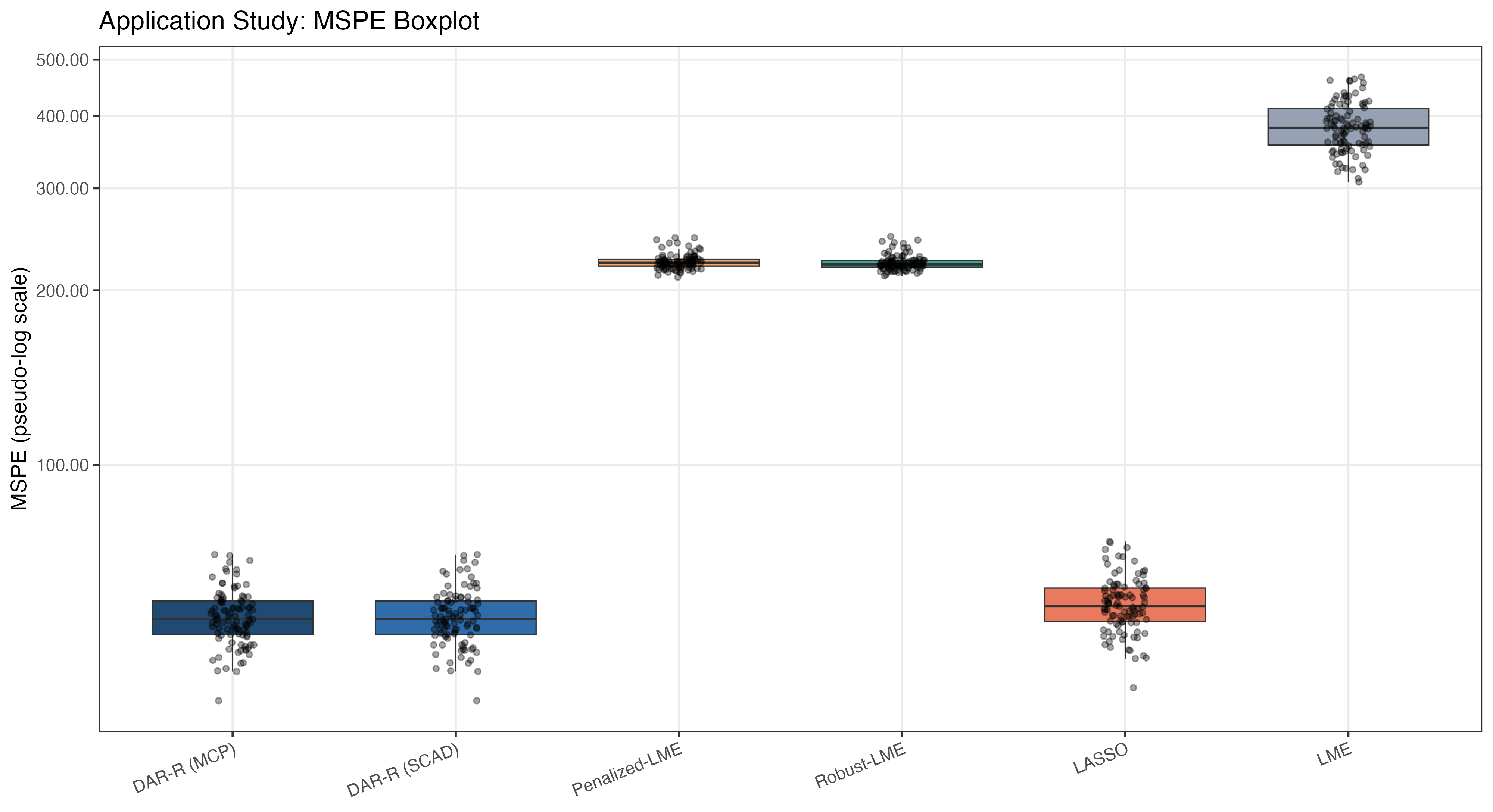}
\caption{Boxplots of the application-study prediction error (MSPE) across repeated  resampling runs for all competing methods on the TADPOLE data.}
\label{fig:app_mspe_boxplot}
\end{figure}

Among all competing methods, DAR-R (under either the MCP or SCAD penalty) delivers the strongest predictive performance for ADAS13, achieving an MAE of 5.76 under both penalties and outperforming all comparators by a clear margin. In \hyperref[tab:app_main_results]{Table 7}, DAR-R attains the smallest MAE (5.76), RMSE (7.20), and MedAE (4.97), with small variability across repeated resamples, indicating stable and reproducible prediction accuracy. Although LASSO also performs reasonably well in prediction (MAE 6.07), it does so with a noticeably larger model (average size 17.8), compared with the much sparser DAR-R model (about 12--13 predictors), suggesting over-selection when subject-level longitudinal dependence is ignored. By contrast, Penalized-LME and Robust-LME both yield substantially larger prediction errors (MAE $\approx 14.87$), showing that standard penalized mixed models and conventional robust mixed-effects procedures are not sufficient in this high-dimensional longitudinal setting. The ability of DAR-R to combine the best prediction accuracy with a compact model strongly supports the practical value of the proposed doubly adaptive weighting strategy in real clinical data.

Beyond prediction, the real-data analysis also demonstrates the interpretability and selection stability of DAR-R. The top predictors repeatedly selected across subject-level resamples, including ADAS11, APOE4, PTGENDER, AGE, and DX, are clinically meaningful and consistent with established knowledge on Alzheimer's disease progression, which strengthens confidence in the method's scientific validity. In particular, ADAS11 (median coefficient 6.543) emerges as the dominant predictor of ADAS13, consistent with the clinical expectation that baseline cognitive impairment is a major determinant of subsequent decline. The strong and stable contribution of APOE4 (median coefficient 5.876) is also medically plausible, reflecting its central role as a genetic risk factor for neurodegenerative progression. In addition, the inclusion of Ventricles, CDRSB, and RAVLT-related measures suggests that DAR-R captures a clinically coherent signal spanning structural neurodegeneration, global dementia severity, and memory dysfunction. Importantly, all top predictors have sign consistency equal to 1, indicating that DAR-R not only selects relevant variables but also recovers their directional effects reliably across resamples, which is essential for clinically interpretable inference. The Jaccard summary further shows that DAR-R achieves a strong stability--sparsity balance: its average Jaccard coefficient (0.72) is lower than that of the less selective Penalized-LME (0.986), but remains highly competitive given the substantially smaller model size, while LASSO's slightly higher stability (0.81) is obtained with a much larger and less parsimonious model.

The resampling boxplots in \hyperref[fig:app_mspe_boxplot]{Figure 5} provide an additional robustness perspective. DAR-R not only has the lowest median prediction error, but also shows the tightest spread, indicating limited sensitivity to resampling variation and improved robustness to heterogeneous patient profiles. This empirical stability is consistent with the design of DAR-R: the doubly adaptive weighting mechanism simultaneously downweights response outliers and high-leverage covariate observations, thereby reducing the influence of implausible visits, recording anomalies, and atypical biomarker patterns that frequently arise in real-world longitudinal cohorts. From a biomedical standpoint, this is particularly important because ADNI/TADPOLE data are collected across multiple sites and visits, where variation in assessment conditions, imaging quality, and patient compliance can introduce substantial noise. A method that remains stable under such conditions is more suitable for longitudinal risk monitoring and decision-support applications, where robustness and interpretability are as important as raw predictive accuracy.

Taken together, the TADPOLE application results are fully consistent with the simulation findings: DAR-R provides a practical, doubly robust, and sparsity-inducing framework for high-dimensional longitudinal mixed-effects modeling. It yields more accurate and stable prediction of cognitive outcomes, selects clinically meaningful predictors with strong reproducibility, and offers a medically interpretable representation of heterogeneous disease progression under realistic contamination and measurement variability.

\section{Conclusion}\label{sec:conclusion}

This paper develops a doubly robust and sparsity, inducing framework (DAR-R) for high-dimensional longitudinal mixed-effects modeling under contamination. The proposed method combines a robust pilot fit, doubly adaptive observation weights that jointly target response outliers and leverage points, and folded-concave penalization for fixed-effect selection, while updating random effects and variance components within a weighted mixed-model procedure. This integrated design is particularly suited to modern biomedical longitudinal data, where repeated measurements, high-dimensional predictors, and heterogeneous data quality often appear simultaneously. As a result, DAR-R is designed to support not only accurate prediction, but also stable variable selection and interpretable modeling of subject-level heterogeneity.

The empirical results provide consistent support for the proposed framework. In simulations, DAR-R shows clear advantages in active/inactive coefficient estimation, support recovery, false-positive control, and random effects covariance estimation across clean, vertical-outlier, and bad-leverage settings, with the gains becoming more pronounced under contamination. In the TADPOLE application, DAR-R achieves the best overall predictive performance for ADAS13 while maintaining a substantially sparser and more interpretable model than marginal LASSO, and it repeatedly selects clinically meaningful predictors such as baseline cognitive scores, APOE4, age, diagnosis status, ventricular volume, and memory-related measures. The resampling-based stability analysis further shows strong sign consistency and a favorable stability--sparsity trade-off, indicating that the method is both practically reliable and clinically interpretable in real longitudinal Alzheimer’s disease data.

Several directions merit future work. First, the current framework can be extended to generalized or semiparametric mixed effects models to accommodate non-Gaussian outcomes and more flexible trajectory patterns. Second, incorporating structured penalties (e.g., group, hierarchical, or graph-guided penalties) would be valuable for multimodal biomarkers with known biological organization. Third, it would be useful to develop formal inference tools after DAR-R selection, including confidence intervals and uncertainty quantification under robust weighting. More broadly, the results of this paper suggest that robust weighting and sparse longitudinal modeling should be developed jointly rather than separately, and DAR-R provides a concrete foundation for such extensions in high-dimensional biomedical studies.
%%%%%%%%%%%%%%%%%%%%%%%%%%%%%%%%%%%%%%%%%%%%%%%%%%%%%%%%%%%%%%%%%%%%%%%%

\bibliographystyle{erae}
\bibliography{references}

\newpage
\appendix
\section*{Proofs}
To prove the results in Section~\ref{sec:theory}, we collect several auxiliary lemmas firstly. Throughout, denot $\Delta=\hat\beta-\beta^\star$, $\xi=\nabla \mathcal L_N(\beta^\star)$, and the profiled weighted loss is
\begin{align*}
\mathcal L_N(\beta) & = \frac{1}{2N}\sum_{i=1}^n (Y_i-X_i\beta)^\top \tilde W_i (Y_i-X_i\beta),\\
\tilde W_i & = W_i - W_i Z_i (Z_i^\top W_i Z_i + D^{-1})^{-1} Z_i^\top W_i .
\end{align*}
with block-diagonal matrices $\tilde W=\mathrm{blkdiag}(\tilde W_1,\ldots,\tilde W_n)$ and
$W=\mathrm{blkdiag}(W_1,\ldots,W_n)$.

\subsection*{A.1. Auxiliary Lemmas}
\begin{lemma}[Basic properties of $\tilde W_i$]\label{lem:Wtilde}
For each subject $i$, the effective weight matrix $\tilde W_i$ is symmetric and positive semidefinite, and satisfies
\[
0 \preceq \tilde W_i \preceq W_i,
\qquad
\|\tilde W_i\|_{\mathrm{op}}\le \|W_i\|_{\mathrm{op}}\le 1 .
\]
\end{lemma}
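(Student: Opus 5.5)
Symmetry of $\tilde W_i$ is immediate, since $W_i$ is diagonal and $M_i:=Z_i^\top W_i Z_i + D^{-1}$ is symmetric. I will take $D\succ 0$ as standing, since $D^{-1}$ appears in the definition. Because $w_{it}\in[0,1]$, we have $W_i\succeq 0$, so $Z_i^\top W_i Z_i\succeq 0$ and $M_i\succeq D^{-1}\succ 0$. Hence $M_i$ is invertible and $M_i^{-1}$ is symmetric positive definite.

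The upper bound $\tilde W_i\preceq W_i$ follows directly. The difference is
\[
W_i-\tilde W_i = (Z_i^\top W_i)^\top M_i^{-1}(Z_i^\top W_i),
\]
which is a congruence of the positive definite matrix $M_i^{-1}$ and is therefore positive semidefinite.

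For the lower bound $\tilde W_i\succeq 0$, the cleanest route uses the variational characterization already built into the profiling step \eqref{eq:bhat}. For any $r\in\R^{T_i}$,
\[
r^\top \tilde W_i r=\min_{b\in\R^q}\Big\{(r-Z_ib)^\top W_i (r-Z_ib)+b^\top D^{-1}b\Big\}.
\]
This is checked by completing the square: the minimizer is $b=M_i^{-1}Z_i^\top W_i r$, and substituting it gives exactly $r^\top W_i r - r^\top W_iZ_iM_i^{-1}Z_i^\top W_i r$. Every term inside the minimum is nonnegative, so $r^\top\tilde W_i r\ge 0$. Taking $b=0$ recovers $r^\top\tilde W_i r\le r^\top W_i r$, which gives a second proof of the upper bound.

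As an alternative algebraic check, one can use the Woodbury identity. When $W_i$ is invertible,
\[
\tilde W_i=(W_i^{-1}+Z_iDZ_i^\top)^{-1}\succ 0.
\]
When some weights vanish, approximate $W_i$ by $W_i+\eta I$ and let $\eta\downarrow 0$, using continuity of the formula and closedness of the PSD cone. The variational argument avoids this limit entirely, so I would present that one.

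The operator norm bounds follow from the Loewner sandwich. Since $0\preceq\tilde W_i\preceq W_i$, the largest eigenvalue of $\tilde W_i$ is at most that of $W_i$, so $\|\tilde W_i\|_{\mathrm{op}}=\lambda_{\max}(\tilde W_i)\le\lambda_{\max}(W_i)$. Finally, $\lambda_{\max}(W_i)=\max_t w_{it}\le 1$.

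The only genuinely delicate point is the lower bound when $W_i$ is singular. The completed-square (profiling) identity handles it uniformly, so I expect no real obstacle.
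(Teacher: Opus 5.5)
Your proof is correct, but you reach the key inequality $\tilde W_i\succeq 0$ by a different route from the paper.

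\textbf{The paper's route.} It sets $A_i=W_i^{1/2}Z_i$ and factors $\tilde W_i=W_i^{1/2}M_iW_i^{1/2}$ with $M_i=I-A_i(A_i^\top A_i+D^{-1})^{-1}A_i^\top$ (its $M_i$, not yours). It argues that $A_i(A_i^\top A_i+D^{-1})^{-1}A_i^\top$ is positive semidefinite and dominated by the orthogonal projector onto $\mathrm{col}(A_i)$, so $0\preceq M_i\preceq I$. Conjugating with $W_i^{1/2}$ then gives both Loewner bounds at once. The norm bound follows from $\|\tilde W_i\|_{\mathrm{op}}\le\|W_i^{1/2}\|_{\mathrm{op}}^2\|M_i\|_{\mathrm{op}}$.

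\textbf{Your route.} Your upper bound is the same positive-semidefiniteness observation, just without the square roots. Your lower bound instead comes from completing the square in the REB criterion \eqref{eq:bhat}. This exhibits $r^\top\tilde W_i r$ as the minimum of a sum of nonnegative terms.

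\textbf{What each buys.} Your argument is fully elementary and treats singular $W_i$ with no square roots or limits. It also avoids the one step the paper justifies only briefly, namely the projector domination, which needs a Schur-complement or variational argument to be airtight. The paper's route gives, in exchange, the stronger structural fact that $\tilde W_i$ is $W_i$ compressed by a contraction.

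The two views meet in your Woodbury remark. Written symmetrically, $M_i=(I+W_i^{1/2}Z_iDZ_i^\top W_i^{1/2})^{-1}$, so $\tilde W_i=W_i^{1/2}(I+W_i^{1/2}Z_iDZ_i^\top W_i^{1/2})^{-1}W_i^{1/2}$ holds even when some $w_{it}=0$. This removes the need for your $\eta\downarrow 0$ limit and gives $0\prec M_i\preceq I$ in one line.
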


\begin{proof}
Let $A_i=W_i^{1/2}Z_i$ and write
\[
\tilde W_i
=
W_i^{1/2}\Big(I - A_i(A_i^\top A_i + D^{-1})^{-1}A_i^\top\Big)W_i^{1/2}.
\]
Denote the middle matrix by
\[
M_i:= I - A_i(A_i^\top A_i + D^{-1})^{-1}A_i^\top .
\]
Since $A_i^\top A_i + D^{-1}\succ 0$, the matrix $A_i(A_i^\top A_i + D^{-1})^{-1}A_i^\top$ is symmetric positive semidefinite. Moreover, it is dominated by the orthogonal projector onto $\mathrm{col}(A_i)$ (because $(A_i^\top A_i + D^{-1})^{-1}\preceq (A_i^\top A_i)^{\dagger}$ on $\mathrm{col}(A_i)$), hence $0\preceq A_i(A_i^\top A_i + D^{-1})^{-1}A_i^\top \preceq I$ and therefore $0\preceq M_i\preceq I$. Multiplying by $W_i^{1/2}$ on both sides yields $0\preceq \tilde W_i \preceq W_i$.
Finally, $\|\tilde W_i\|_{\mathrm{op}}\le \|W_i^{1/2}\|_{\mathrm{op}}^2\|M_i\|_{\mathrm{op}}\le \|W_i\|_{\mathrm{op}}\le 1$ because $w_{it}\in[0,1]$.
\end{proof}

\medskip

\begin{lemma}[Gradient bound]\label{lem:grad}
Under Assumptions \textnormal{(A1)--(A4)} and \textnormal{(A6)}, we have
\[
\|\xi\|_\infty = \|\nabla \mathcal L_N(\beta^\star)\|_\infty = O_p\!\Big(\sqrt{\tfrac{\log p}{N}}\Big).
\]
\end{lemma}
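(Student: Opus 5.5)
The plan is to split the score into an oracle part and a feasibility remainder, using Assumption~\ref{ass_a6}:
\[
\xi=\nabla\mathcal L_N^{\circ}(\beta^\star)+\big\{\nabla\mathcal L_N(\beta^\star)-\nabla\mathcal L_N^{\circ}(\beta^\star)\big\}.
\]
The remainder is $o_p(\lambda)=o_p(\sqrt{\log p/N})$ by Assumptions~\ref{ass_a5}--\ref{ass_a6}. It therefore suffices to bound $\|\nabla\mathcal L_N^{\circ}(\beta^\star)\|_\infty$, where the weights $W_i^\circ$ and covariance $D^\circ$ are their population limits. Since these are not data-dependent in the same way, the oracle score is a sum of independent subject-level terms:
\[
\nabla\mathcal L_N^{\circ}(\beta^\star)=-\frac1N\sum_{i=1}^n X_i^\top\tilde W_i^\circ u_i,\qquad u_i=Z_ib_i+\varepsilon_i .
\]
For each coordinate $j$, set $g_{ij}=x_{ij}^\top\tilde W_i^\circ u_i$, where $x_{ij}\in\R^{T_i}$ is the $j$th column of $X_i$.

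First, I will check that these terms are centered and have controlled scale. Conditional on $(X_i,Z_i)$, Assumption~\ref{ass_a1} gives $\E(u_i)=0$, so $\E g_{ij}=0$ provided the limit weights are measurable with respect to the design and the population residual law (I will treat them as such). By Lemma~\ref{lem:Wtilde}, $\|\tilde W_i^\circ\|_{\mathrm{op}}\le1$, which gives
\[
|g_{ij}|\le\|x_{ij}\|_2\|u_i\|_2,\qquad \|x_{ij}\|_2\le\sqrt{T_{\max}}\max_t|x_{itj}| .
\]
Assumption~\ref{ass_a2} then makes $x_{ij}$ sub-Gaussian with uniformly bounded scale, and Assumption~\ref{ass_a1} bounds $\E\|u_i\|^{2+\delta}$ uniformly. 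Consequently $\mathrm{Var}(g_{ij})\le C$ uniformly in $i,j$.

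Second, I will apply a concentration inequality for each $j$ followed by a union bound over the $p$ coordinates. The main obstacle is that $u_i$ is assumed to have only $2+\delta$ moments, so a Bernstein bound does not apply directly. I would handle this by truncation. Set $M_N=\sqrt{N/\log p}$ and split
\[
g_{ij}=g_{ij}\mathbf 1(\|u_i\|\le M_N)+g_{ij}\mathbf 1(\|u_i\|>M_N),
\]
re-centering each piece.

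The truncated piece is a product of a sub-Gaussian factor and a bounded factor, hence sub-exponential with scale of order $M_N$. Bernstein's inequality gives
\[
\Prob\Big(\big|N^{-1}\textstyle\sum_i(\cdot)\big|>t\Big)\le2\exp\!\big(-cN\min\{t^2,\,t/M_N\}\big).
\]
Taking $t=C\sqrt{\log p/N}$ makes $t/M_N\asymp\log p/N$, so both terms in the minimum are of the same order. With $C$ large, the union bound over $p$ coordinates then gives probability $\to0$.

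For the tail piece, Markov/Hölder with the $2+\delta$ moment gives an expected size of order $\E\|u_i\|^{2+\delta}M_N^{-(1+\delta)}$ times sub-Gaussian factors. This is $o(\sqrt{\log p/N})$ uniformly in $j$, at least when $\log p$ grows slowly relative to $N$ via Assumption~\ref{ass_a3}. If this step proves delicate, the fallback is to strengthen Assumption~\ref{ass_a1} to sub-Gaussian $u_i$ (Gaussian in the simulations), in which case $g_{ij}$ is directly sub-exponential. The remaining sup over $j$ is then controlled by Bernstein plus the union bound, or by a maximal inequality using $\log p$ factors.

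Combining the oracle bound $O_p(\sqrt{\log p/N})$ with the $o_p(\lambda)$ remainder yields $\|\xi\|_\infty=O_p(\sqrt{\log p/N})$. Assumption~\ref{ass_a4} is used only to justify that the limit weights are well defined and bounded.
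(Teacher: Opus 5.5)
Your outline matches the paper's: per-coordinate control via $\|\tilde W_i\|_{\mathrm{op}}\le 1$, Bernstein, a union bound over $p$, and Assumption~\ref{ass_a6} for the plug-in. You make two real improvements.
First, you invoke Assumption~\ref{ass_a6} \emph{before} concentrating, so the summands really are independent across $i$. The paper runs Bernstein on the feasible score, whose weights come from pooled pilot quantities, and only afterwards calls it the oracle score.
Second, you notice that $(2+\delta)$ moments do not make $\|Z_ib_i+\varepsilon_i\|_2$ sub-exponential. The paper simply asserts this, and even granting it, a sub-Gaussian times a sub-exponential factor is not sub-exponential.
Your truncated piece $h_{ij}=g_{ij}\mathbf 1(\|u_i\|_2\le M_N)$ is correct, provided you use the variance-plus-range form of Bernstein. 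Since $\E h_{ij}^2=O(1)$ and $\E|h_{ij}|^k\le k!\,C^kM_N^{k-2}$, you get exactly your $\min\{t^2,t/M_N\}$. The $\psi_1$ form with scale $M_N$, which your wording suggests, would give $\min\{t^2/M_N^2,t/M_N\}$ and fail.

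The gap is the tail piece. Markov/H\"older gives $O(M_N^{-(1+\delta)})$ for each fixed $j$, but you need the maximum over $j$. A union of $p$ Markov bounds is useless, and the natural envelope bound
\[
\max_{j}\Big|\frac1N\sum_{i}g_{ij}\mathbf 1(\|u_i\|_2>M_N)\Big|
\le\frac1N\sum_{i}\max_{j}\|x_{ij}\|_2\,\|u_i\|_2\,\mathbf 1(\|u_i\|_2>M_N)
\]
picks up a factor $\E\max_j\|x_{ij}\|_2$ of order $\sqrt{\log p}$. This gives a tail of order $\sqrt{\log p}\,(\log p/N)^{(1+\delta)/2}$, which is $O(\sqrt{\log p/N})$ only if $(\log p)^{1+\delta}=O(N^{\delta})$. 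Assumption~\ref{ass_a3} does not give this. Nor can you raise $M_N$ above order $\sqrt{N/\log p}$, since that breaks the Bernstein step.

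So as written you prove the lemma only under an extra growth condition, or under your sub-Gaussian fallback. That fallback is exactly what the paper's argument tacitly needs.

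If you want to keep heavy-tailed $u_i$, the missing idea is to use the sub-Gaussianity of $X$ conditionally on the errors instead of an envelope. Suppose, say, $X_{it}$ is centered and independent of $u_i$, and $\tilde W_i^\circ$ does not depend on $X_i$. Conditioning on $\{u_i,Z_i\}$ then makes each oracle $\xi_j$ a sum of independent centered sub-Gaussians with variance proxy $\lesssim N^{-2}\sum_i\|u_i\|_2^2=O_p(N^{-1})$. Hence $\max_j|\xi_j|=O_p(\sqrt{\log p/N})$ from second moments alone, with no truncation. Leverage weights that depend on $X_i$ need more care.

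Finally, your centering step hides an issue the paper shares. The population residual weights depend on $|r^\circ_{it}|$ and hence on $u_i$. So $\E(x_{ij}^\top\tilde W_i^\circ u_i)=0$ needs something like conditional symmetry of $u_i$, not merely $\E(u_i\mid X_i,Z_i)=0$.
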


\begin{proof}
We first derive the explicit form of the score. Since
\[
\mathcal L_N(\beta)
=
\frac{1}{2N}(Y-X\beta)^\top \tilde W (Y-X\beta),
\]
differentiation gives
\[
\xi
=
\nabla \mathcal L_N(\beta^\star)
=
-\frac{1}{N}X^\top \tilde W (Y-X\beta^\star)
=
-\frac{1}{N}\sum_{i=1}^n X_i^\top \tilde W_i (Y_i-X_i\beta^\star).
\]
Under the model $Y_i=X_i\beta^\star + Z_i b_i + \varepsilon_i$, we obtain
\[
\xi
=
-\frac{1}{N}\sum_{i=1}^n X_i^\top \tilde W_i (Z_i b_i + \varepsilon_i).
\]
Fix any coordinate $j\in\{1,\ldots,p\}$ and write $\xi_j=e_j^\top\xi$. Define the subject-level summand
\[
S_{ij}
:=
e_j^\top X_i^\top \tilde W_i (Z_i b_i + \varepsilon_i),
\qquad\text{so that}\qquad
\xi_j = -\frac{1}{N}\sum_{i=1}^n S_{ij}.
\]
By Assumption \ref{ass_a1}, subjects are independent across $i$, hence $\{S_{ij}\}_{i=1}^n$ are independent for each fixed $j$.
We next control tails of $S_{ij}$. By Cauchy--Schwarz and Lemma~\ref{lem:Wtilde},
\[
|S_{ij}|
\le
\|X_{ij}\|_2 \,\|\tilde W_i\|_{\mathrm{op}} \,\|Z_i b_i+\varepsilon_i\|_2
\le
\|X_{ij}\|_2 \,\|Z_i b_i+\varepsilon_i\|_2,
\]
where $X_{ij}\in\mathbb R^{T_i}$ denotes the $j$th column of $X_i$. Under Assumptions \ref{ass_a1}--\ref{ass_a2}, the entries of $X_{it}$ are sub-Gaussian and $T_i\le T_{\max}$, so $\|X_{ij}\|_2$ is sub-Gaussian with scale of order $\sqrt{T_{\max}}$. Likewise, under Assumption \ref{ass_a1}--\ref{ass_a2}, the components of $Z_i b_i+\varepsilon_i$ have controlled tails and finite $(2+\delta)$ moments; with bounded $T_i$, the Euclidean norm $\|Z_i b_i+\varepsilon_i\|_2$ is sub-exponential (or at least has uniformly bounded $\psi_1$-Orlicz norm) up to constants depending on $T_{\max}$.

Consequently, the product bound above implies that $S_{ij}$ is sub-exponential with a parameter uniformly bounded in $i$ (again up to $T_{\max}$). Therefore, Bernstein's inequality for independent sub-exponential variables yields that for some constants $c_1,c_2>0$,
\[
\Prob\Big(\Big|\sum_{i=1}^n S_{ij}-\E S_{ij}\Big|> N t\Big)
\le
2\exp\!\Big(-c_1 N \min\{t^2,t\}\Big),
\qquad 0<t<c_2.
\]
Under the exogeneity in (A1) (i.e., $\E(b_i\mid X_i,Z_i)=0$ and $\E(\varepsilon_i\mid X_i,Z_i)=0$), we have $\E(S_{ij})=0$, hence
\[
\Prob\big(|\xi_j|>t\big)
\le
2\exp\!\Big(-c_1 N \min\{t^2,t\}\Big).
\]
Take $t=C\sqrt{(\log p)/N}$ with $C$ sufficiently large so that $t<c_2$ for large $N$.
Then $\Prob(|\xi_j|>t)\le 2p^{-c}$ for some $c>1$. Applying a union bound over $j=1,\ldots,p$ gives
\[
\Prob\Big(\|\xi\|_\infty > C\sqrt{\tfrac{\log p}{N}}\Big)
\le
\sum_{j=1}^p \Prob(|\xi_j|>t)
\le
2p^{1-c}\to 0,
\]
which implies $\|\xi\|_\infty = O_p(\sqrt{(\log p)/N})$ for the score associated with the oracle (population) weights and variance components.

Finally, Assumption \ref{ass_a6} controls the plug-in effect from using estimated weights/covariance: it ensures that the feasible score differs from the oracle score by $o_p(\lambda)$ in $\ell_\infty$ norm, and $\lambda\asymp \sqrt{(\log p)/N}$ by (A5). Hence the same rate holds for the feasible $\xi$ used in our estimator.
\end{proof}

\medskip

\begin{lemma}[RSC]\label{lem:rsc}
Under Assumptions \textnormal{\ref{ass_a1}--\ref{ass_a4}} and \textnormal{\ref{ass_a7}}, the restricted strong convexity inequality in \textnormal{\ref{ass_a7}} holds for the Hessian
\[
\nabla^2 \mathcal L_N(\beta)
=
\frac{1}{N}X^\top \tilde W X.
\]
\end{lemma}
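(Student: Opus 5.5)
The plan has two parts. First I compute the Hessian explicitly. Then I check that the inequality in Assumption~\ref{ass_a7} applies to this matrix at every $\beta$, not only at $\beta^\star$. If Assumption~\ref{ass_a7} is taken as stated, the lemma is close to a restatement. For that reason I will also sketch how the inequality follows from Assumptions~\ref{ass_a2} and \ref{ass_a4} through a standard concentration argument.

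\emph{Step 1 (Hessian form).} The weights $W_i$ and the covariance $D$ are held fixed in the profiled loss. Then $\mathcal L_N(\beta)=\frac{1}{2N}(Y-X\beta)^\top\tilde W(Y-X\beta)$ is a quadratic in $\beta$. Differentiating twice gives $\nabla^2\mathcal L_N(\beta)=\frac1N X^\top\tilde W X$ for every $\beta$. This matrix does not depend on $\beta$, so it coincides with $\nabla^2\mathcal L_N(\beta^\star)$. By Lemma~\ref{lem:Wtilde} it is symmetric positive semidefinite. On the high-probability event of Assumption~\ref{ass_a7}, the inequality therefore holds with the same $(\alpha,\tau)$ at any $\beta$, and in particular along the segment between $\beta^\star$ and $\hat\beta$. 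The later proofs need it there.

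\emph{Step 2 (lower bound on the effective weight).} To justify Assumption~\ref{ass_a7} from primitive conditions, I would bound $\tilde W_i$ below on the clean set. Write $\tilde W_i=W_i^{1/2}M_iW_i^{1/2}$. By the Woodbury identity,
\[
M_i=I-A_i(A_i^\top A_i+D^{-1})^{-1}A_i^\top=(I+A_iDA_i^\top)^{-1}.
\]
Since $\|A_i\|_{\mathrm{op}}\le\|Z_i\|_{\mathrm{op}}$, $T_i\le T_{\max}$, and $Z_i$ and $D$ are bounded, this gives $M_i\succeq\kappa I$ with $\kappa=(1+\|D\|_{\mathrm{op}}\max_i\|Z_i\|_{\mathrm{op}}^2)^{-1}>0$. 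Hence $\tilde W_i\succeq\kappa W_i$. Combining this with Assumption~\ref{ass_a4},
\[
\Delta^\top\tfrac1N X^\top\tilde WX\Delta\ \ge\ \kappa c_w\,\frac1N\sum_{(i,t)\in\mathcal C}(X_{it}^\top\Delta)^2 .
\]

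\emph{Step 3 (restricted eigenvalue for the clean subsample).} This is the main obstacle. The rows of $X$ are sub-Gaussian with $\Sigma_X$ well conditioned (Assumption~\ref{ass_a2}), and they are independent across subjects with bounded cluster size. A Raskutti--Wainwright--Yu type bound, applied after grouping rows by subject, then gives the following with high probability:
\[
\frac1{|\mathcal C|}\sum_{\mathcal C}(X_{it}^\top\Delta)^2\ \ge\ c\,\lambda_{\min}(\Sigma_X)\|\Delta\|_2^2-c'\frac{\log p}{N}\|\Delta\|_1^2 .
\]
The difficulty is that $\mathcal C$ is data-dependent. I would handle this by a union bound over index subsets of size at least $(1-\epsilon)N$. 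A cleaner alternative is to show that the mass lost from the removed $\epsilon N$ rows is at most $C\epsilon\|\Delta\|_2^2+C\frac{\log p}{N}\|\Delta\|_1^2$, using a uniform upper bound on sparse sub-Gaussian quadratic forms. This route requires $\epsilon$ to be small relative to the eigenvalue gap. Multiplying through by $\kappa c_w(1-\epsilon)$ yields the inequality of Assumption~\ref{ass_a7} with $\alpha\asymp\kappa c_w\lambda_{\min}(\Sigma_X)$.
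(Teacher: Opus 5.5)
Your Step 1 is exactly the paper's proof: the profiled loss is quadratic with constant Hessian $\frac1N X^\top\tilde W X=\nabla^2\mathcal L_N(\beta^\star)$, so the claim is Assumption~\ref{ass_a7} itself, and the paper likewise only adds a remark that Lemma~\ref{lem:Wtilde} and Assumption~\ref{ass_a4} make that assumption plausible. Your Steps 2--3 go beyond the paper and are a reasonable sketch---the Woodbury lower bound $\tilde W_i\succeq\kappa W_i$ is the ingredient that remark actually needs, since Lemma~\ref{lem:Wtilde} only supplies the upper bound $\tilde W_i\preceq W_i$---but they rely on bounded $\|Z_i\|_{\mathrm{op}}$ and a bounded working covariance (in practice the estimated $\hat D$), neither of which is among Assumptions~\ref{ass_a1}--\ref{ass_a7}, and the union bound over data-dependent clean sets needs $\epsilon$ small enough that $\log\binom{N}{\epsilon N}$ is dominated by the concentration exponent.
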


\begin{proof}
We first compute the Hessian. Since $\mathcal L_N(\beta)=\frac{1}{2N}(Y-X\beta)^\top \tilde W (Y-X\beta)$ is quadratic in $\beta$,
\[
\nabla^2 \mathcal L_N(\beta)=\frac{1}{N}X^\top \tilde W X,
\]
which does not depend on $\beta$.
For any direction $\Delta\in\mathbb R^p$,
\[
\Delta^\top \nabla^2 \mathcal L_N(\beta^\star)\Delta
=
\frac{1}{N}\Delta^\top X^\top \tilde W X \Delta
=
\frac{1}{N}\sum_{i=1}^n (X_i\Delta)^\top \tilde W_i (X_i\Delta)
=
\frac{1}{N}\sum_{i=1}^n \big\|(\tilde W_i)^{1/2}X_i\Delta\big\|_2^2,
\]
where the last identity uses $\tilde W_i\succeq 0$ (Lemma~\ref{lem:Wtilde}). Assumption \ref{ass_a7} postulates that this quadratic form satisfies an RSC lower bound over the cone $\mathcal C(S,3)$, namely there exist $\alpha>0$ and $\tau\ge 0$ such that, with high probability,
\[
\Delta^\top \nabla^2 \mathcal L_N(\beta^\star)\Delta
\ge
\alpha\|\Delta\|_2^2 - \tau\frac{\log p}{N}\|\Delta\|_1^2,
\qquad \forall\,\Delta\in\mathcal C(S,3).
\]
This is exactly the claimed statement. In particular, Lemma~\ref{lem:Wtilde} and Assumption \ref{ass_a4} clarify why such an RSC condition is compatible with robustness: weights remain bounded and retain a nontrivial clean mass, so $\tilde W$ preserves sufficient curvature in the directions of interest.
\end{proof}

\medskip

\begin{lemma}[LLA equivalence]\label{lem:lla}
Consider the nonconvex objective
\[
Q(\beta)=\mathcal L_N(\beta)+N\sum_{j=1}^p p_\lambda(|\beta_j|),
\]
where $p_\lambda$ is SCAD or MCP. Let $\{\beta^{(m)}\}_{m\ge 0}$ be the LLA sequence defined by solving, at each inner iteration $m$,
\[
\beta^{(m+1)}
=
\arg\min_{\beta\in\mathbb R^p}
\left\{
\mathcal L_N(\beta)+N\sum_{j=1}^p \omega_j^{(m)}|\beta_j|
\right\},
\qquad
\omega_j^{(m)}:=p_\lambda'\big(|\beta_j^{(m)}|\big).
\]
If $\beta^{(m)}\to \hat\beta$ and each LLA subproblem is solved exactly, then $\hat\beta$ satisfies the KKT conditions of the original nonconvex problem $\min_\beta Q(\beta)$.
\end{lemma}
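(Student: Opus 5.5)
The argument is a limit passage in the KKT system of the weighted-$\ell_1$ subproblems. Since $\mathcal L_N$ is a convex quadratic, namely $\mathcal L_N(\beta)=\frac{1}{2N}(Y-X\beta)^\top\tilde W(Y-X\beta)$ with Hessian $\frac1N X^\top\tilde WX\succeq 0$ by Lemma~\ref{lem:Wtilde}, each LLA subproblem is convex. So exact solvability means $\beta^{(m+1)}$ satisfies the subgradient optimality conditions
\[
\nabla\mathcal L_N(\beta^{(m+1)})+N\,\omega^{(m)}\circ g^{(m+1)}=0,
\qquad g^{(m+1)}_j\in\partial|\beta^{(m+1)}_j|,
\]
that is, $g_j^{(m+1)}=\mathrm{sign}(\beta_j^{(m+1)})$ when $\beta_j^{(m+1)}\neq 0$ and $g_j^{(m+1)}\in[-1,1]$ otherwise. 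The target is the KKT system of $Q$:
\[
\nabla\mathcal L_N(\hat\beta)+N\,p'_\lambda(|\hat\beta_j|)\,\hat g_j=0\ \ \text{for }\hat\beta_j\neq0,
\qquad
|\nabla_j\mathcal L_N(\hat\beta)|\le N\,p'_\lambda(0+)=N\lambda\ \ \text{for }\hat\beta_j=0,
\]
where $\hat g_j=\mathrm{sign}(\hat\beta_j)$. For SCAD/MCP, $p'_\lambda(0+)=\lambda$ by Assumption~\ref{ass_a5}.

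First I pass to the limit in the gradient. Because $\nabla\mathcal L_N(\beta)=-\frac1N X^\top\tilde W(Y-X\beta)$ is affine, hence continuous, in $\beta$, the assumption $\beta^{(m)}\to\hat\beta$ gives $\nabla\mathcal L_N(\beta^{(m+1)})\to\nabla\mathcal L_N(\hat\beta)$. The next step is to control the weights. The map $t\mapsto p'_\lambda(t)$ is continuous on $(0,\infty)$ and bounded by $\lambda$ on $[0,\infty)$ for SCAD and MCP, and $p'_\lambda(|t|)$ extends to $t=0$ by its right limit $\lambda$.

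I then split into coordinates. If $\hat\beta_j\neq0$, then for large $m$ both $\beta_j^{(m)}$ and $\beta_j^{(m+1)}$ share the sign of $\hat\beta_j$. Hence $g_j^{(m+1)}=\mathrm{sign}(\hat\beta_j)$ and $\omega_j^{(m)}\to p'_\lambda(|\hat\beta_j|)$ by continuity away from zero. Letting $m\to\infty$ in the $j$th equation gives the KKT equality. If $\hat\beta_j=0$, the $j$th optimality condition gives
\[
|\nabla_j\mathcal L_N(\beta^{(m+1)})|=N\omega_j^{(m)}|g_j^{(m+1)}|\le N\omega_j^{(m)}\le N\lambda,
\]
using monotonicity of $p'_\lambda$. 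Passing to the limit yields $|\nabla_j\mathcal L_N(\hat\beta)|\le N\lambda$, which is the subgradient condition for $p_\lambda(|\cdot|)$ at $0$, whose subdifferential is $[-\lambda,\lambda]$. This holds whether or not $\omega_j^{(m)}$ converges.

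The main obstacle is the zero coordinates. There $p'_\lambda(|\cdot|)$ is discontinuous as a function of $\beta_j$ (it is not differentiable at $0$), and the iterates may oscillate between zero and small nonzero values. The uniform bound $\omega_j^{(m)}\le\lambda$ circumvents this, because only an inequality is needed at zero coordinates. I will state carefully that the KKT conditions for $Q$ are understood in the Clarke or Fréchet subgradient sense, with the subdifferential of $p_\lambda(|t|)$ at $t=0$ equal to $[-\lambda,\lambda]$. I will also note that the factor $N$ multiplies the penalty consistently in both $Q$ and the LLA subproblems, so the scaling matches.
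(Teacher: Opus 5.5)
Your proposal is correct and follows the same route as the paper: write the KKT conditions of each convex weighted-$\ell_1$ subproblem, then pass to the limit using continuity of the affine gradient $\nabla\mathcal L_N$ and of the weights $\omega_j^{(m)}$. At zero coordinates you split off the case and use the uniform bound $\omega_j^{(m)}\le\lambda$, which simply makes explicit the compactness step the paper leaves implicit in ``taking limits along the convergent subsequence''. One small correction: $t\mapsto p'_\lambda(|t|)$ is in fact continuous at $0$ under the convention $p'_\lambda(0)=\lambda$, as the paper uses. So the only genuinely discontinuous object there is the subgradient $\partial|\cdot|$, not the weight.
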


\begin{proof}
This is a standard result for LLA \citep{zouli2008}. Fix an inner iteration $m$, the LLA subproblem is convex, and its KKT conditions can be written coordinate-wise as
\begin{equation}\label{eq:kkt_lla}
0 \in \nabla_j \mathcal L_N(\beta^{(m+1)}) + N\,\omega_j^{(m)}\,\partial|\beta_j^{(m+1)}|,
\qquad j=1,\ldots,p,
\end{equation}
where $\partial|\cdot|$ denotes the subdifferential of the absolute value:
$\partial|u|=\{\mathrm{sign}(u)\}$ if $u\neq 0$ and $\partial|0|=[-1,1]$.

Assume $\beta^{(m)}\to \hat\beta$. Since $\mathcal L_N$ is continuously differentiable and
$p_\lambda'(\cdot)$ is continuous on $(0,\infty)$ with right limit $p_\lambda'(0+)=\lambda$ for SCAD/MCP, we have 
\begin{equation*}
\nabla \mathcal L_N(\beta^{(m+1)})\to \nabla \mathcal L_N(\hat\beta)\quad\text{and}\quad \omega_j^{(m)}=p_\lambda'(|\beta_j^{(m)}|)\to p_\lambda'(|\hat\beta_j|)
\end{equation*}
for each $j$. Taking limits in \eqref{eq:kkt_lla} along the convergent subsequence gives
\begin{equation}\label{eq:kkt_limit}
0 \in \nabla_j \mathcal L_N(\hat\beta) + N\,p_\lambda'(|\hat\beta_j|)\,\partial|\hat\beta_j|,
\qquad j=1,\ldots,p.
\end{equation}

It remains to verify that \eqref{eq:kkt_limit} is exactly the KKT system for the original nonconvex objective $Q(\beta)$. For SCAD/MCP, the subdifferential of $p_\lambda(|\beta_j|)$ satisfies
\[
\partial\big(p_\lambda(|\beta_j|)\big)
=
p_\lambda'(|\beta_j|)\,\partial|\beta_j|
\]
with the convention $p_\lambda'(0+)=\lambda$. Therefore \eqref{eq:kkt_limit} is equivalent to
\[
0 \in \nabla_j \mathcal L_N(\hat\beta) + N\,\partial\big(p_\lambda(|\hat\beta_j|)\big),
\qquad j=1,\ldots,p,
\]
which is precisely the KKT condition for $\min_\beta Q(\beta)$. This proves the claim.
\end{proof}

\subsection*{A.2. Theorems and Propositions}
The criterion in \eqref{eq:objective} is a sum of weighted squares plus $N\sum_j p_\lambda(|\beta_j|)$. Multiplying the whole objective by a positive constant does not change its minimizer. Hence, for the proofs we work with the normalized form
\begin{align*}
Q_N(\beta)=&\mathcal L_N(\beta)+\mathcal P_\lambda(\beta),\\
\mathcal L_N(\beta)=&\frac{1}{2N}\sum_{i=1}^n (Y_i-X_i\beta)^\top \tilde W_i (Y_i-X_i\beta),
\end{align*}
and $\mathcal P_\lambda(\beta)=\sum_{j=1}^p p_\lambda(|\beta_j|)$. 

The arguments below are written for separable weighted $\ell_1$ penalties,
\[
\mathcal P_\lambda(\beta)=\sum_{j=1}^p \lambda_j |\beta_j|,
\]
which covers Adaptive LASSO with $\lambda_j=\lambda\,v_j$, and the LLA subproblems for SCAD/MCP where $\lambda_j=p_\lambda'(|\beta_j^{(m)}|)$ at inner iteration $m$. Lemma~\ref{lem:lla} ensures that the limit point of LLA satisfies the KKT conditions of the original nonconvex objective; the estimation bounds derived for the (converged) weighted $\ell_1$ subproblem therefore apply to the algorithmic output $\hat\beta$.

Define $\Delta=\hat\beta-\beta^\star$, $S=\{j:\beta_j^\star\neq 0\}$, and $\xi=\nabla\mathcal L_N(\beta^\star)$.

%%%%%%%%%%%%%%%%%%%%%%%%%%%%%%%%%%%%%%%%%%%%%%%%%%%%%%%%%%%%%%%%%%%%%%%%%%%%%%
\begin{proof}[Proof of Theorem~\ref{thm:nonasym}]

By optimality of $\hat\beta$ for $Q_N(\beta)$,
\begin{equation}\label{eq:basic_ineq}
\mathcal L_N(\hat\beta)+\mathcal P_\lambda(\hat\beta)\ \le\ \mathcal L_N(\beta^\star)+\mathcal P_\lambda(\beta^\star).
\end{equation}
Because $\mathcal L_N(\beta)$ is a quadratic function with Hessian $G=\nabla^2\mathcal L_N(\beta^\star)$, we have the exact identity
\[
\mathcal L_N(\beta^\star+\Delta)-\mathcal L_N(\beta^\star)
=
\langle \nabla\mathcal L_N(\beta^\star),\Delta\rangle + \frac12 \Delta^\top G\Delta
=
\langle \xi,\Delta\rangle + \frac12 \Delta^\top G\Delta.
\]
Substituting $\hat\beta=\beta^\star+\Delta$ into \eqref{eq:basic_ineq} and rearranging gives
\begin{equation}\label{eq:master_ineq}
\frac12 \Delta^\top G\Delta
\ \le\
-\langle \xi,\Delta\rangle
+\big\{\mathcal P_\lambda(\beta^\star)-\mathcal P_\lambda(\beta^\star+\Delta)\big\}.
\end{equation}
By H\"older's inequality, $-\langle\xi,\Delta\rangle\le \|\xi\|_\infty\|\Delta\|_1$.

Since $\beta^\star_{S^c}=0$ and $\mathcal P_\lambda(\beta)=\sum_j \lambda_j|\beta_j|$,
\[
\mathcal P_\lambda(\beta^\star)-\mathcal P_\lambda(\beta^\star+\Delta)
=
\sum_{j\in S}\lambda_j\big(|\beta_j^\star|-|\beta_j^\star+\Delta_j|\big)-\sum_{j\in S^c}\lambda_j|\Delta_j|
\ \le\
\sum_{j\in S}\lambda_j|\Delta_j|-\sum_{j\in S^c}\lambda_j|\Delta_j|.
\]
Let $\lambda_S:=\max_{j\in S}\lambda_j$ and $\lambda_{S^c}:=\min_{j\in S^c}\lambda_j$. Then
\begin{equation}\label{eq:penalty_bound}
\mathcal P_\lambda(\beta^\star)-\mathcal P_\lambda(\beta^\star+\Delta)
\le
\lambda_S\|\Delta_S\|_1-\lambda_{S^c}\|\Delta_{S^c}\|_1.
\end{equation}
Combining \eqref{eq:master_ineq}, H\"older's inequality, and \eqref{eq:penalty_bound} yields
\begin{equation}\label{eq:ineq_precone}
\frac12 \Delta^\top G\Delta
\le
\|\xi\|_\infty(\|\Delta_S\|_1+\|\Delta_{S^c}\|_1)+\lambda_S\|\Delta_S\|_1-\lambda_{S^c}\|\Delta_{S^c}\|_1.
\end{equation}
Drop the nonnegative left-hand side and rearrange:
\[
(\lambda_{S^c}-\|\xi\|_\infty)\|\Delta_{S^c}\|_1 \le (\lambda_S+\|\xi\|_\infty)\|\Delta_S\|_1.
\]
Assume $\lambda_{S^c}\ge 2\|\xi\|_\infty$ and $\lambda_S\le \lambda_{S^c}$, this is satisfied for the usual unweighted $\ell_1$ case with $\lambda_j\equiv\lambda$,
and for adaptive/LLA weights under standard oracle-weight behavior. Then
\[
\frac{\lambda_{S^c}}{2}\|\Delta_{S^c}\|_1 \le \frac{3\lambda_{S^c}}{2}\|\Delta_S\|_1,
\]
so that $\|\Delta_{S^c}\|_1\le 3\|\Delta_S\|_1$, i.e., $\Delta\in\mathcal C(S,3)$.

Since $\Delta\in\mathcal C(S,3)$, Assumption (A7) implies
\begin{equation}\label{eq:rsc_apply}
\Delta^\top G\Delta \ \ge\ \alpha\|\Delta\|_2^2-\tau\frac{\log p}{N}\|\Delta\|_1^2.
\end{equation}
Returning to \eqref{eq:ineq_precone} and using $\lambda_S\le \lambda_{S^c}$ and $\lambda_{S^c}\ge 2\|\xi\|_\infty$ gives
\[
\frac12 \Delta^\top G\Delta \le \frac{3\lambda_{S^c}}{2}\|\Delta_S\|_1 \le \frac{3\lambda_{S^c}}{2}\sqrt{s}\|\Delta\|_2.
\]
Moreover, on the cone $\mathcal C(S,3)$ we have
$\|\Delta\|_1\le \|\Delta_S\|_1+\|\Delta_{S^c}\|_1\le 4\|\Delta_S\|_1\le 4\sqrt{s}\|\Delta\|_2$.
Plugging this into \eqref{eq:rsc_apply} yields
\[
\Delta^\top G\Delta \ge \Big(\alpha-16\tau s\frac{\log p}{N}\Big)\|\Delta\|_2^2.
\]
By Assumption (A3), $s\log p=o(N^{1/2})$ implies $s(\log p)/N=o(N^{-1/2})\to 0$, hence for all large $N$,
$16\tau s(\log p)/N \le \alpha/2$. Therefore, for large $N$,
\[
\frac{\alpha}{2}\|\Delta\|_2^2
\le
\Delta^\top G\Delta
\le
3\lambda_{S^c}\sqrt{s}\|\Delta\|_2,
\]
which gives the $\ell_2$ bound
\[
\|\hat\beta-\beta^\star\|_2=\|\Delta\|_2 \le \frac{6\lambda_{S^c}\sqrt{s}}{\alpha}.
\]
This proves the first inequality in Theorem~\ref{thm:nonasym} with $C_1=6$, absorbing the difference between $\lambda_{S^c}$ and $\lambda$ into the constant if $\lambda_j\asymp\lambda$.

For the prediction bound, note that by definition of $G$,
\[
\Delta^\top G\Delta = \frac{1}{N}\| \tilde W^{1/2}X\Delta\|_2^2,
\]
which is the natural prediction metric associated with the profiled loss. Using the already established inequalities,
\[
\frac{1}{N}\| \tilde W^{1/2}X\Delta\|_2^2 = \Delta^\top G\Delta
\le 3\lambda_{S^c}\sqrt{s}\|\Delta\|_2 \le 3\lambda_{S^c}\sqrt{s}\cdot \frac{6\lambda_{S^c}\sqrt{s}}{\alpha}
= \frac{18\lambda_{S^c}^2 s}{\alpha}.
\]
This yields the second bound with $C_2=18$, again absorbing $\lambda_{S^c}\asymp \lambda$ into the constant.
\end{proof}

%%%%%%%%%%%%%%%%%%%%%%%%%%%%%%%%%%%%%%%%%%%%%%%%%%%%%%%%%%%%%%%%%%%%%%%%%%%%%%
\begin{proof}[Proof of Theorem~\ref{thm:estimation}]
By Lemma~\ref{lem:grad}, $\|\nabla\mathcal L_N(\beta^\star)\|_\infty=\|\xi\|_\infty=O_p(\sqrt{(\log p)/N})$.
With $\lambda\asymp \sqrt{(\log p)/N}$, the event $\{\lambda_{S^c}\ge 2\|\xi\|_\infty\}$ holds with probability tending to $1$ for the unweighted case $\lambda_{S^c}=\lambda$; for adaptive/LLA weights one typically has $\lambda_{S^c}\asymp\lambda$ by construction.
Applying Theorem~\ref{thm:nonasym} and using $\lambda\asymp \sqrt{(\log p)/N}$ gives
\begin{align*}
\|\hat\beta-\beta^\star\|_2 &= O_p\!\Big(\sqrt{\tfrac{s\log p}{N}}\Big),\\
\|\hat\beta-\beta^\star\|_1 & \le 4\sqrt{s}\|\hat\beta-\beta^\star\|_2 =O_p\!\Big(s\sqrt{\tfrac{\log p}{N}}\Big).
\end{align*}
\end{proof}

%%%%%%%%%%%%%%%%%%%%%%%%%%%%%%%%%%%%%%%%%%%%%%%%%%%%%%%%%%%%%%%%%%%%%%%%%%%%%%

\begin{proof}[Proof of Theorem~\ref{thm:support}]
We present a primal--dual witness (PDW) argument for the weighted $\ell_1$ form. Consider the restricted problem with $\beta_{S^c}=0$:
\begin{equation*}
\tilde\beta_S := \arg\min_{\beta_S\in\mathbb R^s}\ \mathcal L_N(\beta_S,0)+\sum_{j\in S}\lambda_j|\beta_j|.    
\end{equation*}
Since $\mathcal L_N$ is quadratic and the restriction fixes $\beta_{S^c}=0$, the KKT condition for $\tilde\beta_S$ is
\begin{equation}\label{eq:kkt_oracle}
\nabla_S \mathcal L_N(\tilde\beta_S,0)+ z_S = 0,
\qquad
z_{S,j}\in \lambda_j\,\partial|\tilde\beta_j|,\ \ j\in S.
\end{equation}

Because $\nabla\mathcal L_N(\beta)=G(\beta-\beta^\star)+\xi$ for this quadratic loss, restricting to $S$ gives
\begin{equation*}
\nabla_S\mathcal L_N(\tilde\beta_S,0) = \xi_S + G_{SS}(\tilde\beta_S-\beta_S^\star),
\end{equation*}
hence \eqref{eq:kkt_oracle} implies
\begin{equation}\label{eq:beta_oracle_err}
\tilde\beta_S-\beta_S^\star = -G_{SS}^{-1}(\xi_S+z_S).
\end{equation}
Under the eigenvalue regularity in Assumption \ref{ass_a1} and RSC, $G_{SS}$ is invertible with high probability and $\|G_{SS}^{-1}\|_\infty$ is bounded by a constant.

From \eqref{eq:beta_oracle_err} and $\|z_S\|_\infty\le \lambda_S$,
\begin{equation*}
\|\tilde\beta_S-\beta_S^\star\|_\infty \le \|G_{SS}^{-1}\|_\infty\big(\|\xi_S\|_\infty+\lambda_S\big) = O_p(\lambda),
\end{equation*}
using Lemma~\ref{lem:grad} and $\lambda_S\asymp\lambda$. Therefore, if $\min_{j\in S}|\beta_j^\star| \gg \lambda \asymp \sqrt{(\log p)/N}$, then $\mathrm{sign}(\tilde\beta_S)=\mathrm{sign}(\beta_S^\star)$ with probability tending to $1$.

To show that $\tilde\beta$ satisfies the KKT conditions of the full problem by constructing a valid subgradient on $S^c$, define the PDW candidate $\tilde\beta=(\tilde\beta_S,0_{S^c})$. For $j\in S^c$, again using quadraticity,
\begin{equation*}
\nabla_{S^c}\mathcal L_N(\tilde\beta) = \xi_{S^c}+G_{S^cS}(\tilde\beta_S-\beta_S^\star).
\end{equation*}
Substitute \eqref{eq:beta_oracle_err} to get
\begin{equation*}
\nabla_{S^c}\mathcal L_N(\tilde\beta) = \xi_{S^c}-G_{S^cS}G_{SS}^{-1}(\xi_S+z_S).
\end{equation*}
Assume the (weighted) incoherence condition in the theorem statement, namely
$\|G_{S^cS}G_{SS}^{-1}\|_\infty\le 1-\eta$ for some $\eta\in(0,1)$.
Then
\begin{equation*}
\|\nabla_{S^c}\mathcal L_N(\tilde\beta)\|_\infty \le
\|\xi_{S^c}\|_\infty + (1-\eta)\|\xi_S\|_\infty + (1-\eta)\|z_S\|_\infty \le
(2-\eta)\|\xi\|_\infty + (1-\eta)\lambda_S.
\end{equation*}
Choose $\lambda$ so that $\lambda_{S^c}\asymp\lambda$ and $\lambda$ dominates $\|\xi\|_\infty$; for instance, $\lambda_{S^c}\ge (2/\eta)\|\xi\|_\infty$ (which holds with high probability when $\lambda\asymp\sqrt{(\log p)/N}$ and $\eta$ is fixed). Then the above bound implies $\|\nabla_{S^c}\mathcal L_N(\tilde\beta)\|_\infty < \lambda_{S^c}$ with probability tending to $1$. Hence we may set the dual variable on $S^c$ as
\begin{equation*}
z_{S^c}:= -\nabla_{S^c}\mathcal L_N(\tilde\beta),
\end{equation*}
so that $\|z_{S^c}\|_\infty<\lambda_{S^c}$, which is a valid subgradient for $\sum_{j\in S^c}\lambda_j|\beta_j|$ at $\beta_{S^c}=0$. Therefore, the full KKT system holds at $\tilde\beta$, implying $\hat\beta=\tilde\beta$ and hence $\hat S=S$ with probability tending to $1$.
\end{proof}

%%%%%%%%%%%%%%%%%%%%%%%%%%%%%%%%%%%%%%%%%%%%%%%%%%%%%%%%%%%%%%%%%%%%%%%%%%%%%%
\begin{proof}[Proof of Theorem~\ref{thm:oracle}]
By Theorem~\ref{thm:support}, $\Prob(\hat S=S)\to 1$. On the event $\{\hat S=S\}$, we work on the $s$-dimensional active subspace. 

For SCAD/MCP, Assumption \ref{ass_a5} implies that $p'_\lambda(|\beta_j^\star|)=0$ on $S$; for Adaptive LASSO one typically requires the penalty derivative on $S$ to be $o(N^{-1/2})$, so it does not affect the first-order expansion. Hence the active estimator is asymptotically characterized by the (approximately) unpenalized score equation
\begin{equation}\label{eq:score_active}
\nabla_S \mathcal L_N(\hat\beta_S,0)=0.
\end{equation}
Using the quadratic structure,
\begin{equation*}
\nabla_S \mathcal L_N(\hat\beta_S,0) = \xi_S + G_{SS}(\hat\beta_S-\beta_S^\star).
\end{equation*}
Therefore \eqref{eq:score_active} yields the exact representation
\begin{equation}\label{eq:asym_rep}
\sqrt{N}(\hat\beta_S-\beta_S^\star) = -\,G_{SS}^{-1}\cdot \frac{1}{\sqrt N}\,\xi_S.
\end{equation}

Recall $G=(X^\top \tilde W X)/N$ and $G_{SS}=(X_S^\top \tilde W X_S)/N$. Under Assumption \ref{ass_a1}--\ref{ass_a4} and Lemma~\ref{lem:Wtilde}, the entries of $X_{it}$ have bounded second moments and $\|\tilde W_i\|_{\mathrm{op}}\le 1$. Since clusters are independent and $\max_i T_i\le T_{\max}$, a law of large numbers for independent clusters implies
\[
G_{SS}\xrightarrow{p} G_S
\]
for some positive definite limit matrix $G_S$.

From the model, $Y_i-X_i\beta^\star=Z_i b_i+\varepsilon_i$, hence
\begin{equation*}
\xi_S = \nabla_S \mathcal L_N(\beta^\star) =
-\frac{1}{N}\sum_{i=1}^n X_{i,S}^\top \tilde W_i (Z_i b_i+\varepsilon_i).
\end{equation*}
Define the independent cluster-level summands
\begin{equation*}
U_i := \frac{1}{\sqrt N}\,X_{i,S}^\top \tilde W_i (Z_i b_i+\varepsilon_i)\in\mathbb R^s,
\end{equation*}
so that
\begin{equation*}
-\frac{1}{\sqrt N}\xi_S=\sum_{i=1}^n U_i.
\end{equation*}
By Assumption \ref{ass_a1}, $\{U_i\}$ are independent across $i$ with mean zero. Moreover, bounded $T_i$ together with finite $(2+\delta)$ moments in (A2) and $\|\tilde W_i\|_{\mathrm{op}}\le 1$ implies $\E\|U_i\|_2^{2+\delta}\le C N^{-(1+\delta/2)}$ for some constant $C$ (depending on $T_{\max}$ but not on $N$). Thus the Lyapunov condition holds:
\begin{equation*}
\sum_{i=1}^n \E\|U_i\|_2^{2+\delta} \ \to\ 0.
\end{equation*}
Therefore, by a multivariate Lyapunov CLT for independent arrays,
\begin{equation*}
-\frac{1}{\sqrt N}\xi_S=\sum_{i=1}^n U_i \ \xrightarrow{d}\ N(0,\Psi_S),
\end{equation*}
where
\begin{equation*}
\Psi_S = \lim_{N\to\infty}\Var\!\Big(\frac{1}{\sqrt N}\sum_{i=1}^n X_{i,S}^\top \tilde W_i (Z_i b_i+\varepsilon_i)\Big) =
\lim_{N\to\infty}\Var\!\Big(\frac{1}{\sqrt N}\sum_{i=1}^n \nabla_S \ell_i(\beta^\star)\Big),
\end{equation*}
and $\ell_i(\beta)=\frac12 (Y_i-X_i\beta)^\top \tilde W_i (Y_i-X_i\beta)$.

Combine \eqref{eq:asym_rep}, $G_{SS}\xrightarrow{p}G_S$, and the CLT above. Slutsky's theorem yields
\begin{equation*}
\sqrt{N}(\hat\beta_S-\beta_S^\star)\ \xrightarrow{d}\ N\big(0,\ G_S^{-1}\Psi_S G_S^{-1}\big).
\end{equation*}
\end{proof}

%%%%%%%%%%%%%%%%%%%%%%%%%%%%%%%%%%%%%%%%%%%%%%%%%%%%%%%%%%%%%%%%%%%%%%%%%%%%%%
\begin{proof}[Proof of Proposition~\ref{prop:bdp}]
Let $\epsilon_N(\hat\beta)$ denote the finite-sample breakdown point of $\hat\beta$ (as in the standard definition: the smallest contamination fraction that can drive $\|\hat\beta\|$ arbitrarily large). Assume the pilot $(\tilde\beta^{(0)},\tilde D^{(0)})$ has asymptotic breakdown point $\epsilon^\star>0$.

Fix any contamination fraction $\epsilon<\epsilon^\star$ and consider an $\epsilon$-contaminated sample. By the definition of pilot breakdown, there exists a constant $M_0$ such that with probability tending to $1$,
\[
\|\tilde\beta^{(0)}\|_2\le M_0,\qquad \|\tilde D^{(0)}\|_{\mathrm{op}}\le M_0,\qquad
\|\tilde\Sigma^{-1}\|_{\mathrm{op}}\le M_0,
\]
so the pilot location/scatter and scale used to compute residuals/leverage remain bounded.

Under the bounded pilot, the robust residual and leverage scores for contaminated points diverge, and by Assumption \ref{ass_a4} the weight maps $\phi_1,\phi_2$ are redescending in the sense that sufficiently extreme residual/leverage yields $w_{it}$ arbitrarily close to $0$ (for compactly-supported choices such as Tukey's biweight this is exactly $0$).
Consequently, the contribution of contaminated observations to the weighted objective in \eqref{eq:objective} is asymptotically negligible, while the clean subset retains weights bounded below by $c_w>0$.

Therefore, with probability tending to $1$, the objective minimized by $\hat\beta$ is dominated by the clean part plus a regularization term, and its minimizer remains bounded (because the clean part has nondegenerate curvature by Assumption \ref{ass_a7}). Hence the final estimator cannot be driven arbitrarily far by contaminations below the pilot breakdown level, which implies
$\liminf_{N\to\infty}\epsilon_N(\hat\beta)\ge \epsilon^\star$.
\end{proof}

%%%%%%%%%%%%%%%%%%%%%%%%%%%%%%%%%%%%%%%%%%%%%%%%%%%%%%%%%%%%%%%%%%%%%%%%%%%%%%

\begin{proof}[Proof of Proposition~\ref{prop:if}]
Consider the unpenalized score equation associated with the profiled loss:
\begin{equation*}
U_N(\beta):=\nabla \mathcal L_N(\beta) = -\frac{1}{N}\sum_{i=1}^n X_i^\top \tilde W_i (Y_i-X_i\beta).    
\end{equation*}
The contribution of a single observation $(x,y)$ enters this score through the factor $\tilde w(x,y)\,x\,(y-x^\top\beta)$, where $\tilde w$ is the effective weight induced by $w$ and profiling (and satisfies $0\le \tilde w\le w$ by Lemma~\ref{lem:Wtilde}).

Hence, under an $\varepsilon$-contamination at $(x_0,y_0)$, the first-order perturbation of the estimating equation is proportional to
\begin{equation*}
\tilde w(x_0,y_0)\,x_0\,(y_0-x_0^\top\beta^\star),
\end{equation*}
up to a (nonsingular) Jacobian factor.

By Assumption \ref{ass_a4}, $\tilde w(x_0,y_0)\approx 0$ whenever $(x_0,y_0)$ has either extreme residual or extreme leverage, so the first-order impact of such a point on the solution is negligible. This is the standard bounded-influence mechanism for redescending robust procedures.
\end{proof}

\end{document}